\begin{document}
\title{Capacity Region of the Finite-State Multiple Access Channel with and without Feedback}%We'll see
\author{Haim Permuter and Tsachy Weissman \\
%Information Systems Laboratory\\
%Department of Electrical Engineering\\
%Stanford University\\
%Stanford, CA, 94305 USA \\
%Email: \{haim1, tsachy, andrea\}@stanford.edu
%\markboth{Journal
%of \LaTeX\ Class Files,~Vol.~1, No.~11,~November~2002}{Shell
%\MakeLowercase{\textit{et al.}}: Bare Demo of IEEEtran.cls for
%Journals}
%\thanks{This work was
%supported by NSF Grant CCR-0311633, NSF CAREER grant, US Army
%under MURI award W911NF-05-1-0246, and the ONR under award
%N00014-05-1-0168.}
\thanks{The authors are with the Department of Electrical Engineering, Stanford University, Stanford, CA 94305, USA.
(Email: \{haim1, tsachy\}@stanford.edu)}
\thanks{This work was
supported by the NSF through the CAREER award and TFR-0729119
grant.}
%\thanks{M. Shell is with the Georgia Institute of Technology.}}
}

\markboth{Submitted to IEEE Transactions on Information Theory,
Aug. 2007.}{Shell \MakeLowercase{\textit{et al.}}: Capacity Region
of the Finite-State Multiple Access Channel with and without
Feedback}

\maketitle

\begin{abstract}
The  capacity region of the Finite-State Multiple Access Channel
(FS-MAC)  with feedback that may be an arbitrary time-invariant
function of the channel output samples is considered. We
characterize both an inner and an outer bound for this region, using
Masseys's directed information. These bounds are shown to coincide,
and hence yield the capacity region, of FS-MACs where the state
process is stationary and ergodic and not affected by the inputs.
 Though `multi-letter' in general, our
results  yield explicit conclusions when applied to specific
scenarios  of interest. E.g., our results allow us to:
\begin{itemize}
  \item Identify a large class of FS-MACs, that includes the additive mod-2 noise MAC where the noise may have memory, for which feedback does not enlarge the capacity
  region.

  \item Deduce that, for a general FS-MAC with  states that are not affected by the input, if the capacity (region)  without feedback is zero, then so is
  the capacity (region) with feedback.

\item Deduce that the capacity region of a MAC  that can be decomposed into a
`multiplexer' concatenated by a point-to-point channel (with,
without, or with partial feedback), the capacity region is given
by $\sum_{m} R_m \leq C$, where $C$ is the capacity of the point
to point channel and $m$ indexes the encoders. Moreover, we show
that for this family of channels source-channel coding separation
  holds.
\end{itemize}

%an achievable rate and an upper bound on the capacity. We further
%show that when the channel is indecomposable, and has no
%intersymbol interference (ISI), its capacity is given by the limit
%of  the maximum of the (normalized) directed information between
%the input $X^N$ and the output $Y^N$, i.e. $C = \lim_{N
%\rightarrow \infty} \frac{1}{N} \max I(X^N \rightarrow Y^N )$,
%where the maximization is taken over the causal conditioning
%probability $Q(x^N||z^{N-1})$ defined in this paper. The capacity
%result is used to show that the source-channel separation theorem
%holds for time-invariant determinist feedback. We also show that
%if the state of the channel is known both at the encoder and the
%decoder then feedback does not increase capacity.
\end{abstract}

\begin{keywords}
Feedback capacity, multiple access channel, capacity region,
directed information, causal conditioning, code-tree,
source-channel coding separation, sup-additivity of sets.
\end{keywords}

\newtheorem{question}{Question}
\newtheorem{claim}{Claim}
\newtheorem{guess}{Conjecture}
\newtheorem{definition}{Definition}
\newtheorem{fact}{Fact}
\newtheorem{assumption}{Assumption}
\newtheorem{theorem}{Theorem}
\newtheorem{lemma}[theorem]{Lemma}
\newtheorem{ctheorem}{Corrected Theorem}
\newtheorem{corollary}[theorem]{Corollary}
\newtheorem{proposition}{Proposition}
\newtheorem{example}{Example}
\newtheorem{pfth}{Proof}

\section{Introduction}

%\subsection{Previous work}

The Multiple Access Channel (MAC) has received much attention in the
literature. To put our contributions in context, we begin by briefly
describing some of the key results in the area. The capacity region
for the memoryless MAC
 was derived by Ahlswede in \cite{Ahlswede73MAC}.
 Cover and Leung  derived an achievable
region  for a memoryless MAC with feedback in
\cite{CoverL81MACFeedback}. Using block Markov encoding,
superposition and list codes, they showed that the region $R_1\leq
I(X_1;Y|X_2,U)$, $R_2\leq I(X_2;Y|X_1,U)$ and $R_1+R_2\leq
I(X_1,X_2;Y)$ where
$P(u,x_1,x_2,y)=p(u)p(x_1|u)p(x_2|u)p(y|x_1,x_2)$ is achievable for
a memoryless MAC with feedback.  Willems  showed in \cite{Willems82}
that the achievable region given by Cover and Leung for a memoryless
channel with feedback is optimal for a class of channels where one
of the inputs is a deterministic function of the output and the
other input. More recently Bross and Lapidoth \cite{Bross05Labidoth}
improved  Cover and Leung's region, and Wu et. al.\cite{Wu_Ari06}
have extended Cover and Leung's region for the case that non-causal
state information is available at both encoders.

 Ozarow  derived the capacity of a memoryless Gaussian
MAC  with feedback in \cite{Ozarow84MAC_GaussianFeedback}, and
showed it to be achievable via a modification  of the
Schalkwijk-Kailath scheme
\cite{SchalkwijkKailath66_feedback_scheme}. In general, the
capacity in the presence of noisy feedback is an open question for
the point-to-point channel and a fortiori for the MAC. Lapidoth
and Wigger \cite{LapidothWigger} presented an achievable region
for the case of the Gaussian MAC with noisy feedback and showed
that it converges to Ozarow's noiseless-feedback sum-rate capacity
as the feedback-noise variance tends to zero. Other recent
variations on the Schalkwijk-Kailath scheme of relevance to the
themes of our work include the case of quantization noise in the
feedback link \cite{MartinsWeissman06} and the case of
interference known non-causally at the transmitter
\cite{MerhavWeissman06}.

%Another variation of the Schalkwijk-Kailath
%For quant
%When the noise is due to quantization,i.e,
%the encoder receives the quantized values of the past outputs,
%then it can be treated as feedback that is a time invariant
%function of the output and for this case Martines...

%In this paper we treat the case when the feedback is a time
%invariant of the output. This setting includes noise that is due
%to quantization,i.e, the encoder receives the quantized values of
%the past outputs.

Verd\'{u} characterized the capacity region of a Multi-Access
channel of the form
$P(y_i|x_1^i,x_2^i,y^{i-1})=P(y_i|x_{1,i-m}^i,x_{2,i-m}^i)$ without
feedback in \cite{Verdu89}. Verd\'{u} further  showed in that work
that in the absence of frame synchronism between the two users,
i.e., there is a random shift between the users, only stationary
input distributions need  be considered. Cheng and Verd\'{u}
 built on the capacity result
from \cite{Verdu89} in \cite{ChengVerdu93_water_filling_Gaussian} to
show that for a Gaussian MAC there exists a water-filling solution
that  generalizes  the point-to-point Gaussian channel.

In \cite{Kramer98}\cite{Kramer03}, Kramer derived several capacity
results for discrete memoryless networks with feedback. By using the
idea of code-trees instead of code-words, Kramer derived a
`mulit-letter' expression for  the capacity of the discrete
memoryless MAC. One of the main results we develop in the present
paper extends Kramer's capacity result to the case of a stationary
and ergodic Markov Finite-State MAC (FS-MAC), to be formally defined
below.

In \cite{Han03}\cite{Han98MAC}, Han  used the information-spectrum
method in order to derive the capacity of a general MAC without
feedback,  when the channel transition probabilities are arbitrary
for every $n$ symbols. Han also considered the additive mod-$q$ MAC,
which we shall use here to  illustrate the way in which our general
results characterize special cases of interest. In particular, our
results will imply that feedback does not increase the capacity
region of the additive mod-$q$ MAC.

In this work, we consider the  capacity region of the Finite-State
Multiple Access Channel (FS-MAC),  with feedback that may be an
arbitrary time-invariant function of the channel output samples. We
characterize  both an inner and an outer bound for this region. We
further show that these bounds coincide, and hence yield the
capacity region, for the important subfamily of FS-MACs with states
that evolve independently of the channel inputs.
 Our derivation of the capacity region is rooted in the derivation
 of the capacity of finite-state channels
   in Gallager's book \cite[ch 4,5]{Gallager68}.
More recently,  Lapidoth and Telatar \cite{Lapidoth98Telatar} have
used it in order to derive the capacity of a compound channel
without feedback, where the compound channel consists of a family of
finite-state channels. In particular, they have introduced into
Gallager's proof the idea of concatenating codewords, which we
extend  here to concatenating code-trees.

%\item Lapidoth and Telatar \cite{Lapidoth98Telatar} have used the
%\end{itemize}

 Though `multi-letter' in general, our
results  yield explicit conclusions when applied to more specific
families of MACs. For example, we find that  feedback does not
increase the capacity of the mod-$q$ additive noise MAC (where $q$
is the size of the common alphabet of the input, output and noise),
regardless of the memory in the noise.
%\subsection{Introduction for the result that feedback does not increase capacity of discrete additive mod-$q$ MAC}
 This result is in sharp contrast with the finding of
Gaarder and Wolf in \cite{Wolf75}  that feedback can increase the
capacity even of a  \emph{memoryless} MAC due to cooperation between
senders that it can create. Our result should also be considered in
light of Alajaji's work \cite{Alajaji95},
%Here we show that Alajaji's result %that
%%feedback does not increase capacity of additive channels
% does extend
%to additive noise MAC.
%by  Alajaji,
where it was shown that feedback does not increase the capacity of
discrete point-to-point channels with mod-$q$ additive noise.
Thus, this part of our contribution can be considered a multi-terminal extension of Alajaji's result. %that
%feedback does not increase capacity of additive channels
%to the multi-term scenario.
%In \cite{Alajaji95} it was also shown that the capacity of additive
%point-to-point channel is given by
%\begin{equation}
%C_{FB}=C_{NFB}=\log_2(q)-\overline{H}(Z),
%\end{equation}
%where %$Z^n=(Z_1,Z_2,...,Z_n)$ and
%$\overline{H}(Z)$ is the sup-entropy rate which was defined by Han
%and Verd{\'u}
%\cite{Verdu93},\cite{Verdu94}.%, \cite{Han03}
%and it is the
%\begin{equation}
%\overline H(Z)\triangleq \text{p-}\limsup_{n\to \infty}\frac{1}{n}\log_2 \frac{1}{p(Z^n)},
%\end{equation}
%where $ \text{p-} \limsup{n\to \infty}$ is  {\it limit superior in probability} and is defined as
%\begin{equation}
%\text{p-}\limsup_{n\to \infty} X_n\triangleq \inf \left\{\alpha| \lim_{n\to \infty} \Pr \{X_n>\alpha\}=0\right\}.
%\end{equation}
%When the noise is stationary and ergodic, its sup-entropy rate
%coincide with its entropy rate (Shannon-Mcmillan theorem) and the
%capacity is
%\begin{equation}
%C_{FB}=C_{NFB}=\log_2(q)-{H}(\mathcal Z),
%\end{equation}
%where $H({\cal Z})\triangleq \lim_{n\to \infty} \frac{H(Z^n)}{n}$ is
%the entropy rate.
Our results will in fact allow us to identify a class of MACs larger
than that of  the mod-$q$ additive noise MAC for which feedback does
not enlarge the capacity region.

Further specialization of the results will allow us to deduce
that, for a general FS-MAC with  states that are not affected by
the input, if the capacity (region)  without feedback is zero,
then so is the capacity (region) with feedback. It will also allow
us to identify a large class of FS-MACs for which source-channel
coding separation holds.

The remainder of this paper is organized as follows. We concretely
describe our channel model and assumptions in Section \ref{sec:
Channel Model}. In Section \ref{sec: directed information} we
introduce some notation, tools and results pertaining to directed
information and the notion of causal conditioning that will be key
in later sections. We state our main results in Section \ref{sec:
Main Theorems}. In Section \ref{sec: applications} we apply  the
general results of Section \ref{sec: Main Theorems} to obtain the
capacity region for several interesting classes of channels, as well
as establish a source-channel separation result. The validity of our
inner and outer bounds is established, respectively, in Section
\ref{sec: Proof of Achievability} and Section \ref{sec: Proof of the
outer bound}. In Section \ref{sec: Capacity region of FS-MAC without
feedback} we show that our inner and outer bounds coincide, and
hence yield the capacity region, when applied to the FS-MAC without
feedback. This result can be thought of as the natural extension of
Gallager's results \cite[Ch. 4]{Gallager68} to the MAC or,
alternatively, as the natural extension of Gallager's derivation of
the MAC capacity region in \cite{Gallager85} to channels with
states. In Section \ref{sec: Stationary Finite state Markovian MAC
with feedback} we characterize the capacity region for the case of
arbitrary (time-invariant) feedback and FS-MAC channels with states
that evolve independently of the input, as well as the FS-MAC with
limited ISI (which is the natural MAC-analogue of Kim's
point-to-point channel \cite{Kim07_feedback}),
 by showing that
our inner and outer bounds coincide for this case. We conclude in
Section \ref{sec: conclusions} with a summary of our contribution
and a related future research direction.

%\subsection{Introduction from the proposal}

\section{Channel Model}
\label{sec: Channel Model} In this paper, we consider an FS-MAC
(Finite state MAC) with a time invariant feedback as illustrated
in Fig. \ref{f_1}.
\begin{figure}[h]{
% \psfrag{v1\r}[][][0.8]{$m$}\psfrag{w1}[][][0.8]{Message}
 \psfrag{A}[][][0.8]{Encoder 1}
\psfrag{B}[][][0.8]{$x_{1,i}(m_1,z_1^{i-1})$}
\psfrag{C}[][][0.8]{Encoder 2}
\psfrag{D}[][][0.8]{$x_{2,i}(m_2,z_2^{i-1})$}
\psfrag{m1}[][][0.8]{$m_1$} \psfrag{m2}[][][0.8]{$\in\{
1,...,2^{nR_1}\}\;\;$} \psfrag{m3}[][][0.8]{$m_2$}
\psfrag{m4}[][][0.8]{$\in\{ 1,...,2^{nR_2}\}\;\;$}
 \psfrag{M}[][][0.8]{Finite State
MAC} \psfrag{P}[][][0.8]{$P(y_i,s_i|x_{1,i},x_{2,i},s_{i-1})$}
\psfrag{f}[][][0.8]{$z_{2,i}=f_2(y_{i})$}
\psfrag{f}[][][0.8]{Time-Invariant} \psfrag{i}[][][0.8]{Function}
\psfrag{zf2}[][][0.8]{$z_{2,i}(y_i)$}
\psfrag{zf1}[][][0.8]{$z_{1,i}(y_i)$}
\psfrag{z2}[][][0.8]{$z_{2,i-1}$}
\psfrag{z1}[][][0.8]{$z_{1,i-1}$} \psfrag{W}[][][0.8]{Decoder}

\psfrag{g}[][][0.8]{Unit} \psfrag{h}[][][0.8]{Delay}
\psfrag{X}[][][0.8]{$\hat m_1(y^N)$} \psfrag{U}[][][0.8]{$\hat
m_2(y^N)$} \psfrag{Y}[][][0.8]{$\hat m_1, \hat m_2$}
\psfrag{D6}[][][0.8]{Function}
\psfrag{v7}[][][0.8]{$z_{i-1}(y_{i-1})$}
 \psfrag{Yi}[][][0.8]{$y_i$}
\psfrag{v6 }[][][0.8]{$\hat m$}\psfrag{w6a}[][][0.8]{Estimated}
\psfrag{w6b\r}[][][0.8]{Message} \centering
\includegraphics[width=13cm]{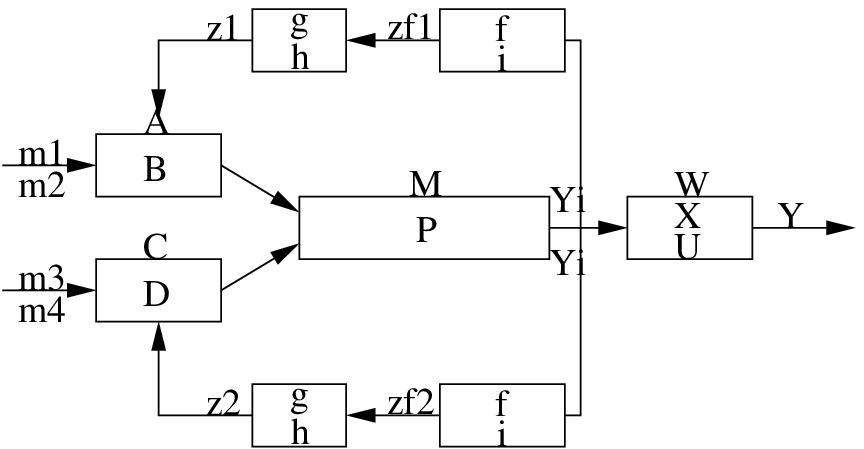}
\centering \caption{Channel with feedback that is a time invariant
deterministic function of the output. } \label{f_1}
}\end{figure} %%when I save on visio I used AI 3 - and I had to restart the computer and not all visio features work with AI3
The MAC setting consists of two senders and one receiver. Each
sender $l\in\{1,2\}$ chooses an index $m_l$ uniformly from the set
$\{1,...,2^{nR_l}\}$ and independently of the other sender. The
input to the channel from encoder $l$ is denoted by
$\{X_{l1},X_{l2},X_{l3},...\}$, and the output of the channel is
denoted by $\{Y_1,Y_2,Y_3,...\}$. The state at time $i$, i.e.,
$S_i\in \mathcal S$, takes values in a finite set of possible
states. The channel is stationary and is characterized by a
conditional probability $P(y_i,s_i|x_{1i},x_{2i},s_{i-1})$ that
satisfies
\begin{equation}
P(y_i,s_i|x_1^i,x_2^i,s^{i-1},y^{i-1})=P(y_i,s_i|x_{1i},x_{2i},s_{i-1}),
\end{equation}
where the superscripts denote sequences in the following way:
$x_l^i=(x_{l1},x_{l2},...,x_{li}), \; l\in\{1,2\}$. We assume a
communication with feedback $z_l^i$ where the element $z_{li}$ is a
time-invariant function of the output $y_i$. For example, $z_{li}$
could equal $y_i$ (perfect feedback), or a quantized version of
$y_i$, or null (no feedback). The encoders receive the feedback
samples with one unit delay.

A code with feedback consists of two encoding functions
$g_l:\{1,...,2^{nR_1}\}\times \mathcal Z_l^{n-1}\to \mathcal
X_l^n,\; l=1,2$,
%\end{equation}
where the $k{\text th}$ coordinate of $x_l^n\in\mathcal X_l^n $ is
given by the function
\begin{equation}
x_{lk}=g_{lk}(m_l,z_l^{k-1}), \qquad k=1,2,\ldots ,n, \quad l=1,2
\end{equation}
and a decoding function,
\begin{equation}
g:\mathcal Y^n \to \{1,...,2^{nR_1}\} \times \{1,...,2^{nR_2}\}.
\end{equation}
%
%The difference between a code with feedback and a code without
%feedback is through the encoding function. A code with feedback
%consist of $M$ encoding functions $l_m:W_m\times \mathcal Y^n\to
%\mathcal X_m^n, \qquad  1\leq m\leq M$,
%%\end{equation}
%where the $k{\text th}$ coordinate of $x_m^n\in\mathcal X_m^n $ is
%given by the function
%\begin{equation}
%x_{mk}=l_{mk}(w_m,y^{k-1}), \qquad k=1,2,\ldots ,n, \quad
%m=1,\ldots M.
%\end{equation}
%
The {\it average probability of error} for
$((2^{nR_1},2^{nR_2},n)$ code is defined as
\begin{equation}
P_e^{(n)}=\frac{1}{2^{n(R_1+R_2)}} \sum_{w_1,w_2}
\Pr\{g(Y^n)\neq(w_1,w_2)|(w_1,w_2) \text{ sent}\}.
\end{equation}
A rate $(R_1,R_2)$ is said to be {\it achievable} for the MAC if
there exists a sequence of $((2^{nR_1},2^{nR_2}),n)$ codes with
$P_e^{(n)}\to 0$. The {\it capacity region} of MAC is the closure
of the set of achievebale $(R_1,R_2)$ rates.

%with two as a MAC that the output $y_n$ and state $s_n$ depends on
%the $(x_1^n,x_2^n,y^{n-1},s^{n-1})$ only through
%$(x_{1n},x_{2n},s_{n-1})$, i.e.
%\begin{equation}
%p(y_i,s_i|x_1^i,x_2^i,s_{i-1},y^{i-1})=p(y_i,s_i|x_{1i},x_{2i},s_{i-1})
%\end{equation}
%
%Two messages are sent $m_1\in
%\{1,..,2^{nR_1}\},m_2\in\{1,...,2^{nR_2}\}$
%
%

\section{Directed Information} \label{sec: directed information}
Throughout this paper we use the {\it Causal Conditioning} notation
$(\cdot||\cdot)$. We denote the probability mass function (pmf) of
$Y^N$ causally conditioned on $X^{N-d}$, for some integer $d\geq0$,
as $P(y^N||x^{N-d})$ which is defined as
\begin{equation} \label{e_causal_cond_def}
P(y^N||x^{N-d})\triangleq \prod_{i=1}^{N} P(y_i|y^{i-1},x^{i-d}),
\end{equation}
 (if $i-d\leq0$ then $x^{i-d}$ is set to null). In particular, we
extensively use the cases where $d=0,1$:
\begin{equation}P(y^N||x^{N})\triangleq \prod_{i=1}^{N}
P(y_i|y^{i-1},x^{i})
\end{equation}
\begin{equation}
Q(x^N||y^{N-1})\triangleq \prod_{i=1}^{N} Q(x_i|x^{i-1},y^{i-1}),
\end{equation}
where the letters $Q$ and $P$ are both used for denoting pmfs.

%\end{equation}
%
%The definition given in (\ref{e_causal_cond_delay_def}) can be
%considered to be a particular case of the definition given in
%(\ref{e_causal_cond_def}) where $x_0$ is set to a dummy zero. This
%concept was captured by a notation of Massey in \cite{Massey05}
%via a concatenation at the beginning of the sequence $x^{N-1}$
%with a dummy zero.
%* Where $y_0$ A more accurate notation inspired by
%\cite{Massey05} of $P(y^N||x^{N-1})$ would be $P(y^N||0 *
%x^{N-1})$, where $*$ denotes concatenation. This notation is more
%accurate because it emphasizes that the first element $x_1$ is
%aligned with the "dummy" zero.
Directed information $I(X^N\to Y^N)$ was defined by Massey in
\cite{Massey90} as
\begin{equation}
I(X^N \to Y^N )\triangleq \sum_{i=1}^{N} I(X^i;Y_i|Y^{i-1}).
\end{equation}
It has been widely used in the characterization of capacity of
point-to-point channels
\cite{Tatikonda00,Chen05,Yang05,Permuter06_feedback_submit,Permuter06_trapdoor_submit,Tatikonda06,Kim07_feedback},
compound channels\cite{ShraderPemuter07ISIT}, network capacity
\cite{Kramer03,Kramer88}, rate distortion
\cite{Pradhan04,Pradhan04b,zamir06} and computational biology
\cite{Hero06,Nuno07}. Directed information can also be expressed
in terms of causal conditioning as
\begin{equation}\label{e_directed_def}
I(X^N \rightarrow Y^N)=\sum_{i=1}^{N} I(X^i;Y_i|Y^{i-1}) = \mathbf
E\left[ \log \frac{P(Y^N||X^N)}{P(Y^N)} \right],
\end{equation}
where $\mathbf E$ denotes expectation. The directed information from
$X^N$ to $Y^N$, conditioned on $S$, is denoted as $I(X^N \to Y^N|S)$
and is defined as:
\begin{equation}
I(X^N \to Y^N|S)\triangleq \sum_{i=1}^{N} I(X^i;Y_i|Y^{i-1},S).
\end{equation}
Directed information between $X_1^N$ to $Y^N$ causally conditioned
on $X_2^N$ is defined as
\begin{equation}
I(X_1^N \to Y^N||X_2^N)\triangleq \sum_{i=1}^{N}
I(X_1^i;Y_i|X_2^i,Y^{i-1}) = \mathbf E\left[ \log
\frac{P(Y^N||X_1^N,X_2^N)}{P(Y^N||X_2^N)} \right].
\end{equation}
where $P(y^N||x_1^N,x_2^N)=\prod_{i=1}^N
P(y_i|y^{i-1},x_1^i,x_2^i)$.

Throughout this paper we are using several properties of causal
conditioning and directed information that follow from the
definitions and simple algebra. Many of the key properties that
hold for mutual information and regular conditioning carry over to
directed information and causal conditioning, where $P(x^N)$ is
replaced by $P(x^N||y^{N-1})$ and
$P(y^N)$ is replaced by $P(y^N||x^N)$. Specifically, %Here we present four
%properties that are extensively used in this paper. The first two
%lemmas appears in \cite{Permuter06_feedback_submit} and the last
%two are proved in the appendix.
\begin{lemma} \label{l_joint_causal_condition_decomposition}{(\it Analogue to $P(x_1^N,y^N)=P(x_1^N)P(y^N|x_1^{N})$}.) For arbitrary random
vectors $(X_1^N,X_2^N,Y^N)$,
\begin{equation}
P(x_1^N,y^N)=P(x_1^N||y^{N-1})P(y^N||x_1^{N})
\end{equation}
\begin{equation}
P(x_1^N,y^N||x_2^N)=P(x_1^N||y^{N-1},x_2^N)P(y^N||x_1^{N},x_2^N).
\end{equation}
\end{lemma}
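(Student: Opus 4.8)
The plan is to prove both identities directly from the definition of causal conditioning in \eqref{e_causal_cond_def} by expanding each side into a product over time indices and matching factors. I would begin with the first identity, since the second is the natural causally-conditioned generalization and will follow by the same argument carried out with every distribution additionally conditioned on $x_2^N$.

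First I would write the left-hand side using the ordinary chain rule interleaved between the two processes, namely
\begin{equation}
P(x_1^N,y^N)=\prod_{i=1}^{N} P(x_{1i}|x_1^{i-1},y^{i-1})\,P(y_i|x_1^{i},y^{i-1}).
\end{equation}
The key observation is that the chain rule lets us order the factors so that at each time $i$ we first reveal $x_{1i}$ (conditioned on the entire past $x_1^{i-1},y^{i-1}$) and then reveal $y_i$ (conditioned on $x_1^{i},y^{i-1}$). I would then recognize the two families of conditional factors as exactly the terms appearing in the definitions of the two causally-conditioned pmfs: by \eqref{e_causal_cond_def} with $d=1$ applied to $X_1$ given $Y$, we have $P(x_1^N||y^{N-1})=\prod_{i=1}^{N} P(x_{1i}|x_1^{i-1},y^{i-1})$, and with $d=0$ applied to $Y$ given $X_1$ we have $P(y^N||x_1^{N})=\prod_{i=1}^{N} P(y_i|y^{i-1},x_1^{i})$. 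Substituting these two products back gives the claimed factorization.

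For the second identity I would repeat the same argument verbatim, except that every conditional pmf additionally carries $x_2^N$ on the conditioning side. Concretely, applying the chain rule to $P(x_1^N,y^N||x_2^N)=\prod_{i=1}^N P(x_{1i},y_i|x_1^{i-1},y^{i-1},x_2^i)$ and then splitting each time-$i$ factor as $P(x_{1i}|x_1^{i-1},y^{i-1},x_2^i)\,P(y_i|x_1^{i},y^{i-1},x_2^i)$, I would again identify the first product as $P(x_1^N||y^{N-1},x_2^N)$ and the second as $P(y^N||x_1^N,x_2^N)$, both read directly from the causal-conditioning definitions.

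Since everything reduces to regrouping the chain-rule factors, there is no serious obstacle; the only point requiring care is bookkeeping of the time indices in the conditioning sets, in particular making sure that $y_i$ is conditioned on $x_1^i$ (current input included, $d=0$) while $x_{1i}$ is conditioned only on $y^{i-1}$ (strictly past outputs, $d=1$), which is precisely the asymmetry that the causal-conditioning notation is designed to capture. I expect this indexing check to be the most delicate part of an otherwise routine verification.
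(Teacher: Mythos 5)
Your proof is correct and is essentially the argument the paper relies on: the paper itself defers this lemma to \cite[Sec. IV]{Permuter06_feedback_submit}, where it is proved by exactly this chain-rule decomposition in the interleaved order $x_{11},y_1,x_{12},y_2,\ldots$ followed by regrouping the factors into the two causally-conditioned products. Your indexing (delay $1$ on $y$ for the $x_1$-factors, delay $0$ on $x_1$ and $x_2$ for the $y$-factors) matches the conventions in \eqref{e_causal_cond_def} and the definition of $P(y^N||x_1^N,x_2^N)$, so there is nothing to add.
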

\begin{lemma}\label{l_diff_cond_S}{(\it Analogue to $|I(X_1^N;Y^N)- I(X_1^N;Y^N|S)|\leq H(S)$}.) For
arbitrary random vectors and variables,
\begin{equation}
\left|I(X_1^N \rightarrow Y^N)- I(X_1^N \rightarrow
Y^N|S)\right|\leq H(S)\leq \log|\mathcal{S}|
\end{equation}
\begin{equation}
\left|I(X_1^N \rightarrow Y^N||X_2^{N})- I(X_1^N \rightarrow
Y^N||X_2^{N},S)\right|\leq H(S)\leq
 \log|\mathcal{S}|.
\end{equation}
\end{lemma}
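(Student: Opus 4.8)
The plan is to reduce everything to per-symbol terms and then recognize that the gap between the unconditioned and $S$-conditioned directed informations is controlled by how much $S$ can be revealed by the outputs. First I would expand both quantities through their definitions, giving
\begin{equation}
I(X_1^N \to Y^N) - I(X_1^N \to Y^N|S) = \sum_{i=1}^N \left[ I(X_1^i; Y_i|Y^{i-1}) - I(X_1^i; Y_i|Y^{i-1},S) \right].
\end{equation}
For each $i$ I would apply the elementary identity obtained by expanding $I(X_1^i, S; Y_i|Y^{i-1})$ in the two possible chain-rule orders, namely
\begin{equation}
I(X_1^i; Y_i|Y^{i-1}) - I(X_1^i; Y_i|Y^{i-1},S) = I(S; Y_i|Y^{i-1}) - I(S; Y_i|Y^{i-1},X_1^i),
\end{equation}
which swaps the roles of $X_1^i$ and $S$ and exhibits the gap as a difference of two nonnegative sums.

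Next I would sandwich the difference between minus the second sum and plus the first sum, and bound each sum by $H(S)$. The first sum telescopes exactly by the chain rule for mutual information, $\sum_{i=1}^N I(S; Y_i|Y^{i-1}) = I(S; Y^N) \leq H(S)$. The second sum, $\sum_{i=1}^N I(S; Y_i|Y^{i-1},X_1^i)$, does not collapse to a single mutual information because the conditioning set $X_1^i$ grows with $i$; this is the crux of the argument. I would handle it by inequality-telescoping: writing each term as $H(S|Y^{i-1},X_1^i) - H(S|Y^i,X_1^i)$ and invoking ``conditioning reduces entropy'' in the form $H(S|Y^i,X_1^{i+1}) \leq H(S|Y^i,X_1^i)$, so that each term is bounded by a telescoping difference whose sum is at most $H(S|X_{1,1}) \leq H(S)$. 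Since both sums lie in $[0,H(S)]$, the gap lies in $[-H(S),H(S)]$, yielding $\left|I(X_1^N \to Y^N) - I(X_1^N \to Y^N|S)\right| \leq H(S)$, and $H(S) \leq \log|\mathcal S|$ is just maximum entropy on a finite alphabet.

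For the causally-conditioned statement I would run the identical argument with $X_2^i$ carried along in every conditioning set. The same chain-rule swap produces $I(S; Y_i|X_2^i,Y^{i-1}) - I(S; Y_i|X_2^i,Y^{i-1},X_1^i)$ per symbol, and both resulting sums are bounded by $H(S)$ via the inequality-telescoping device. I note that here even the ``first'' sum $\sum_i I(S; Y_i|X_2^i,Y^{i-1})$ fails to telescope cleanly, since $X_2^i$ also grows with $i$, so the entropy-monotonicity trick is required for both sums rather than only one.

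The main obstacle I anticipate is precisely this non-telescoping of $\sum_i I(S; Y_i|Y^{i-1},X_1^i)$ and its causally-conditioned counterpart: the temptation is to seek an exact closed form as for the first sum, but no clean mutual-information expression exists, and the resolution is to settle for the one-sided telescoping inequality furnished by ``conditioning reduces entropy,'' which is all the bound actually requires.
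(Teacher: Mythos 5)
Your proof is correct: the chain-rule swap identity, the exact telescoping of $\sum_i I(S;Y_i|Y^{i-1})=I(S;Y^N)\leq H(S)$, and the one-sided telescoping of $\sum_i I(S;Y_i|Y^{i-1},X_1^i)$ via $H(S|Y^i,X_1^{i+1})\leq H(S|Y^i,X_1^i)$ all hold, and sandwiching the difference between two quantities in $[0,H(S)]$ gives the bound. The paper itself does not reprove this lemma but defers to \cite[Sec.\ IV]{Permuter06_feedback_submit}, where the argument is essentially the one you give (the growing conditioning set there is absorbed by bounding the second sum by $I(S;X_1^N,Y^N)\leq H(S)$, which is the same nonnegativity/monotonicity device you use).
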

The proofs of Lemma \ref{l_joint_causal_condition_decomposition}
and Lemma \ref{l_diff_cond_S} can be found in \cite[Sec.
IV]{Permuter06_feedback_submit}, along with some additional
properties of causal conditioning and directed information. The
next lemma, which is proven in Appendix
\ref{s_app_lemma_proof_mutual_becomes_directed}, shows that by
replacing regular pmf with causal conditioning pmf we get the
directed information. Let us denote the mutual information
$I(X_1^n;Y^n|X_2^n)$ as a functional of $Q(x_1^N,x_2^N)$ and
$P(y^N|x_1^N,x_2^N)$, i.e., $\mathcal
I(Q(x_1^N,x_2^N);P(y^N|x_1^N,x_2^N))\triangleq
I(X_1^n;Y^n|X_2^n)$. Consider the case that the random variables
$X_1^N,X_2^N$ are independent, i.e.,
$Q(x_1^N,x_2^N)=Q(x_1^N)Q(x_2^N)$, then by definition
\begin{equation}\label{e_calI_def}
\mathcal I(Q(x_1^N)Q(x_2^N);P(y^N|x_1^N,x_2^N))\triangleq
\sum_{y^N,x_1^N,x_2^N}Q(x_1^N)Q(x_2^N)P(y^N|x_1^N,x_2^N)\frac{P(y^N|x_1^N,x_2^N)}{\sum_{{x'}_1^N}Q({x'}_1^N)P(y^N|{x'}_1^N,x_2^N)}.
\end{equation}
\begin{lemma}\label{l_mutual_becomes_directed_causal}
If the random vectors $X_1^N$ and $X_2^N$ are causal-conditionally
independent given $Y^{N-1}$, i.e.,
$Q(x_1^N,x_2^N||y^{N-1})=Q(x_1^N||y^{N-1})Q(x_2^N||y^{N-1})$ then
\begin{equation}
\mathcal
I(Q(x_1^N||y^{N-1})Q(x_2^N||y^{N-1});P(y^N||x_1^N,x_2^N))=I(X_1^N\to
Y^N||X_2^N).
\end{equation}
\end{lemma}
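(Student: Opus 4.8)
The plan is to substitute the causally conditioned pmfs into the explicit expression \eqref{e_calI_def} and then to match the result, factor by factor, with the causal-conditioning formula for directed information. Carrying out the substitution, the quantity to analyze is
\[
\sum_{x_1^N,x_2^N,y^N}Q(x_1^N||y^{N-1})Q(x_2^N||y^{N-1})P(y^N||x_1^N,x_2^N)\log\frac{P(y^N||x_1^N,x_2^N)}{\sum_{{x'}_1^N}Q({x'}_1^N||y^{N-1})P(y^N||{x'}_1^N,x_2^N)}.
\]
First I would invoke the causal-conditional independence hypothesis $Q(x_1^N,x_2^N||y^{N-1})=Q(x_1^N||y^{N-1})Q(x_2^N||y^{N-1})$ to replace the product of the two input factors by the joint $Q(x_1^N,x_2^N||y^{N-1})$, and then apply Lemma \ref{l_joint_causal_condition_decomposition} to recognize $Q(x_1^N,x_2^N||y^{N-1})P(y^N||x_1^N,x_2^N)$ as the genuine joint law $P(x_1^N,x_2^N,y^N)$; thus the outer sum is an ordinary expectation with respect to $P$.

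The heart of the argument is to show that the denominator inside the logarithm equals the causally conditioned marginal $P(y^N||x_2^N)=\prod_{i=1}^N P(y_i|y^{i-1},x_2^i)$, i.e.
\[
\sum_{x_1^N}Q(x_1^N||y^{N-1})P(y^N||x_1^N,x_2^N)=P(y^N||x_2^N).
\]
To prove this I would compute $P(x_2^N,y^N)$ in two ways. Marginalizing the joint law over $x_1^N$ gives $P(x_2^N,y^N)=Q(x_2^N||y^{N-1})\sum_{x_1^N}Q(x_1^N||y^{N-1})P(y^N||x_1^N,x_2^N)$, while Lemma \ref{l_joint_causal_condition_decomposition} (its first identity, with $X_2$ playing the role of $X_1$) gives $P(x_2^N,y^N)=P(x_2^N||y^{N-1})P(y^N||x_2^N)$. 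Equating the two and cancelling common factors reduces the claim to the identity $P(x_2^N||y^{N-1})=Q(x_2^N||y^{N-1})$. This in turn follows coordinate by coordinate: forming the ratio $P(x_2^i,y^{i-1})/P(x_2^{i-1},y^{i-1})$ from the factored joint $\prod_{j}Q(x_{1j}|x_1^{j-1},y^{j-1})Q(x_{2j}|x_2^{j-1},y^{j-1})P(y_j|x_1^j,x_2^j,y^{j-1})$, the factor $Q(x_{2i}|x_2^{i-1},y^{i-1})$ is the only one depending on $x_{2i}$ and does not involve the summed-out variables $x_1^{i-1}$, so it pulls out intact and yields $P(x_{2i}|x_2^{i-1},y^{i-1})=Q(x_{2i}|x_2^{i-1},y^{i-1})$; multiplying over $i$ gives the required identity.

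Once the denominator is identified, the summand becomes $P(x_1^N,x_2^N,y^N)\log\frac{P(y^N||x_1^N,x_2^N)}{P(y^N||x_2^N)}$, and summing yields $\mathbf E\left[\log \frac{P(Y^N||X_1^N,X_2^N)}{P(Y^N||X_2^N)}\right]$, which is exactly $I(X_1^N\to Y^N||X_2^N)$ by the definition of causally conditioned directed information given before the lemma. I expect the main obstacle to be the denominator/marginalization step, and in particular the bookkeeping required to verify that integrating out $X_1^N$ preserves the causal structure of $(X_2^N,Y^N)$, so that $P(x_2^N||y^{N-1})=Q(x_2^N||y^{N-1})$; everything else is substitution and recognition of the defining formulas.
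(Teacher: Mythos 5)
Your proposal is correct and follows essentially the same route as the paper's Appendix~\ref{s_app_lemma_proof_mutual_becomes_directed}: substitute the causally conditioned pmfs into the functional, use Lemma~\ref{l_joint_causal_condition_decomposition} to recognize the outer weight as the joint law, identify the denominator with $P(y^N\|x_2^N)$ via the factorization $P(x_2^N,y^N)=Q(x_2^N||y^{N-1})P(y^N||x_2^N)$, and read off the directed information. The only difference is that you explicitly verify the identity $P(x_2^N||y^{N-1})=Q(x_2^N||y^{N-1})$ coordinate by coordinate, a step the paper uses implicitly when it multiplies numerator and denominator by $Q(x_2^N||y^{N-1})$, so your writeup is, if anything, slightly more complete.
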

The next lemma, which is proven in Appendix
\ref{s_app_lemma_proof_directed_mutual_no_feedback}, shows that in
the absence of feedback,  mutual information becomes directed
information.
\begin{lemma}\label{l_directed_mutaul_if_no_feedback}
If $Q(x_1^N,x_2^N||y^{N-1})=Q(x_1^N)Q(x_2^N)$ then
\begin{eqnarray}
I(X_1^N;Y^N|X_2^N) = I(X_1^N \to Y^N||X_2^N).
\end{eqnarray}
\end{lemma}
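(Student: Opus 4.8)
The plan is to deduce this lemma from Lemma~\ref{l_mutual_becomes_directed_causal} by recognizing the no-feedback hypothesis as a special case of causal-conditional independence. First I would observe that $Q(x_1^N,x_2^N||y^{N-1})=Q(x_1^N)Q(x_2^N)$ forces each marginal causal conditioning to collapse to an ordinary marginal, $Q(x_l^N||y^{N-1})=Q(x_l^N)$ for $l=1,2$, so that the hypothesis of Lemma~\ref{l_mutual_becomes_directed_causal}, namely $Q(x_1^N,x_2^N||y^{N-1})=Q(x_1^N||y^{N-1})Q(x_2^N||y^{N-1})$, is met. Lemma~\ref{l_mutual_becomes_directed_causal} then yields directly
\begin{equation}
\mathcal I\big(Q(x_1^N)Q(x_2^N);P(y^N||x_1^N,x_2^N)\big)=I(X_1^N\to Y^N||X_2^N),
\end{equation}
so the entire task reduces to identifying the left-hand functional with the ordinary conditional mutual information $I(X_1^N;Y^N|X_2^N)$.

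The key step is to show that, under no feedback, feeding the causal channel $P(y^N||x_1^N,x_2^N)$ into the functional $\mathcal I$ reproduces the true joint law of $(X_1^N,X_2^N,Y^N)$. Applying Lemma~\ref{l_joint_causal_condition_decomposition} with the pair $(X_1^N,X_2^N)$ playing the role of the single input vector gives the factorization $P(x_1^N,x_2^N,y^N)=P(x_1^N,x_2^N||y^{N-1})\,P(y^N||x_1^N,x_2^N)$; invoking the hypothesis turns the code factor into $Q(x_1^N)Q(x_2^N)$, so that $P(x_1^N,x_2^N,y^N)=Q(x_1^N)Q(x_2^N)P(y^N||x_1^N,x_2^N)$. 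Summing out $y^N$ (the causal-conditioning factor sums to one) gives $P(x_1^N,x_2^N)=Q(x_1^N)Q(x_2^N)$, and dividing shows $P(y^N|x_1^N,x_2^N)=P(y^N||x_1^N,x_2^N)$ on the support. Hence the joint $Q(x_1^N)Q(x_2^N)P(y^N||x_1^N,x_2^N)$ defining the functional on the left is precisely the actual joint, so by the definition of $\mathcal I$ in \eqref{e_calI_def} the left-hand side equals $I(X_1^N;Y^N|X_2^N)$, which combined with the displayed identity completes the argument.

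The step I expect to carry the real content is establishing $P(y^N|x_1^N,x_2^N)=P(y^N||x_1^N,x_2^N)$, i.e. that ordinary conditioning of the output on the whole input block coincides with the channel's causal conditioning once the inputs do not depend on the feedback. Although it falls out cleanly from the factorization above, it is exactly the place where the absence of feedback is used: with feedback the future inputs $X_{1,i+1}^N,X_{2,i+1}^N$ are correlated with $Y_i$ through the returned outputs, and the equality fails. An alternative, more hands-on route avoiding Lemma~\ref{l_mutual_becomes_directed_causal} is to expand $I(X_1^N;Y^N|X_2^N)=\sum_{i=1}^N I(X_1^N;Y_i|X_2^N,Y^{i-1})$ by the chain rule and to prove, by marginalizing the factored joint and telescoping the product over future output coordinates, the two relations $P(y_i|y^{i-1},x_1^N,x_2^N)=P(y_i|y^{i-1},x_1^i,x_2^i)$ and $P(y_i|y^{i-1},x_2^N)=P(y_i|y^{i-1},x_2^i)$; substituting these collapses each summand to $I(X_1^i;Y_i|X_2^i,Y^{i-1})$, whose sum is $I(X_1^N\to Y^N||X_2^N)$ by definition. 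Either way the only subtlety is the bookkeeping of which input coordinates may be dropped, which is controlled entirely by the no-feedback assumption.
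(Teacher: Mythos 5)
Your proof is correct, but it takes a genuinely different route from the paper's. The paper proves Lemma \ref{l_directed_mutaul_if_no_feedback} by a direct chain of equalities: it writes $I(X_1^N;Y^N|X_2^N)=\mathbf E\bigl[\log\frac{P(Y^N,X_1^N,X_2^N)}{P(Y^N,X_2^N)Q(X_1^N|X_2^N)}\bigr]$, decomposes the two joint pmfs via Lemma \ref{l_joint_causal_condition_decomposition} into causal-conditioning factors, cancels the input terms using the hypothesis, and lands on $\mathbf E\bigl[\log\frac{P(Y^N||X_1^N,X_2^N)}{P(Y^N||X_2^N)}\bigr]=I(X_1^N\to Y^N||X_2^N)$; Lemma \ref{l_mutual_becomes_directed_causal} is never invoked. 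You instead route through Lemma \ref{l_mutual_becomes_directed_causal} and reduce the whole statement to the single structural identity $P(y^N|x_1^N,x_2^N)=P(y^N||x_1^N,x_2^N)$, which you derive correctly from the factorization $P(x_1^N,x_2^N,y^N)=Q(x_1^N)Q(x_2^N)P(y^N||x_1^N,x_2^N)$ by summing out $y^N$. Your version has the advantage of isolating exactly where the absence of feedback is used (ordinary and causal conditioning of the output coincide), a fact the paper's cancellation leaves implicit; the paper's version is self-contained and does not need to verify the hypothesis of Lemma \ref{l_mutual_becomes_directed_causal}. The one claim you assert without detail --- that $Q(x_1^N,x_2^N||y^{N-1})=Q(x_1^N)Q(x_2^N)$ forces $Q(x_l^N||y^{N-1})=Q(x_l^N)$ --- does require a short marginalization over future inputs and outputs to see that $P(x_{l,i}|x_l^{i-1},y^{i-1})=Q(x_{l,i}|x_l^{i-1})$, but this is the same bookkeeping the paper itself elides in its step (c), so it is not a gap. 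Your alternative chain-rule argument at the end is also valid and is the most elementary of the three approaches.
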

%Lemma \ref{l_joint_causal_condition_decomposition} and
%\ref{l_diff_cond_S} are proved in \cite[Sec.
%IV]{Permuter06_feedback_submit} and Lemma
%\ref{l_mutual_becomes_directed_causal} and
%\ref{l_directed_mutaul_if_no_feedback} are proved in Appendix
%\ref{s_app_lemma_proof_mutual_becomes_directed} and
%\ref{s_app_lemma_proof_directed_mutual_no_feedback} respectively.

\section{Main Theorems} \label{sec: Main Theorems}

We dedicate this section to a statement of our main results,
proofs of which will appear in the subsequent sections. Let
$\underline {\mathcal R}_n$ denote the following region in
$\mathbb{R}_+^2$ (2D set of nonnegative real numbers):% that is
%induced by an input distribution $Q(x_i^n||z^{n-1}), \; i=1,2$ and
%the channel and it satisfies
\begin{equation}\label{e_def_underline_Rn}
\underline {\mathcal
R}_n=\bigcup_{Q(w)Q(x_1^n||z_1^{n-1},w)Q(x_2^n||z_2^{n-1},w)}
\begin{cases}
R_1 \leq  \min_{s_0} \frac{1}{n}I(X_1^n \to Y^n ||X_2^{n},W,s_0)-\frac{\log |\mathcal S|}{n},\\
R_1 \leq  \min_{s_0} \frac{1}{n}I(X_2^n \to Y^n ||X_1^{n},W,s_0)-\frac{\log |\mathcal S|}{n},\\
R_1+R_2 \leq  \min_{s_0} \frac{1}{n}I((X_1,X_2)^n \to
Y^{n}|W,s_0)-\frac{\log |\mathcal S|}{n}.
\end{cases}
\end{equation}
Having the auxiliary random variable $W$ is equivalent to taking the
convex hull of the region. It is shown in the Appendix that the
inclusion (or omission) of $W$ in the definition of the region
$\underline {\mathcal R}_n$ has vanishing effect with increasing
$n$.
\begin{theorem}\label{t_inner_bound}
({\it Inner bound.}) For any FS-MAC with time invariant feedback as
shown in Fig. \ref{f_1}, and for any integer $n\geq1$, the region
$\underline {\mathcal R}_n$ is achievable.
\end{theorem}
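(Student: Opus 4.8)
## Proof Proposal for Theorem~\ref{t_inner_bound} (Inner Bound)

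The plan is to adapt Gallager's random-coding argument for finite-state channels \cite[Ch.~5]{Gallager68} to the MAC-with-feedback setting, with code-trees replacing codewords in the spirit of Kramer \cite{Kramer03} and the concatenation idea of Lapidoth and Telatar \cite{Lapidoth98Telatar}. Fix the integer $n$ and a product causal-conditioning distribution $Q(w)Q(x_1^n||z_1^{n-1},w)Q(x_2^n||z_2^{n-1},w)$ achieving a target point in $\underline{\mathcal R}_n$. The central object is a \emph{code-tree} for each encoder: because the feedback $z_l^{i-1}$ is available causally, the codeword is not a fixed sequence but a function of the past feedback, i.e.\ a tree whose branches are indexed by the feedback symbols. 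First I would generate, for each message $m_l\in\{1,\dots,2^{BnR_l}\}$, a code-tree by drawing its branch labels independently according to the causal-conditioning law $Q(x_l^n||z_l^{n-1},w)$, and then concatenate $B$ such length-$n$ blocks to build a length-$Bn$ super-tree. Because $X_1^n$ and $X_2^n$ are drawn from independent (causal-conditionally independent given $Z^{n-1}$) distributions, Lemma~\ref{l_mutual_becomes_directed_causal} guarantees that the relevant mutual-information quantity in the error exponent is exactly the directed information $I(X_1^n\to Y^n||X_2^n)$ appearing in \eqref{e_def_underline_Rn}.

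Next I would carry out the error analysis. Following Gallager, I would bound the average (over the random ensemble) probability of each of the three types of decoding error --- message~1 wrong, message~2 wrong, both wrong --- using a Gallager-style $\rho$-parameter bound on a threshold/maximum-likelihood decoder. The key technical maneuver is to handle the unknown initial state $s_0$: since the decoder does not know $S_0$, I would use the standard device of bounding the per-block error by its worst case over the initial state, which is where the $\min_{s_0}$ and the penalty term $\log|\mathcal S|/n$ enter. The block-concatenation structure is what makes this work over $B$ blocks: the state at the end of one block is the initial state of the next, but because each block's exponent is taken as a minimum over $s_0$, the total error probability is bounded by $B$ times the single-block bound, and the rate penalty $\log|\mathcal S|/n$ amortizes the cost of the unknown state across the block. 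Letting $B\to\infty$ (with $n$ fixed) drives the average error probability to zero for every rate pair strictly inside $\underline{\mathcal R}_n$, and a standard expurgation argument converts the average-error guarantee into a maximal-error one.

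I expect the main obstacle to be establishing the \emph{sup-additivity / concatenation} argument rigorously for code-trees rather than codewords. Unlike the fixed-codeword setting of \cite{Lapidoth98Telatar}, here each block is a tree whose realized path depends on the feedback actually received, so concatenating $B$ blocks requires care to ensure that (i) the branch distributions across blocks remain independent as generated, (ii) the directed-information quantity for the concatenated super-tree is additive across blocks up to the state-uncertainty penalty, and (iii) the decoder can be defined coherently on the product of trees. The role of Lemma~\ref{l_diff_cond_S} is precisely to control the gap incurred by conditioning (or not) on the initial state $S_0$, yielding the $\log|\mathcal S|$ slack that, after normalization by $n$, becomes the subtracted penalty. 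The auxiliary variable $W$ is handled by a time-sharing/convexification argument, and as noted in the statement its effect vanishes as $n$ grows; I would defer this to the appendix. Once the single-block Gallager bound with the $\min_{s_0}$ correction is in place, the remaining steps --- concatenation, taking $B\to\infty$, and expurgation --- are routine, so the crux of the proof is the block-wise error exponent with unknown initial state for the directed-information ensemble.
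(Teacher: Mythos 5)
Your overall architecture matches the paper's: random code-trees for each message built by concatenating $K$ (your $B$) independent length-$n$ trees drawn from $Q(x_l^n||z_l^{n-1})$, a three-way split of the error event, Gallager $\rho$-parameter bounds on each type, the $\min_{s_0}$ together with a $\log|\mathcal S|/n$ penalty to handle the unknown initial state, and Lemma~\ref{l_mutual_becomes_directed_causal} to identify the slope of the error exponent at $\rho=0$ with the directed information. The one step that would fail as you have written it is the concatenation mechanism: you claim ``the total error probability is bounded by $B$ times the single-block bound'' and that letting $B\to\infty$ then drives the error to zero. It does not --- $B$ copies of a fixed single-block bound diverge, and in any case there is no per-block decoding here: the messages are decoded jointly by ML from the entire length-$N=Bn$ received sequence. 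What the paper actually proves (Lemma~\ref{l_fn}) is sup-additivity of the \emph{normalized} exponent, $F_{N,i}(\rho,Q_N)\geq F_{n,i}(\rho,Q_n)$ when $Q_N=\prod_{k=1}^K Q_n$, so that the ensemble-average error is bounded by $2^{-N(F_{n,i}(\rho^*,Q_n)-\rho^* R_i-\epsilon)}$; i.e., the per-symbol exponent of the single block persists for the long code and the error decays exponentially in $N$. Your item (ii) (additivity of the directed information across blocks up to the state penalty) is the right intuition, but it must be established at the level of the Gallager functions $E_{N,i}(\rho,Q_N,s_0)$ for all $\rho\in[0,1]$, not only at $\rho=0$; that is precisely the content of Lemma~\ref{l_fn} and the reason the argument closes with $n$ fixed and $B\to\infty$.

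Two smaller points. The expurgation you invoke to pass to maximal error over messages is unnecessary --- the paper's criterion is average error over messages --- but what \emph{is} needed and absent from your sketch is the selection of a single deterministic code that is simultaneously good for every initial state $s_0$; the paper obtains it by applying the Markov inequality to each $P_e(s_0)$ and union-bounding over the finite state set. Also, the existence of a single $\rho^*>0$ giving a strictly positive exponent uniformly in $s_0$ (needed before the $\min_{s_0}$ can be taken) follows in the paper from the finiteness of $\mathcal S$ combined with the derivative and convexity properties of Lemma~\ref{l_der}; this should be stated explicitly rather than absorbed into the worst-case remark.
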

Let ${\mathcal R}_n$ denote the following region in
$\mathbb{R}_+^2$
% gratuitous: (2D set of nonnegative real number):% that is
%induced by an input distribution $Q(x_i^n||z^{n-1}), \; i=1,2$ and
%the channel and it satisfies
\begin{equation}\label{e_def_Rn}
 {\mathcal R}_n=\bigcup_{Q(x_1^n||z_1^{n-1})Q(x_2^n||z_2^{n-1})}
\begin{cases}
R_1 \leq  \frac{1}{n}I(X_1^n \to Y^n ||X_2^{n}),\\
R_1 \leq  \frac{1}{n}I(X_2^n \to Y^n ||X_1^{n}),\\
R_1+R_2 \leq  \frac{1}{n}I((X_1,X_2)^n \to Y^{n}).
\end{cases}
\end{equation}
In the following theorem we use the standard notion of convergence
of sets. Confer Appendix \ref{s_app_supadditive} for the details
of the definition.
\begin{theorem}\label{t_outer_bound}
({\it Outer bound.}) Let $(R_1,R_2)$ be an achievable pair for a
FS-MAC with time invariant feedback, as shown in Fig. \ref{f_1}.
Then, for any $n$ there exists a distribution
$Q(x_1^n||z_1^{n-1})Q(x_2^n||z_2^{n-1})$ such that the following
inequalities hold:
\begin{eqnarray}
R_1 &\leq&  \frac{1}{n}I(X_1^n \to Y^n ||X_2^{n})+\epsilon_n \nonumber \\
R_2 &\leq&  \frac{1}{n}I(X_2^n \to Y^n ||X_1^{n})+\epsilon_n \nonumber \\
R_1+R_2 &\leq& \frac{1}{n}I((X_1,X_2)^n \to Y^{n})+\epsilon_n,
\end{eqnarray}
where $\epsilon_n$ goes to zero as $n$ goes to infinity. Moreover,
the outer bound can be written as $\liminf {\mathcal R}_n$.
\end{theorem}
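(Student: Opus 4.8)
The plan is to prove Theorem~\ref{t_outer_bound} via a standard converse argument built on Fano's inequality, followed by a sup-additivity argument to establish the $\liminf$ characterization. First I would fix an achievable rate pair $(R_1,R_2)$ and a sequence of $((2^{nR_1},2^{nR_2}),n)$ codes with $P_e^{(n)}\to 0$. The key object is the induced joint distribution on $(W,X_1^n,X_2^n,S^n,Y^n)$ where the messages $m_1,m_2$ play the role of the inputs. Since each encoder generates $x_{lk}=g_{lk}(m_l,z_l^{k-1})$ as a deterministic function of its own message and its own (output-derived) feedback, the induced distribution factors as $Q(x_1^n\|z_1^{n-1})Q(x_2^n\|z_2^{n-1})$ in the causal-conditioning sense, because $m_1$ and $m_2$ are independent and each $z_l$ is a function of $y$. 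This is exactly the causal-conditional independence structure that Lemma~\ref{l_mutual_becomes_directed_causal} requires, and it is what will let me convert ordinary mutual-information bounds into directed-information bounds.

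Next I would apply Fano's inequality to each of the three message decodings. For the sum rate, $H(m_1,m_2|Y^n)\leq n\epsilon_n'$ with $\epsilon_n'\to 0$, so $n(R_1+R_2)=H(m_1,m_2)\leq I(m_1,m_2;Y^n)+n\epsilon_n'$. The crucial step is to upper bound $I(m_1,m_2;Y^n)$ by the directed information $I((X_1,X_2)^n\to Y^n)$. Because $(X_1^n,X_2^n)$ is a function of $(m_1,m_2)$ (together with the feedback, which is itself a function of past $Y$), and because feedback is present, one cannot simply write $I(m_1,m_2;Y^n)=I(X_1^n,X_2^n;Y^n)$; rather the directed information is the right quantity, and I would invoke Lemma~\ref{l_mutual_becomes_directed_causal} (or the chain-rule manipulation underlying it) to identify $I(m_1,m_2;Y^n)$ with $I((X_1,X_2)^n\to Y^n)$ up to the Markov structure imposed by the code. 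For the individual rates, conditioning on the other message: $nR_1=H(m_1|m_2)\leq I(m_1;Y^n|m_2)+n\epsilon_n'$, and since conditioning on $m_2$ makes $X_2^n$ available, this becomes the causally-conditioned directed information $I(X_1^n\to Y^n\|X_2^n)$ by the same lemma. Collecting the three inequalities with a common $\epsilon_n\to 0$ gives the first part of the theorem, with the distribution being precisely the product of causally-conditioned laws induced by the code.

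For the $\liminf$ statement I would invoke the sup-additivity machinery referenced in Appendix~\ref{s_app_supadditive}. The plan is to show that the sequence of regions $\{n\mathcal{R}_n\}$ (or an appropriately scaled version) is sup-additive in the set-theoretic sense, i.e.\ $\mathcal{R}_{n+m}\supseteq$ an average of $\mathcal{R}_n$ and $\mathcal{R}_m$, by concatenating an optimal input distribution for block length $n$ with one for block length $m$; the independence and causal-conditioning structure is preserved under concatenation. By the set-theoretic analogue of Fekete's lemma established in the appendix, sup-additivity guarantees that $\lim_n \mathcal{R}_n$ exists in the sense of set convergence and equals $\liminf_n \mathcal{R}_n$, and that every achievable pair lies in this limit.

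The main obstacle I expect is the directed-information conversion step in the converse: carefully justifying that $I(m_1,m_2;Y^n)$ and $I(m_1;Y^n|m_2)$ equal the corresponding directed informations under the presence of feedback, rather than the ordinary mutual informations $I(X_1^n,X_2^n;Y^n)$. This requires tracking the dependence of $X_{l,k}$ on $z_l^{k-1}$ and hence on $Y^{k-1}$, and showing the terms that would distinguish mutual from directed information vanish because the encoders cannot use future outputs. The sup-additivity argument, by contrast, is mostly bookkeeping once the concatenation construction and the appendix's set-convergence lemma are in place, so I would treat it as routine and concentrate the technical effort on the Fano-plus-directed-information chain.
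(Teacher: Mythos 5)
Your converse for the three rate inequalities follows the same route as the paper: Fano's inequality applied to $H(W_1|W_2)$, $H(W_2|W_1)$ and $H(W_1,W_2)$, followed by a conversion of the resulting mutual informations into directed informations using the code structure. You correctly isolate the facts that do the work (each $X_{l}^i$ is a deterministic function of $(W_l,Y^{i-1})$ through the feedback, and the channel Markov chain $(W_1,W_2)-(X_1^i,X_2^i,Y^{i-1})-Y_i$), but your plan to carry out the conversion ``by Lemma~\ref{l_mutual_becomes_directed_causal}'' points at the wrong tool: that lemma is a statement about the functional $\mathcal I(Q;P)$ evaluated at causally conditioned arguments and is used in the achievability/error-exponent analysis, not in the converse. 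What the paper actually does is expand $I(W_1;Y^n|W_2)=\sum_i\bigl[H(Y_i|Y^{i-1},W_2)-H(Y_i|W_1,W_2,Y^{i-1})\bigr]$, insert $X_2^i$ (resp.\ $X_1^i,X_2^i$) into the conditionings for free, drop $W_2$ from the first term by ``conditioning reduces entropy,'' and remove $(W_1,W_2)$ from the second term by the channel Markov chain, arriving at $\sum_i I(Y_i;X_1^i|Y^{i-1},X_2^i)=I(X_1^n\to Y^n||X_2^n)$. Note also that the result is an \emph{upper bound}, not an identity as your phrasing ``becomes'' suggests; fortunately the inequality goes in the direction you need.

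The genuine gap is in your treatment of the $\liminf{\mathcal R}_n$ statement. You propose to establish it via sup-additivity of $\{n\mathcal R_n\}$ by concatenating input distributions, invoking the set-valued Fekete lemma of Appendix~\ref{s_app_supadditive}. But for a \emph{general} FS-MAC with feedback the sequence $\{\mathcal R_n\}$ of (\ref{e_def_Rn}) is not known to be sup-additive: concatenating $Q_n$ and $Q_l$ does not make the directed information over the long block decompose as a sum, because the state at time $n$ depends on the inputs and the channel is not reset. The paper only proves sup-additivity for the corrected regions $\underline{\mathcal R}_n$ (Lemma~\ref{l_supadditive_Rn}, which carries the $\min_{s_0}$ and the $-\log|\mathcal S|/n$ penalty precisely to absorb this state-coupling), and for $\mathcal R_n$ itself only under the stationary Markovian assumption of Theorem~\ref{t_capacity_feedback}. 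Moreover, sup-additivity is not needed here: the theorem does not claim $\lim\mathcal R_n$ exists, only that every achievable pair lies in $\liminf\mathcal R_n$. As in Corollary~\ref{c_outer_bound}, the three $\epsilon_n$-inequalities already place $(R_1,R_2)$ within distance $2\epsilon_n$ of $\mathcal R_n$ for every $n$, so one can pick a point of $\mathcal R_n$ within $2\epsilon_n$ of $(R_1,R_2)$ and conclude directly from the definition of $\liminf$ of a sequence of sets. Replacing your sup-additivity detour with this direct argument closes the gap.
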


For the case where there is no feedback, i.e., $z_i$ is null,
$\mathcal R_n$ and $\underline {\mathcal R}_n$ can be expressed in
terms of mutual information and regular conditioning due to Lemma
\ref{l_directed_mutaul_if_no_feedback}.

%To state our next result, we equip ourselves with  the following
%natural notion of set convergence. HAIM, PLS CHECK THAT YOU AGREE
%WITH THIS DEFINITION.
%\begin{definition}
%For 2 sets $\mathcal{S}, \mathcal{T} \in \mathbb{R}_+^2$ let
%\[ d(\mathcal{S} , \mathcal{T}) \stackrel{\triangle}{=}  \sup_{a \in \mathcal{S}} \inf_{a' \in \mathcal{T}} \| a - a' \|
%+ \sup_{a' \in \mathcal{T}} \inf_{a \in \mathcal{S}} \| a - a' \| \]
%The sequence ${\mathcal S}_n \subseteq \mathbb{R}_+^2$ will be said
%to have a limit  ${\mathcal S} \subseteq \mathbb{R}_+^2$ if $\lim_{n
%\rightarrow 0} d(\mathcal{S}_n , \mathcal{S}) = 0$. In this case, we
%write ${\mathcal S} = \lim_{n\to \infty} \mathcal {\mathcal S}_n$.
%\end{definition}

\begin{theorem}\label{t_no_feedback}
({\it Capacity of FS-MAC without feedback.}) For any indecomposable
FS-MAC without feedback, the achievable region is
%\begin{eqnarray}
$\lim_{n\to \infty} \mathcal R_n,$
%\end{eqnarray}
and the limit exists.
\end{theorem}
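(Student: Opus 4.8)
The plan is to show that the inner bound $\underline{\mathcal{R}}_n$ of Theorem~\ref{t_inner_bound} and the outer bound $\liminf \mathcal{R}_n$ of Theorem~\ref{t_outer_bound} collapse to the same limiting region when there is no feedback and the channel is indecomposable, thereby establishing both that the capacity region equals $\lim_{n\to\infty}\mathcal{R}_n$ and that the limit exists. First I would invoke Lemma~\ref{l_directed_mutaul_if_no_feedback} to replace every directed-information quantity by an ordinary (conditional) mutual information, since in the absence of feedback the input distributions factor as $Q(x_1^n)Q(x_2^n)$ independently of $Y^{n-1}$. This turns $\mathcal{R}_n$ into the familiar MAC region built from $\frac1n I(X_1^n;Y^n|X_2^n)$, $\frac1n I(X_2^n;Y^n|X_1^n)$, and $\frac1n I(X_1^n,X_2^n;Y^n)$, and likewise renders $\underline{\mathcal{R}}_n$ in terms of the $s_0$-minimized conditional mutual informations minus the $\frac{\log|\mathcal S|}{n}$ penalty.

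Next I would argue that the existence of the limit follows from a sup-additivity (superadditivity) argument on the sequence of sets. The key structural fact is that one can concatenate an optimal $n_1$-length input distribution with an optimal $n_2$-length input distribution to produce a valid $(n_1+n_2)$-length distribution whose region contains a suitable combination of $\mathcal{R}_{n_1}$ and $\mathcal{R}_{n_2}$; this is precisely the codeword/code-tree concatenation idea attributed to Lapidoth--Telatar, here applied without the tree structure since feedback is absent. Formally, I would show $\mathcal{R}_{n_1+n_2} \supseteq \frac{n_1}{n_1+n_2}\mathcal{R}_{n_1} + \frac{n_2}{n_1+n_2}\mathcal{R}_{n_2}$ in the appropriate Minkowski-combination sense, and then appeal to the set-valued Fekete lemma developed in Appendix~\ref{s_app_supadditive} to conclude that $\lim_{n\to\infty}\mathcal{R}_n$ exists and equals the closure of $\bigcup_n \mathcal{R}_n$ (in the sense of set convergence defined there).

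The remaining task is to squeeze the capacity region between the two bounds and show the gap vanishes. The outer bound is already $\liminf\mathcal{R}_n$ by Theorem~\ref{t_outer_bound}, so it suffices to show that $\lim_n \underline{\mathcal{R}}_n$ coincides with $\lim_n \mathcal{R}_n$. The two differ only in that $\underline{\mathcal{R}}_n$ carries the explicit minimization over the initial state $s_0$ together with the additive penalty $\frac{\log|\mathcal S|}{n}$, and in the presence of the auxiliary $W$. The penalty term obviously vanishes as $n\to\infty$. For the $s_0$-minimization I would invoke indecomposability: this is exactly the hypothesis guaranteeing that the influence of the initial state on the output statistics decays, so that $\frac1n\min_{s_0} I(X_1^n;Y^n|X_2^n,s_0)$ and the unconditioned $\frac1n I(X_1^n;Y^n|X_2^n)$ differ by a quantity that tends to zero uniformly; Lemma~\ref{l_diff_cond_S} already bounds the state-conditioning gap by $\frac{\log|\mathcal S|}{n}$, and indecomposability handles the dependence on the \emph{particular} initial $s_0$ versus averaging. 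Finally, the remark following \eqref{e_def_underline_Rn} that the auxiliary $W$ has vanishing effect as $n$ grows lets me drop $W$ in the limit, so the inner and outer regions share a common limit.

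The main obstacle I anticipate is making the indecomposability argument rigorous at the level of \emph{regions} rather than single rate constraints: one must show that the per-constraint approximations (initial-state forgetting, penalty decay, auxiliary-$W$ removal) combine to give convergence of the full two-dimensional achievable sets in the prescribed set-convergence topology, and that the superadditivity/Fekete machinery is compatible with the $s_0$-minimization appearing inside $\underline{\mathcal{R}}_n$. Care is also needed because the minimizing $s_0$ and the optimizing input distribution may both depend on $n$; the cleanest route is to transfer everything onto the clean sequence $\mathcal{R}_n$ (where superadditivity is transparent) and then use the indecomposability estimate only to certify that $\underline{\mathcal{R}}_n$ and $\mathcal{R}_n$ have the same limit.
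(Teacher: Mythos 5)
Your overall architecture is the paper's: replace directed information by mutual information via Lemma~\ref{l_directed_mutaul_if_no_feedback}, sandwich the capacity region between the inner bound of Theorem~\ref{t_inner_bound} and the outer bound $\liminf \mathcal{R}_n$ of Corollary~\ref{c_outer_bound}, use a set-valued Fekete argument (Lemma~\ref{l_supadditive_region}) to get existence of the limit, and use indecomposability plus the vanishing-distance Lemma~\ref{l_distance_goes_to_zero} to identify the two limits. However, the one place where you commit to a concrete choice --- proving sup-additivity of the ``clean'' sequence $\mathcal{R}_n$ and calling it transparent --- is exactly where the argument breaks for a general indecomposable FS-MAC. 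When the state may depend on the inputs, concatenating an $n$-block input distribution with an $l$-block one does \emph{not} give $(n+l)\mathcal{R}_{n+l}\supseteq n\mathcal{R}_n+l\mathcal{R}_l$ in any obvious way: in the chain-rule decomposition of $I(X_1^{n+l};Y^{n+l}|X_2^{n+l})$, the second block sees the channel started from the random, input-dependent state $S_n$, not from the initial condition under which $\mathcal{R}_l$ was computed. The paper resolves this by proving sup-additivity of $\underline{\mathcal{R}}_n$ instead (Lemma~\ref{l_supadditive_Rn}): the $\min_{s_0}$ and the $-\frac{\log|\mathcal S|}{n}$ penalty in its definition are precisely what let one condition on $S_n$ (at a cost of $\log|\mathcal S|$ by Lemma~\ref{l_diff_cond_S}), pass to the worst realized state, and invoke stationarity to recover the $l$-block term. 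The existence of $\lim\mathcal{R}_n$ is then \emph{inherited} from $\lim\underline{\mathcal{R}}_n$ via the indecomposability estimate (Lemma~\ref{l_indecomposible}) and Lemma~\ref{l_distance_goes_to_zero}, rather than proved directly. (Sup-additivity of $\mathcal{R}_n$ itself does hold, and is used, in the later Section on stationary Markovian MACs --- but only because there the state evolves independently of the inputs, which is not assumed here.)

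Your remaining steps --- the $\frac{\log|\mathcal S|}{n}$ penalty vanishing, the auxiliary $W$ having asymptotically no effect, and the per-constraint gaps combining into Hausdorff convergence of the two-dimensional regions --- all match the paper's treatment and are handled by the lemmas you cite. So the fix is small but essential: run the Fekete/concatenation argument on $\underline{\mathcal{R}}_n$ (or its $W$-free version $\underline{\mathcal{R}}_n^o$), not on $\mathcal{R}_n$, and let indecomposability do only the job of showing $d(\underline{\mathcal{R}}_n^o,\mathcal{R}_n)\to 0$.
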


\begin{theorem}\label{t_capacity_feedback}
({\it Capacity of FS-MAC with feedback.}) For any FS-MAC of the
form
\begin{equation}\label{e_channel_Markov}
P(y_i,s_i|x_{1i},x_{2,i},s_{i-1})=P(s_i|s_{i-1})P(y_i|x_{1i},x_{2,i},s_{i-1}),
\end{equation}
where the state process $S_i$ is stationary and ergodic, the
achievable region is
%\begin{eqnarray}
$\lim_{n\to \infty} \mathcal R_n,$
%\end{eqnarray}
and the limit exists.
\end{theorem}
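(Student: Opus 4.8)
The plan is to prove Theorem \ref{t_capacity_feedback} by sandwiching the true capacity region between the inner bound $\underline{\mathcal R}_n$ of Theorem \ref{t_inner_bound} and the outer bound $\liminf \mathcal R_n$ of Theorem \ref{t_outer_bound}, and then showing that both converge to the same limit $\lim_{n\to\infty}\mathcal R_n$. The first task is therefore to establish that the limit $\lim_{n\to\infty}\mathcal R_n$ actually exists. The natural tool is a super-additivity argument for the sequence of sets $\{\mathcal R_n\}$ (or rather $\{n\mathcal R_n\}$), in the spirit of Fekete's lemma lifted to sequences of convex sets as developed in Appendix \ref{s_app_supadditive}. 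Concretely, I would show that a code-tree distribution achieving a point in $\mathcal R_m$ and one achieving a point in $\mathcal R_k$ can be concatenated (this is precisely the code-tree concatenation idea extended from Lapidoth--Telatar) to produce a distribution of block length $m+k$ whose induced directed-information rates dominate the appropriate combination, yielding $n\mathcal R_n \supseteq m\mathcal R_m + k\mathcal R_k$ in the Minkowski-sum sense up to vanishing terms. Super-additivity of the sets then forces convergence of $\mathcal R_n$.

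Next I would connect the inner and outer bounds to this common limit. For the outer bound, Theorem \ref{t_outer_bound} already gives $\liminf \mathcal R_n$, so once the limit is shown to exist this is simply $\lim_n \mathcal R_n$. For the inner bound, the gap between $\underline{\mathcal R}_n$ and $\mathcal R_n$ consists of two pieces: the auxiliary variable $W$ (convexification) and the penalty terms $\min_{s_0}(\cdot) - \frac{\log|\mathcal S|}{n}$. The remark following \eqref{e_def_underline_Rn} asserts that the effect of $W$ vanishes with $n$, so I would invoke that. The genuinely substantive step is to show that the $\min_{s_0}$ over initial states and the $-\frac{\log|\mathcal S|}{n}$ correction both become immaterial in the limit. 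This is exactly where the hypotheses of the theorem---that the channel factors as in \eqref{e_channel_Markov} with a stationary ergodic, \emph{input-independent} state process---enter. Because the state evolves autonomously according to $P(s_i|s_{i-1})$, the directed informations $\frac1n I(X_1^n\to Y^n\|X_2^n,W,s_0)$ for different initial states $s_0$ differ by a quantity controlled by $H(S)\le \log|\mathcal S|$ via Lemma \ref{l_diff_cond_S}, so the $\min_{s_0}$ and the unconditioned (or averaged-over-$s_0$) versions coincide asymptotically; the additive $\frac{\log|\mathcal S|}{n}$ term vanishes for the same reason.

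The key point to handle carefully is the interchange of the $\liminf$ in the outer bound with the $\min_{s_0}$ in the inner bound, i.e. verifying that the two sequences of regions have a \emph{common} limit rather than merely nested liminf/limsup. Here I would argue that stationarity and ergodicity of $S_i$ guarantee that the per-symbol directed-information quantities stabilize: the single-letterization is not available, but the multi-letter rates $\frac1n I(\cdot\to\cdot\|\cdot)$ form a convergent sequence because of the super-additivity established above together with boundedness. I would then show
\begin{equation}
\underline{\mathcal R}_n \subseteq \mathcal C \subseteq \mathcal R_n + \epsilon_n,
\end{equation}
where $\mathcal C$ is the capacity region, and that the left-hand and right-hand sequences share the limit $\lim_n \mathcal R_n$, completing the proof.

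The main obstacle I anticipate is the super-additivity / set-convergence machinery for $\{\mathcal R_n\}$: one must show that concatenating code-trees of lengths $m$ and $k$ really does produce an admissible code-tree of length $m+k$ whose causally-conditioned distribution factorizes correctly, and that the resulting directed-information rates add up sub-additively (for the outer bound) and super-additively (for the inner bound) modulo terms of order $\frac{\log|\mathcal S|}{n}$. The ergodicity of the state process is what makes the boundary states between concatenated blocks contribute negligibly; pinning down that this contribution is $O(\log|\mathcal S|/n)$ uniformly, and that the limit of sets exists in the sense of Appendix \ref{s_app_supadditive}, is the technical heart of the argument.
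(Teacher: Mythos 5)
Your overall architecture matches the paper's: establish existence of $\lim_n\mathcal R_n$ by sup-additivity of $\{n\mathcal R_n\}$ (exact for the stationary Markov state process, since concatenating $Q_{n+k}=Q_nQ_k$ splits the directed information additively, and then Lemma \ref{l_supadditive_region} applies), and then show the inner bound $\underline{\mathcal R}_n$ and the outer bound $\liminf\mathcal R_n$ share that limit by driving the Hausdorff distance between $\underline{\mathcal R}_N$ and $\mathcal R_N$ to zero (Lemma \ref{l_distance_goes_to_zero}). However, there is a genuine gap in the step you identify as substantive: you claim that the directed informations for different initial states $s_0$ ``differ by a quantity controlled by $H(S)\le\log|\mathcal S|$ via Lemma \ref{l_diff_cond_S},'' so that $\min_{s_0}$ can be replaced by the unconditioned quantity. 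Lemma \ref{l_diff_cond_S} does not say this. It bounds $|I(X_1^N\to Y^N\|X_2^N)-I(X_1^N\to Y^N\|X_2^N,S)|$, i.e.\ the gap between the unconditioned directed information and the one conditioned on $S$ \emph{as a random variable} (an average over $s_0$). It gives no control on the spread $\max_{s_0}I(\cdot|s_0)-\min_{s_0}I(\cdot|s_0)$: the worst initial state could in principle depress the rate by $\Theta(N)$, and the average being within $\log|\mathcal S|$ of the unconditioned value is perfectly consistent with that. (Compare Lemma \ref{l_indecomposible}, which does control the spread across initial states, but only for the no-feedback indecomposable case and by an entirely different argument.)

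The missing idea is the paper's prefix construction, which is where ergodicity actually enters. Given $\overline Q$ achieving a point of $\mathcal R_N$, one builds $\underline Q$ on $n+N$ symbols whose first $n$ inputs are arbitrary and whose last $N$ inputs follow $\overline Q$; ergodicity of the autonomous state chain gives $|P(s_n|s_0)-P(s_n)|\le\delta$ uniformly in $s_0$ for $n$ large, so after the warm-up block the chain has forgotten $s_0$, and then Lemma \ref{l_diff_cond_S} (applied to conditioning on $S_n$), monotonicity of entropy under conditioning, and stationarity of the channel yield
\begin{equation}
\min_{s_0}I_{\underline Q}\bigl(X_1^{N+n}\to Y^{N+n}\,\big\|\,X_2^{N+n},s_0\bigr)\;\ge\;I_{\overline Q}\bigl(X_1^{N}\to Y^{N}\,\big\|\,X_2^{N}\bigr)-\delta(N+n)\log|\mathcal Y|-2\log|\mathcal S|,
\end{equation}
which after normalizing by $N+n$ is what closes the gap between the two regions. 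You gesture at this (``ergodicity \ldots makes the boundary states contribute negligibly'') but never instantiate the mechanism, and the justification you do give is the wrong lemma; as written the argument would fail for a channel whose state chain is not ergodic, which is exactly the case the hypothesis is there to exclude. A secondary, smaller issue: you propose sup-additivity of $\{n\mathcal R_n\}$ only ``up to vanishing terms,'' whereas Lemma \ref{l_supadditive_region} requires exact sup-additivity; for this channel class the exact version holds and should be used, otherwise you would need to reprove the set-convergence lemma in an approximate form.
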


The next theorems will be seen to be consequences  of the capacity
theorems given above.
\begin{theorem}  \label{t_capacity_zero} For the channel described in
(\ref{e_channel_Markov}), where the state process $s_i$ is
stationary and ergodic, if the capacity without feedback is zero,
then it is also zero in the case that there is feedback.
\end{theorem}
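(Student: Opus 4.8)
The plan is to reduce the claim to a statement about the sum-rate and then to show that a vanishing no-feedback capacity forces the channel to be non-informative in a manner that no feedback strategy can exploit.

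First I would observe that the capacity region (with or without feedback) equals $\{(0,0)\}$ if and only if the corresponding sum-rate capacity is zero: in both regions of (\ref{e_def_Rn}) the two single-user bounds are dominated by the sum bound, since $I(X_1^n\to Y^n||X_2^n)\le I((X_1,X_2)^n\to Y^n)$ and likewise for the other user (each follows from a chain-rule expansion of $I((X_1,X_2)^n\to Y^n)$ term by term). By Theorem \ref{t_no_feedback} together with Lemma \ref{l_directed_mutaul_if_no_feedback}, the hypothesis reads $\lim_n \max_{Q(x_1^n)Q(x_2^n)}\frac{1}{n} I(X_1^n,X_2^n;Y^n)=0$, and by Theorem \ref{t_capacity_feedback} (equivalently, by the outer bound of Theorem \ref{t_outer_bound}) it suffices to prove that $\lim_n \max \frac{1}{n} I((X_1,X_2)^n\to Y^n)=0$, where the maximum is now over the feedback distributions $Q(x_1^n||z_1^{n-1})Q(x_2^n||z_2^{n-1})$.

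Next I would try to upper bound the feedback directed information by a no-feedback quantity. The always-valid bound $I((X_1,X_2)^n\to Y^n)\le I((X_1,X_2)^n;Y^n)$ strips away the directed structure, and Lemma \ref{l_diff_cond_S} lets me condition on the initial state at a cost of only $\log|\mathcal S|$, which is negligible after normalization by $n$. The difficulty is that the feedback-induced input marginal $P(x_1^n,x_2^n)$ is correlated, so this bound lands in the joint-input (``super-channel'') regime, whose capacity can strictly exceed the product-input MAC sum capacity; indeed, by Gaarder and Wolf \cite{Wolf75}, feedback does increase the sum rate of a general MAC. Consequently no generic ``feedback $\le$ no-feedback'' inequality can hold, and the zero hypothesis must enter the argument essentially.

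The crux, and the step I expect to be the main obstacle, is therefore to show that a zero no-feedback sum capacity forces the output process to be asymptotically independent of the inputs even under the correlation that feedback creates between the two encoders. Here I would exploit that the state is stationary, ergodic, and independent of the inputs, as in (\ref{e_channel_Markov}): a zero product-input rate should mean that the state carries no learnable side information, for otherwise, over a long block, ergodicity would let even independent inputs harvest the per-state informativeness at a strictly positive rate, contradicting the hypothesis. Making ``unlearnable'' precise, quantifying it so that it simultaneously rules out a product-input rate and a feedback-cooperation rate, and neutralizing the initial-state dependence through indecomposability, is the heart of the matter. Once the output is shown to be asymptotically independent of the inputs, $P(y^n||x_1^n,x_2^n)$ ceases to depend on $(x_1^n,x_2^n)$, so by (\ref{e_directed_def}) and Lemma \ref{l_joint_causal_condition_decomposition} every feedback directed information $I((X_1,X_2)^n\to Y^n)$ vanishes, and the feedback capacity region collapses to $\{(0,0)\}$, completing the proof.
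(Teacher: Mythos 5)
Your reduction to the sum rate and your diagnosis that a generic ``feedback $\le$ no-feedback'' inequality cannot work are both sound, but the step you yourself flag as the crux is exactly the missing idea, and the route you sketch for it (an ergodic-theoretic argument that the state is ``unlearnable,'' yielding only \emph{asymptotic} independence of output from inputs) is not what is needed and would be considerably harder to push through. The paper closes the gap with two elementary observations. First, because $n\mathcal{R}_n$ is sup-additive (Lemma \ref{l_supadditive_Rn}), Lemma \ref{l_supadditive_region} gives $\lim_n \mathcal{R}_n=\mathrm{cl}\bigl(\bigcup_{n\ge 1}\mathcal{R}_n\bigr)$; hence a zero limiting capacity forces the \emph{exact} finite-$n$ statement $\max_{Q(x_1^n)Q(x_2^n)} I(X_1^n,X_2^n\to Y^n)=0$ for every $n$, not merely a vanishing normalized limit. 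Your proposal never makes this conversion, which is why you are left chasing asymptotics.

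Second, at each fixed $n$ the implication is purely algebraic and needs no ergodicity at all (Lemma \ref{l_zero_iff}): take $X_1^n,X_2^n$ uniform on their alphabets, write $I(X_1^n,X_2^n\to Y^n)$ as a Kullback--Leibler divergence, and conclude from its vanishing that $P(y^n||x_1^n,x_2^n)=P(y^n)$ for \emph{all} $(x_1^n,x_2^n)$ -- the channel law is exactly, not asymptotically, independent of the inputs. Then for \emph{any} causally conditioned input $Q(x_1^n||y^{n-1})Q(x_2^n||y^{n-1})$ one has $I(X_1^n,X_2^n\to Y^n)=\mathbf{E}\bigl[\log\bigl(P(Y^n||X_1^n,X_2^n)/P(Y^n)\bigr)\bigr]=\mathbf{E}[0]=0$, which kills the outer bound of Theorem \ref{t_outer_bound} and hence the feedback capacity. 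Stationarity and ergodicity of the state enter only through Theorem \ref{t_capacity_feedback} (to identify the feedback region with $\lim_n\mathcal{R}_n$) and through the sup-additivity; they play no role in the ``unlearnability'' step you were trying to construct. As written, your argument is incomplete at precisely the point where all the content lies.
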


\begin{corollary}\label{c_memoryless_zero} For a memoryless MAC, the capacity with feedback
is zero if and only if it is zero without feedback.
\end{corollary}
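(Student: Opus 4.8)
The plan is to read the corollary as the specialization of Theorem~\ref{t_capacity_zero} to the case of a trivial state, and to dispatch the two implications of the ``if and only if'' separately, since only one of them is a genuine consequence of the theorem while the other is elementary.

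First I would cast the memoryless MAC, with transition law $P(y_i|x_{1i},x_{2i})$, as an instance of the channel in~(\ref{e_channel_Markov}). Taking the state alphabet to be a singleton, $|\mathcal{S}|=1$, the factor $P(s_i|s_{i-1})$ is the point mass on the unique state and $P(y_i|x_{1i},x_{2i},s_{i-1})=P(y_i|x_{1i},x_{2i})$, which recovers the memoryless law exactly. The constant state process is (trivially) stationary and ergodic, so the hypotheses of Theorem~\ref{t_capacity_zero} are met. The implication ``capacity zero without feedback $\Rightarrow$ capacity zero with feedback'' then follows immediately from that theorem.

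For the converse implication, I would invoke the elementary monotonicity of the capacity region under the availability of feedback. Any code without feedback is a feedback code whose encoders happen to ignore the feedback samples $z_l^{i-1}$; equivalently, every product distribution $Q(x_1^n)Q(x_2^n)$ arises as a causally conditioned pair $Q(x_1^n||z_1^{n-1})Q(x_2^n||z_2^{n-1})$ with no dependence on the feedback. Hence the no-feedback region is contained in the feedback region, and if the latter collapses to the single point $(0,0)$ then so must the former. This gives ``capacity zero with feedback $\Rightarrow$ capacity zero without feedback,'' completing the equivalence.

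The result is essentially immediate given the machinery already in place, so there is no substantive obstacle; the only points requiring care are verifying that the degenerate single-state channel genuinely satisfies the stationarity and ergodicity requirement of Theorem~\ref{t_capacity_zero}, and that ``capacity is zero'' is interpreted uniformly throughout as the capacity region reducing to $\{(0,0)\}$. Both are settled by directly unwinding the definitions of Section~\ref{sec: Channel Model}, and I would state them explicitly to keep the short argument self-contained.
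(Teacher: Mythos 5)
Your proposal is correct and follows the paper's own route: the paper likewise disposes of the corollary by viewing the memoryless MAC as an FS-MAC with a single state and invoking Theorem~\ref{t_capacity_zero}, with the reverse implication being the trivial monotonicity of the capacity region under feedback. Your write-up merely makes explicit the two directions that the paper leaves implicit.
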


\begin{corollary}\label{c_additive}
Feedback does not enlarge the capacity region of a discrete additive
(mod-$|\mathcal X|$) noise MAC.
\end{corollary}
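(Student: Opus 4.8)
**Proof proposal for Corollary \ref{c_additive} (feedback does not enlarge the capacity region of the additive mod-$|\mathcal X|$ noise MAC).**

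The plan is to realize the additive mod-$q$ noise MAC (with $q=|\mathcal X|$) as a special case of the channel model in (\ref{e_channel_Markov}) and then invoke Theorem \ref{t_capacity_feedback}, which asserts that for such channels the capacity region equals $\lim_{n\to\infty}\mathcal R_n$ \emph{regardless of whether feedback is present}. Recall that in this channel the output is $Y_i = X_{1i}\oplus X_{2i}\oplus N_i$, where $\oplus$ is addition mod-$q$ and $\{N_i\}$ is a noise process (possibly with memory) that is independent of the inputs. First I would take the state $S_i$ to be whatever finite memory drives the noise process; since the corollary statement restricts to a noise process realizable as a finite-state ergodic source, $\{N_i\}$ is a function of a stationary ergodic finite-state Markov chain $\{S_i\}$ with $S_i\to S_{i-1}$ transitions that do not depend on the inputs. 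This gives exactly the factorization $P(y_i,s_i|x_{1i},x_{2i},s_{i-1})=P(s_i|s_{i-1})P(y_i|x_{1i},x_{2i},s_{i-1})$ required by Theorem \ref{t_capacity_feedback}.

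The key observation is that for this channel the directed-information quantities defining $\mathcal R_n$ collapse to \emph{ordinary} mutual informations that are insensitive to feedback, because the additive structure makes the output a deterministic invertible function of the inputs given the noise. Concretely, I would argue that for any causally-conditioned input distribution $Q(x_1^n\|z_1^{n-1})Q(x_2^n\|z_2^{n-1})$,
\begin{equation}
I((X_1,X_2)^n\to Y^n)\;=\;H(Y^n)-H(N^n),
\end{equation}
and similarly the two single-user directed informations reduce to $H(Y^n|X_2^n)-H(N^n)$ and $H(Y^n|X_1^n)-H(N^n)$. The crucial point is that the subtracted term $H(N^n)$ is a fixed property of the noise process, unaffected by the code, while the first term is an entropy of $Y^n$ that is maximized by a uniform i.i.d.\ output. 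Since $X_{1i}\oplus X_{2i}$ can be made uniform on $\mathcal X$ by an input that is itself uniform and independent across time (which requires no feedback), the maximizing input distribution lies in the no-feedback class, so the feedback and no-feedback versions of $\mathcal R_n$ share the same boundary.

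First I would establish the entropy identities above by using the chain $I(X^i;Y_i|Y^{i-1})=H(Y_i|Y^{i-1})-H(Y_i|Y^{i-1},X_1^i,X_2^i)$ and noting that given the full input history the residual uncertainty in $Y_i$ is exactly that of $N_i$ given its past; summing over $i$ yields the $H(Y^n)-H(N^n)$ form, and the feedback link cannot reduce $H(N^n)$ because the noise is input-independent. Then I would verify that the outer bound of Theorem \ref{t_outer_bound} and the inner bound $\underline{\mathcal R}_n$ of Theorem \ref{t_inner_bound}, once specialized, are both governed by these same feedback-independent expressions, so that $\lim_n\mathcal R_n$ is identical with and without feedback. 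The main obstacle I anticipate is the rigorous handling of the maximization over causally-conditioned distributions: I must show that allowing $Q$ to depend on the feedback $z^{i-1}$ cannot increase any of the three $H(Y^n)$, $H(Y^n|X_1^n)$, $H(Y^n|X_2^n)$ terms beyond what a feedback-free uniform-input distribution achieves. This is where the additive (group) structure is essential — it guarantees that a uniform, memoryless, feedback-free input already attains the maximal output entropy $n\log q$, so no feedback-adapted strategy can do better, and the two regions coincide.
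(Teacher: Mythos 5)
Your proposal is correct, but it takes a genuinely different route from the paper. The paper proves Corollary \ref{c_additive} by viewing the additive MAC as a multiplexer (mod-$q$ addition of the inputs) concatenated with a point-to-point additive-noise channel, then invoking Theorem \ref{t_MAC_MUX} (achievability by time-sharing, converse by a fictitious-multiplexer/data-processing argument) together with Alajaji's result that feedback does not increase the capacity of the point-to-point mod-$q$ additive channel. You instead evaluate the multi-letter directed-information region directly: each of the three terms defining $\mathcal R_n$ equals a (causally conditioned) output-entropy term minus $H(N^n)$, the subtracted term is code-independent because the noise is independent of the messages, and uniform i.i.d.\ feedback-free inputs already drive the output entropy to $n\log q$; this is precisely the calculation the paper performs for the Gilbert--Elliot MAC in (\ref{e_directed_converse}), generalized to alphabet size $q$. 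The trade-off: the paper's route is more modular and needs nothing about the noise beyond stationarity/ergodicity and the cited point-to-point result, whereas your route is self-contained within the paper's machinery (it effectively re-proves the point-to-point fact) and yields the explicit region $\sum_m R_m\leq \log q - H(\mathcal V)$, but it leans on Theorem \ref{t_capacity_feedback} and therefore requires the noise to be generated by a stationary ergodic finite-state Markov chain, a restriction the paper's proof avoids. Two small imprecisions worth fixing: $\sum_i H(Y_i|Y^{i-1},X_2^i)$ is a causally conditioned entropy, not $H(Y^n|X_2^n)$ (harmless here since both are bounded by $n\log q$ and saturated by uniform inputs); and for the no-feedback achievability of the limiting region you should appeal to Theorem \ref{t_inner_bound} with the uniform i.i.d.\ input (the $\min_{s_0}$ and $\tfrac{\log|\mathcal S|}{n}$ corrections vanish in the limit) rather than assert that the two versions of $\mathcal R_n$ coincide at every finite $n$.
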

In fact, among other results, we will see in the next section that
the (mod-$|\mathcal X|$) noise MAC is only a subset of a larger
family of MACs for which feedback does not enlarge the capacity
region.

%The proof of this theorem is based on following lemma which  also
%stated in \cite{ShraderISIT07}.
%\begin{lemma}
%\begin{equation}\label{eqn:0iff}
%\max_{Q(x^n||y^{n{-}1})}I(X^n{\to} Y^n){=}0 \iff
%\max_{Q(x^n)}I(X^n{\to} Y^n){=}0
%\end{equation}
%and each condition also implies that $P(y^n||x^n)=P(y^n)$ for all
%$x^n$.
%\end{lemma}

\section{Applications} \label{sec: applications}
The capacity formula of a FS-MAC given in Theorems
\ref{t_no_feedback} and \ref{t_capacity_feedback} is a
multi-letter characterization. In general, it is very hard to
evaluate it but, for the finite state point to point channel,
there are several cases where the capacity with and without
feedback was found numerically
\cite{Mushkin89}\cite{goldsmith96capacity}, \cite{Yang05},
\cite{Chen05} and analytically
\cite{Permuter06_trapdoor_submit}.\footnote{For the Gaussian case
without feedback there exists the water filling solution
\cite{Shannon48}, and recently the feedback capacity was found
analytically, for the case that the noise is an ARMA(1)-Gaussian
process (cf. \cite{Kim06_MA,yhk06,YangKavcicTatikondaGaussian}).}

The multi-letter capacity expression is also valuable for deriving
useful concepts in communication. For instance, in order to show
that feedback does not increase the capacity of a memoryless
channel (cf. \cite{shannon56}), we can use the multi-letter upper
bound of a channel with memory. Further, in
\cite{Permuter06_feedback_submit} it was shown that for the cases
where the capacity is given by  the multi-letter expression $
 C =\lim_{N \to \infty} \frac{1}{N}  \max_{Q(x^N||z^{N-1})} I(X^N
\rightarrow Y^N )$, the source-channel coding separation holds. It
was also shown that if the state of the channel is known at both
the encoder and decoder and the channel is connected (i.e., every
state can be reached with some positive probability from every
other state under some input distribution), then feedback does not
increase the capacity of the channel.

In this section we use the capacity formula in order to derive
three conclusions:
\begin{enumerate}
\item For a stationary ergodic Markovian channels, the
capacity is zero if and only if the capacity with feedback is zero.
\item Identify FS-MACs that feedback does not enlarge the capacity and show that for a MAC  that can be decomposed into a
`multiplexer' concatenated by a point-to-point channel (with,
without, or with partial feedback), the capacity region is given
by $\sum_{m} R_m \leq C$, where $C$ is the capacity of the point
to point channel.
\item Source-channel coding separation
  holds for a MAC  that can be decomposed into a
`multiplexer' concatenated by a point-to-point channel (with,
without, or with partial feedback).
%
%\item Second concept: for a
%discrete additive(mod-$q$) noise MAC feedback does not enlarge the
%capacity region, which  is given by $R_1+R_2\leq \log q-H(\mathcal
%V)$ where $V_i$ is a stationary and ergodic noise process,
%$H(\mathcal V)$ is its entropy rate,  and $q$ is the alphabet size
%of the noise, inputs and outputs.
\end{enumerate}

As a special case of the second concept we show that the capacity
of a Binary Gilbert-Ellliot MAC is $R_1+R_2\leq 1-H(\mathcal V)$
where $\mathcal V$ is the entropy rate of the hidden Markov noise
that specifies the Binary Gilbert-Ellliot MAC.

\subsection{Zero capacity}
The first concept is given in Theorem \ref{t_capacity_zero} and is
proved here. The proof of Theorem \ref{t_capacity_zero} is based
on the following lemma which is proven in Appendix
\ref{s_app_proof_of_lemma_zero_iff}.
\begin{lemma} \label{l_zero_iff}
For a MAC described by an arbitrary causal conditioning
$p(y^n||x_1^n,x_2^n)$ the following holds:
\begin{equation}\label{eqn:0iff}
\max_{Q(x_1^n||y^{n-1})Q(x_2^n||y^{n-1})} I(X_1^n,X_2^N\to Y^n)=0
 \iff
\max_{Q(x_1^n)Q(x_2^n)} I(X_1^n,X_2^N\to Y^n)=0,
\end{equation}
and each condition also implies that $P(y^n||x_1^n,x_2^n)=P(y^n)$
for all $x_1^n,x_2^n$.
\end{lemma}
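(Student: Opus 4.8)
The plan is to prove the chain of implications
\[
\max_{Q(x_1^n)Q(x_2^n)} I(X_1^n,X_2^n\to Y^n)=0
\;\Longrightarrow\;
P(y^n\|x_1^n,x_2^n)=P(y^n)\ \forall\, x_1^n,x_2^n
\;\Longrightarrow\;
\max_{Q(x_1^n\|y^{n-1})Q(x_2^n\|y^{n-1})} I(X_1^n,X_2^n\to Y^n)=0,
\]
and then close the loop with the easy direction. The equivalence in \eqref{eqn:0iff} follows once these are in place, since the first maximization is over a strictly larger class of input distributions (causal-conditioning pmfs subsume product pmfs by taking $Q(x_l^n\|y^{n-1})=Q(x_l^n)$), so the causally-conditioned maximum being zero trivially forces the product-input maximum to be zero as well.

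First I would handle the direction that does the real work: assuming the product-input directed information is zero, I want to deduce that the channel is degenerate, i.e.\ $P(y^n\|x_1^n,x_2^n)$ does not depend on the inputs. Using the expectation form $I(X_1^n,X_2^n\to Y^n)=\mathbf{E}\big[\log \tfrac{P(Y^n\|X_1^n,X_2^n)}{P(Y^n)}\big]$ from \eqref{e_directed_def}, the quantity is a relative entropy (an average of $D\big(P(y^n\|x_1^n,x_2^n)\,\|\,P(y^n)\big)$ weighted by the input pmf), hence nonnegative, and it vanishes only when $P(y^n\|x_1^n,x_2^n)=P(y^n)$ for every $(x_1^n,x_2^n)$ that receives positive probability. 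The key step is to upgrade ``every input pair of positive probability'' to ``every input pair'': I would choose a product input distribution $Q(x_1^n)Q(x_2^n)$ of full support (e.g.\ uniform on the finite alphabets $\mathcal X_1^n\times\mathcal X_2^n$), so that the vanishing of the maximum over all product distributions forces the vanishing for the full-support choice in particular, and therefore $P(y^n\|x_1^n,x_2^n)$ equals a fixed pmf $P(y^n)$ for all inputs. This already proves the ``each condition implies $P(y^n\|x_1^n,x_2^n)=P(y^n)$'' clause for the product-input condition.

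Next I would close the second implication: if $P(y^n\|x_1^n,x_2^n)$ is independent of $(x_1^n,x_2^n)$, then for \emph{any} causal-conditioning input law $Q(x_1^n\|y^{n-1})Q(x_2^n\|y^{n-1})$ the induced output marginal $P(y^n)$ still coincides with that common channel pmf, so the integrand $\log\tfrac{P(y^n\|x_1^n,x_2^n)}{P(y^n)}$ is identically zero and the directed information vanishes for every such input; taking the maximum gives zero. The main subtlety to be careful about here is that under causal conditioning (feedback) the output marginal $P(y^n)=\sum_{x_1^n,x_2^n} Q(x_1^n\|y^{n-1})Q(x_2^n\|y^{n-1})P(y^n\|x_1^n,x_2^n)$ is formed with a causally-conditioned input, not a product input, so one must verify that degeneracy of the channel kernel genuinely makes this marginal input-independent. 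This is the step I expect to be the main obstacle, because directed information is not a simple divergence when feedback is present; the clean resolution is precisely that once the kernel $P(y^n\|x_1^n,x_2^n)$ is constant in the inputs, the causal-conditioning structure collapses—the recursion $P(y_i\mid y^{i-1},x_1^i,x_2^i)$ loses its dependence on $x_1^i,x_2^i$ at every step—so $P(y^n)$ reduces to that same fixed pmf regardless of how the inputs are generated from past outputs. I would make this explicit via the factorization in Lemma \ref{l_joint_causal_condition_decomposition}, which lets me write the joint law as $P(x_1^n,x_2^n,y^n)=Q(x_1^n\|y^{n-1})Q(x_2^n\|y^{n-1})P(y^n\|x_1^n,x_2^n)$ and then marginalize over the inputs cleanly, confirming that the marginal is the degenerate kernel and hence the directed information is zero.
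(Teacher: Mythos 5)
Your proposal is correct and follows essentially the same route as the paper's proof: the easy direction by inclusion of the product inputs in the causally-conditioned class, the hard direction by evaluating the directed information at a full-support (uniform) product input and using that a vanishing Kullback--Leibler divergence forces $P(y^n||x_1^n,x_2^n)=P(y^n)$ for all inputs, and then concluding that the integrand $\log\frac{P(y^n||x_1^n,x_2^n)}{P(y^n)}$ is identically zero under any causally-conditioned input. Your explicit verification (via Lemma \ref{l_joint_causal_condition_decomposition}) that the output marginal under feedback still collapses to the degenerate kernel is a point the paper leaves implicit, but it is the same argument.
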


{\it Proof of Theorem \ref{t_capacity_zero}:} Since the channel is
a Markovian channel, i.e.,
\begin{equation}%\label{e_channel_Markov}
P(y_i,s_i|x_{1,i},x_{2,i},s_{i-1})=p(s_i|s_{i-1})P(y_i|x_{1,i},x_{2,i},s_{i-1})
\end{equation}
and stationary and ergodic, its capacity region is given in
Theorem \ref{t_capacity_feedback} as $C=\lim_{n\to \infty}
\mathcal R_n$. Furthermore, since the sequence
 $\{\mathcal R_n\}$ is sup-additive (Lemma \ref{l_supadditive_Rn}), then according to Lemma \ref{l_supadditive_region} that is given in Appendix \ref{s_app_supadditive} $\lim_{n\to \infty} \mathcal R_n=\text{cl}\left(\bigcup_{n\geq1}\mathcal
 R_n\right)$, implying that if the capacity without feedback is zero, then for all $n\geq1$
\begin{equation}\label{e_Izero}
\max_{Q(x_1^n)Q(x_2^n)} I(X_1^n,X_2^N\to Y^n)=0.
\end{equation}
According to Lemma \ref{l_zero_iff}, the maximization of the
objective in eq. (\ref{e_Izero}) over the distribution
${Q(x_1^n||y^{n-1})Q(x_2^n||y^{n-1})}$ is still zero,  hence, the
capacity region is zero even if there is perfect feedback. \hfill
\QED

Corollary \ref{c_memoryless_zero}, which states that the capacity
of a memoryless MAC without feedback is zero if and only if the
capacity with feedback is zero, follows immediately from Theorem
\ref{t_capacity_zero} because a memoryless MAC can be considered a
FS-MAC with one state.

Clearly, Theorem \ref{t_capacity_zero} also holds for the case of
a stationary and ergodic FS-Markov point-to-point channel because
a MAC is an extension of a point-to-point channel. However, it
does not hold for the case of a broadcast channel. For instance,
consider the binary broadcast channel given by $y_{1,i}=x\oplus
n_{i}$ and $y_{2,i}=x\oplus n_{i-1}$, where $n_i$ is an i.i.d
Bernoulli$(\frac{1}{2})$ and $\oplus$ denotes addition mod-2. The
capacity without feedback is clearly zero, but if the transmitter
has feedback, namely if it knows $y_{1,i-1}$ and $y_{2,i-1}$ at
time $i$, then it can compute the noise $n_{i-1}=y_{1,i-1}\oplus
x_{i-1}$ and therefore it can transmit 1 bit per channel use to
the second user.

\subsection{Examples of channels for which feedback does not enlarge capacity}

\subsubsection{Gilbert-Elliot MAC}
\begin{figure}[h]{
 \psfrag{P1}[][][0.8]{$\alpha$}
\psfrag{P2}[][][0.8]{$1-\alpha$} \psfrag{P3}[][][0.8]{$\beta$}
\psfrag{P4}[][][0.8]{$1-\beta$} \psfrag{G}[][][1.2]{G}
\psfrag{B}[][][1.2]{B} \psfrag{x1}[][][0.8]{$X_1$}
 \psfrag{x2}[][][0.8]{$X_2$}
 \psfrag{zg}[][][0.8]{$V\sim \text{Bernouli}(p_G)$}
 \psfrag{zb}[][][0.8]{$V\sim \text{Bernouli}(p_B)$}

%\centerline{\includegraphics[width=10cm]{C:/cygwin/home/haim1/Gilbert_elliot_mac.eps}}
\centerline{\includegraphics[width=6cm]{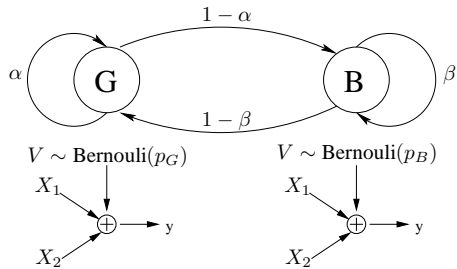}}

\caption{Gilbert-Elliot Mac. It has two states,``Good" and ``Bad"
where the transition between them is according to a first order
Markov process. Given that the channel is in a ``Good" (or a
``Bad") state, it behaves as binary additive noise where the noise
is Bernouli($p_G$) (or Bernouli($p_B$))
\label{f_Gillbert_elliot_mac}}
}\end{figure} %%when I save on visio I used AI 3 - and I had to restart the computer and not all visio features work with AI3

The Gilbert-Elliot channel is a widely used example of a finite
state channel. It is  often used to model wireless communication
in the presence of fading
\cite{Elliot63,Mushkin89,goldsmith96capacity}. The Gilbert-Elliot
is a Markov channel with two states, denoted as ``good" and
``bad". Each state is a binary symmetric channel and the
probability of flipping the bit is  lower in the ``good" state. In
the case of the Gillber-Elliot MAC (Fig.
\ref{f_Gillbert_elliot_mac}), each state is an additive MAC with
i.i.d noise, where in the ``good" channel the probability that the
noise is '1' is lower than in the bad channel. This channel can be
represented as an additive MAC as in Fig.
\ref{f_Gillbert_elliot_mac}, where the noise is a hidden Markov
process.

Since the Gilbert-Elliot MAC is an ergodic FS-MAC, its capacity
with feedback when the initial state distribution over the states
``good" and ``bad" is the stationary distribution is given by
$\lim_{n\to \infty} \mathcal R_n$ (Theorem
\ref{t_capacity_feedback}). For the Gilbert Elliot MAC, the region
$\lim_{n\to \infty} \mathcal R_n$ reduces to the simple region,
\begin{equation}
R_1+R_2\leq 1-H(\mathcal V) ,
\end{equation}
where $H(\mathcal V)$ denotes the entropy rate of the hidden Markov
noise. The following equalities and inequalities upper bound the
region $\mathcal R_n$ and this upper bound can be achieved for any
deterministic feedback by an i.i.d input distribution $X_{1,i}\sim
\text {Bernoulli}(\frac{1}{2})$ and $X_{2,i}\sim \text
{Bernoulli}(\frac{1}{2}), \; i=1,2,...,n$ and $X_1^n$ and $X_2^n$
are independent of each other.

\begin{eqnarray}\label{e_directed_converse}
I((X_1,X_2)^n\to Y^n) &= &\sum_{i=1}^n H(Y_i | Y^{i-1}) -
H(Y_i | Y^{i-1}, X_1^i,X_2^i) \nonumber \\
&\stackrel{(a)}{=}& \sum_{i=1}^n H(Y_i | Y^{i-1}) -
H(V_i|Y^{i-1},X_1^i,X_2^i) \nonumber \\
&\stackrel{}{=}& \sum_{i=1}^n H(Y_i|Y^{i-1}) - H(V_i | V^{i-1},
Y^{i-1},X_1^i,X_2^i)\nonumber \\
&\stackrel{(b)}{=}& \sum_{i=1}^n H(Y_i | Y^{i-1}) - H(V_i | V^{i-1}) \nonumber \\
%&\le& \sum_{i=1}^n [H(Y_i) - H(Z_i | Z^{i-1})] \nonumber \\
&\stackrel{(c)}{\le}& \sum_{i=1}^n \log 2 - H(V_i|V^{i-1}) \nonumber \\
&=& n (1 - \frac{H(V^n)}{n}).
\end{eqnarray}
Equality (a) is due to the facts that $y_i$ is a function of
$(v_i,x_{1,i},x_{2,i})$ and $v_i$ is a deterministic function of
$(y_i,x_{1,i},x_{2,i})$, i.e. $y_i=x_{1,i}\oplus x_{2,i}\oplus v_i$
and $v_i=y_i\oplus x_{1,i}\oplus x_{2,i}$. %where $\ominus$ denotes
%substitution mod-$2$.
Equality (b) follows from the fact that $v_i$ is independent of
the messages. Inequality (c) is due to the fact that the size of
the alphabet $\cal{Y}$ is $2$. Similarly $\frac{1}{n}I(X_1^n\to
Y^n||X_2^n) \leq 1 - \frac{H(V^n)}{n}$, and $\frac{1}{n}I(X_2^n\to
Y^n||X_1^n) \leq 1 - \frac{H(V^n)}{n}$ and equality is achieved
with an i.i.d input distribution Bernoulli$(\frac{1}{2})$.Finally,
by dividing both sides by $n$ and using the definition of entropy
rate $H(\mathcal V)=\lim_{n \to \infty}\frac{1}{n}H(V^n)$
we conclude the proof. %\hfill \QED
%\end{example}

%used in order t It has a state space consisting of `good' and
%`bad' states, ${\cal S}= \{G,B\}$ and in either of these two
%states, the channel is a binary symmetric channel (BSC).
%

% Hence, the achievable rate of
%this channel with and without feedback is given by,
%\begin{equation}
%\sum_{i=1}^m R_i\leq \log 2 - H(\mathcal Z).
%\end{equation}
%where $Z^n$ is hidden Markov process corresponding to the
%Gillber-Elliot MAC and $H(\mathcal Z)$ is the entropy rate of the
%process.

\subsubsection{Multiplexer followed by a point-to-point channel}
Here we extend the Gilber-Elliot MAC to the case where the
discrete MAC can be decomposed into two components as shown in
Fig. \ref{f_Mac_MUX}. The first component is a MAC that can behave
as a multiplexer and the second component is a point-to-point
channel. The definitions of those components are the following:
\begin{figure}[h]{
 \psfrag{gamma}[][][0.8]{$\gamma$}
\psfrag{W1}[][][0.8]{$W_1$} \psfrag{W2}[][][0.8]{$W_M$}
\psfrag{X1}[][][0.8]{$X_{1i}(W_1,Y^{n-1})$}
\psfrag{Xm}[][][0.8]{$X_{Mi}(W_M,Y^{i-1})$}
\psfrag{X0}[][][0.8]{$X_{0i}$} \psfrag{Z}[][][0.8]{$Z_i$}
\psfrag{Y}[][][0.8]{$Y_i$} \psfrag{Z}[][][0.8]{$Z_i$}
\psfrag{W3}[][][0.8]{$(\hat W_1,\ldots,\hat W_M)$}
\psfrag{X2}[][][0.8]{$X_{1i}(W_1,Y^{i-1})$}
\psfrag{Y2}[][][0.8]{$X_{Mi}(W_M,Y^{i-1})$}
\psfrag{A}[][][2]{$\vdots$} \psfrag{c1}[][][1]{point-to-point}
\psfrag{c2}[][][1]{channel}
% \psfrag{T1}[][][1]{MAC}
%\psfrag{T2}[][][1]{that} \psfrag{T3}[][][1]{can}
%\psfrag{T4}[][][1]{behave as a} \psfrag{T5}[][][1]{Multiplxer}
 \psfrag{T1}[][][1]{}
\psfrag{T5}[][][1]{} \psfrag{T3}[][][1]{MAC} \psfrag{T4}[][][1]{}
\psfrag{T2}[][][1]{Multiplexer}

%\centerline{\includegraphics[width=12cm]{C:/cygwin/home/haim1/symmetric_MAC.eps}}
\centerline{\includegraphics[width=12cm]{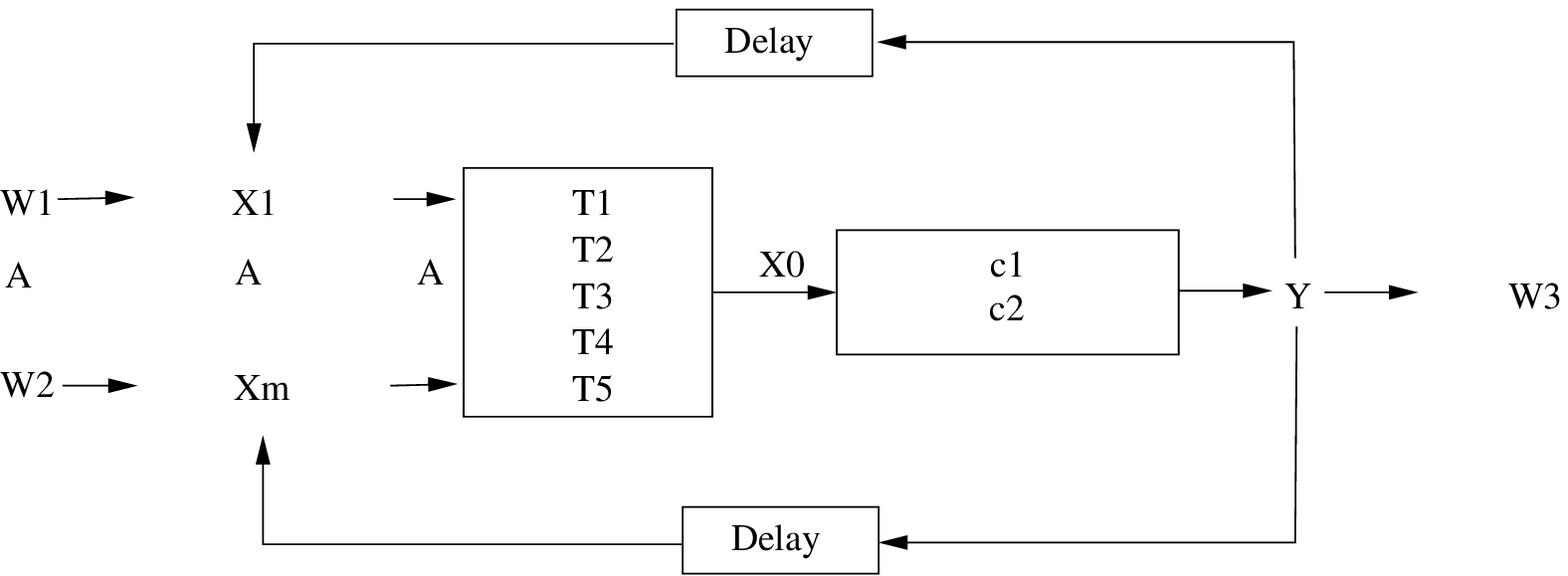}}

\caption{Discrete MAC that can be decomposed into two parts. The
first part is a MAC that behaves as a multiplexer and the second
part is a point-to-point channel \label{f_Mac_MUX}}
}\end{figure} %%when I save on visio I used AI 3 - and I had to restart the computer and not all visio features work with AI3

\begin{definition} A MAC  behaves as {\it
a multiplexer} if the inputs and the output have common alphabets
and for all $m\in{1,...,M}$ there exists a choice of input symbols
for all senders except sender $m$, such that the output is the
$m\text{th}$ input, i.e. $Y=X_{m}$.
\end{definition}
An example of a multiplexer-MAC for the Binary case is a MAC whose
output is one of and/or/xor of the inputs. For a general alphabet
$q$ those operations could be max/min/addition-mod-$q$. For
instance, if the channel is binary with two users and it is
addition-mod-$2$, i.e., $y=x_1\oplus x_2$, then we can ensure that
$y=x_1$ by choosing $x_2=0$.

\begin{theorem}\label{t_MAC_MUX}
The capacity region of a multiplexer MAC followed by a
point-to-point channel with a time invariant feedback
%$z_{i-1}(y_{i-1})$
to all encoders, as shown in Fig. \ref{f_Mac_MUX}, is
\begin{equation}
\sum_{m=1}^M R_m\leq C_{}
\end{equation}
where $C_{}$ is the capacity of the point-to-point channel with the
time invariant feedback $z_{i-1}(y_{i-1})$.
\end{theorem}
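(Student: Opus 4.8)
The plan is to establish the region $\sum_{m=1}^{M} R_m \le C$ by proving a matching converse and achievability, exploiting the fact that the only way the $M$ inputs influence the output $Y$ is through the single multiplexer output $X_0$, which is itself the input to a point-to-point channel of feedback capacity $C$. For the converse I would feed the ($M$-user version of the) outer bound of Theorem~\ref{t_outer_bound} into a data-processing inequality for directed information across the bottleneck $X_0$; for achievability I would use a time-division scheme that is made possible precisely by the defining multiplexer property.

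\emph{Converse.} Writing $U^n=(X_1,\ldots,X_M)^n$, the sum-rate constraint of the outer bound gives, for any achievable point, a product distribution $\prod_{m}Q(x_m^n||z^{n-1})$ with $\sum_{m}R_m\le \tfrac1n I(U^n\to Y^n)+\epsilon_n$. The crux is the inequality $I(U^n\to Y^n)\le I(X_0^n\to Y^n)$. I would prove it term by term: since $Y$ is generated from the inputs only through $X_0$, the channel obeys the causal Markov relation $Y_i-(X_0^i,Y^{i-1})-U^i$, so $I(U^i;Y_i\mid X_0^i,Y^{i-1})=0$, and expanding $I(U^i,X_0^i;Y_i\mid Y^{i-1})$ two ways yields $I(U^i;Y_i\mid Y^{i-1})=I(X_0^i;Y_i\mid Y^{i-1})-I(X_0^i;Y_i\mid U^i,Y^{i-1})\le I(X_0^i;Y_i\mid Y^{i-1})$; summing over $i$ gives the claim (with equality when the multiplexer is deterministic). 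Because $X_0^n$ is produced causally from the inputs and the common feedback, $Q(x_0^n||z^{n-1})$ is a legitimate point-to-point feedback input law, so by sup-additivity (Lemma~\ref{l_supadditive_Rn}) we have $\tfrac1n I(X_0^n\to Y^n)\le \max_{Q(x_0^n||z^{n-1})}\tfrac1n I(X_0^n\to Y^n)\le C$. Hence $\sum_m R_m\le C+\epsilon_n\to C$.

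\emph{Achievability.} I would use pure time sharing enabled by the multiplexer. Fix fractions $\lambda_m\ge 0$ with $\sum_m\lambda_m\le 1$ and split the $n$ uses into consecutive blocks of length $n_m=\lambda_m n$; in block $m$ every user other than $m$ transmits the fixed neutral symbols that force $X_0=X_m$ (so no feedback is needed at those users), while user $m$ drives the point-to-point channel with a capacity-approaching feedback code built from the inner bound $\underline{\mathcal R}_{n_m}$ of Theorem~\ref{t_capacity_feedback} specialized to a single user. The $\min_{s_0}$ and the $-\log|\mathcal S|/n_m$ penalty in $\underline{\mathcal R}_{n_m}$ guarantee that user $m$ attains rate $C-O(\log|\mathcal S|/n_m)$ uniformly over the (random) state left by the preceding blocks, so its per-use rate tends to $\lambda_m C$; a union bound over the fixed number $M$ of blocks keeps the overall error vanishing. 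Letting $R_m=\lambda_m C$ sweep the simplex recovers every point with $\sum_m R_m\le C$, matching the converse.

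\emph{Main obstacle.} The delicate point is the achievability over a channel with memory, namely certifying that each user truly earns $\lambda_m C$ rather than something smaller despite the arbitrary initial state bequeathed by the other users' neutral transmissions; this is exactly what the worst-case $\min_{s_0}$ formulation and the vanishing $\log|\mathcal S|/n$ term in $\underline{\mathcal R}_n$ are designed to absorb, and invoking them correctly (together with ergodicity/indecomposability so that $\tfrac1n\max I(X_0^n\to Y^n)\to C$) is where care is required. On the converse side the only nonroutine step is the directed-information data-processing inequality above, whose validity hinges on the causal Markov structure $Y_i-(X_0^i,Y^{i-1})-U^i$ surviving in the presence of feedback.
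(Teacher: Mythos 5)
Your proposal is correct, and your achievability argument is exactly the paper's: pure time sharing in which all users except the active one send the neutral symbols that force $X_0=X_m$, so the active user runs a capacity-approaching point-to-point feedback code. Where you diverge is the converse. The paper disposes of it in two sentences with a black-box simulation/contradiction argument: the $M$ encoders together with the multiplexer constitute a single encoder for the point-to-point channel, so any MAC code with sum rate above $C$ would yield a point-to-point feedback code beating $C$. You instead run the explicit information-theoretic chain -- Fano, the outer bound, and the directed-information data-processing step $I(U^n\to Y^n)\le I(X_0^n\to Y^n)$ via the causal Markov relation $U^i-(X_0^i,Y^{i-1})-Y_i$. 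Interestingly, that is precisely the argument the paper does write out, step (e) of its source-channel separation converse in eq.~(\ref{e_HU_1U_2}), so your route is not foreign to the paper; it is arguably more rigorous than the one-line reduction given for Theorem~\ref{t_MAC_MUX} itself, at the cost of needing the extra bookkeeping that the induced $Q(x_0^n||z^{n-1})$ is a legitimate causal-conditioning input law and that $\frac1n\max I(X_0^n\to Y^n)\le C$ for every finite $n$ (your appeal to sup-additivity is the right idea, though Lemma~\ref{l_supadditive_Rn} as stated concerns the MAC regions rather than the scalar point-to-point sequence, so you should really invoke the superadditivity of $n\mapsto\max_{Q(x^n||z^{n-1})}I(X^n\to Y^n)$ directly). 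Your identification of the block-boundary initial-state issue in achievability, absorbed by the $\min_{s_0}$ and the $\log|\mathcal S|/n$ penalty, is a genuine subtlety the paper glosses over.
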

\begin{proof}
The achievability is proved simply by time sharing. At each time,
only one selected user sends information and the other users send
a constant input that insures that the output is the input of the
selected user.

The converse is based on the fact that the maximum rate that can
be transmitted through the point-to-point channel is $C$ and it is
an upper bound sum-rate of multiplexer-MAC. If it hadn't been an
upper bound for the multiplexer-MAC, we could build a fictitious
Multiplexer-MAC before the point-to-point channel and achieve by
that a higher rate than its upper bound which would be
contradiction.
\end{proof}
\subsubsection{Discrete additive MAC} An immediate consequence of Theorem \ref{t_MAC_MUX} is
an extension of Alajaj's result \cite{Alajaji95} to the additive
MAC which is given in Corollary \ref{c_additive}. Corollary
\ref{c_additive} states that feedback does not enlarge the
capacity region of a discrete additive (mod-$|\mathcal X|$) noise
MAC.

The proof of the corollary is based on the following observation.
If feedback does not increase the capacity of a particular
point-to-point channel then feedback also does not increase the
capacity of the MUX followed by the same particular channel.
Specifically, feedback does not increase the achievable region of
an additive MAC (Fig. \ref{f_additive}) and the achievable region
is given by
\begin{equation}
\sum_{m=1}^M R_m\leq \log q - H(\mathcal V),
\end{equation}
%\end{corollary}
where $H(\mathcal V)$ is the entropy rate of the additive noise.

\begin{figure}[h]{
 \psfrag{gamma}[][][0.8]{$\gamma$}
\psfrag{W1}[][][0.8]{$W_1$} \psfrag{W2}[][][0.8]{$W_M$}
\psfrag{X}[][][0.8]{$X_{1n}(W_1)$}
\psfrag{Y}[][][0.8]{$X_{Mn}(W_M)$} \psfrag{Z}[][][0.8]{$Y_{n}$}
\psfrag{N}[][][0.8]{$V_n$} \psfrag{W3}[][][0.8]{$(\hat
W_1,\ldots,\hat W_M)$} \psfrag{X2}[][][0.8]{$X_{1n}(W_1,Y^{n-1})$}
\psfrag{Y2}[][][0.8]{$X_{Mn}(W_M,Y^{n-1})$}
\psfrag{A}[][][2]{$\vdots$}

%\centerline{\includegraphics[width=16cm]{C:/cygwin/home/haim1/mac_additive_feedback_M2.eps}}
\centerline{\includegraphics[width=16cm]{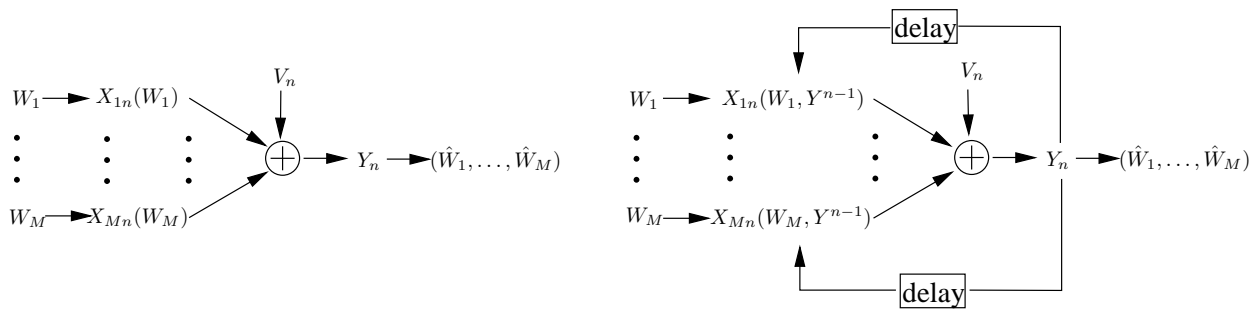}}

\caption{Additive noise MAC with and without feedback. The random
variables $X_{1n},...,X_{Mn},Y_n,V_n, \quad n\in{1,2,3,...}$, are
from a common alphabet of size $q$, and they denote the input from
sender 1,...,M, the output and the noise at time $n$,
respectively. The relation between the random variables is given
by $y_n=x_{1n}\oplus x_{2n} ... \oplus x_{Mn} \oplus v_{n}$ where
$\oplus$ denotes addition mod-$q$. The noise $V_n$, possibly with
memory, is independent of the messages
$W_{1},...,W_{M}$.\label{f_additive} }
}\end{figure} %%when I save on visio I used AI 3 - and I had to restart the computer and not all visio features work with AI3

\subsubsection{Multiplexer followed by erasure channel} Consider the case of the multiplexer-erasure MAC which is a
multiplexer followed by an erasure channel, possibly with memory.
\begin{definition}
A point-to-point channel is called {\it erasure} channel if the
output at time $n$ can be written as $Y_n=f(X_n,Z_n)$, and the
following properties hold:
\begin{enumerate}
\item The alphabet of $Z$ is binary and the alphabet of $Y$ is the same as $X$ plus one additional symbol called
the erasure.
\item The process $Z_n$ is stationary and ergodic and is independent of the message.
\item If $z_n=0$, then $y_n=x_n$ and if $z_n=1$, then the output is an erasure regardless of the input.
\end{enumerate}
\end{definition}
%The erasure channel does not satisfies properties 1,4 in the
%definition of the symmetric channel and therefore Theorem
%\ref{t_multiplexer-symmetric} does not apply to this case.
 For the
mutltiplexr-erasure channel we have the following theorem.
\begin{corollary}\label{c_multiplexer-erasure}
The capacity region of the multiplexer-erasure MAC with or without
feedback is
\begin{equation}\label{e_capacity_erasure}
\sum_{m=1}^M R_m \leq (1 - p_e)\log q,
\end{equation}
where $p_e$ is the marginal probability of having an erasure.
Moreover, even if the encoder has non causal side information,
i.e. the encoders know where the erasures appear noncausally, the
capacity is still given by (\ref{e_capacity_erasure}).
\end{corollary}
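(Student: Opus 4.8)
The plan is to first invoke Theorem~\ref{t_MAC_MUX}, which reduces the sum-rate capacity of the multiplexer-erasure MAC to the feedback capacity $C$ of its constituent point-to-point erasure channel; it therefore suffices to show that $C=(1-p_e)\log q$, both for the causal-feedback case and when the erasure locations $Z^n$ are revealed noncausally to the encoder. Recall that the point-to-point feedback capacity is given by the directed-information limit $C=\lim_{n\to\infty}\frac1n\max_{Q(x^n\|z^{n-1})}I(X^n\to Y^n)$ (the point-to-point analogue of Theorem~\ref{t_capacity_feedback}; cf.~\cite{Permuter06_feedback_submit}), so the whole argument splits into a converse that upper bounds $\frac1n I(X^n\to Y^n)$ by $(1-p_e)\log q$ uniformly over all feedback input laws, and an achievability part that attains this value.

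The key observation I would exploit is that the erasure indicator $Z_i=\mathbf 1\{Y_i=e\}$ is a deterministic function of $Y_i$ alone (an erasure occurs exactly when $z_i=1$, and the erasure symbol $e$ differs from every channel input). Hence $Z^{i-1}$ is recoverable from $Y^{i-1}$, so handing the erasure pattern to the decoder is free. Moreover, although under feedback $X^i$ is a function of the message $m$ and $Z^{i-1}$, the state process obeys $P(Z^n\mid m)=P(Z^n)$, so $Z_i$ remains independent of $m$ given $Z^{i-1}$; combining these facts yields $Z_i\perp(X^i,Y^{i-1})\mid Z^{i-1}$.

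With these facts the converse becomes a short computation. Expanding $I(X^n\to Y^n)=\sum_{i=1}^n[H(Y_i\mid Y^{i-1})-H(Y_i\mid Y^{i-1},X^i)]$ and inserting $Z_i$, the deterministic relation $Y_i=f(X_i,Z_i)$ gives $H(Y_i\mid Z_i,Y^{i-1},X^i)=0$, so the subtracted term equals $H(Z_i\mid Y^{i-1},X^i)$, which by the conditional independence above and the fact that $Z^{i-1}$ is a function of $Y^{i-1}$ equals $H(Z_i\mid Z^{i-1})$. Similarly $H(Y_i\mid Y^{i-1})=H(Z_i\mid Z^{i-1})+H(Y_i\mid Z_i,Y^{i-1})$, so the $Z$-entropies cancel and each summand collapses to $H(Y_i\mid Z_i,Y^{i-1})$. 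This last term is $0$ when $Z_i=1$ and at most $\log q$ when $Z_i=0$, hence is bounded by $P(Z_i=0)\log q=(1-p_e)\log q$ by stationarity. Summing and normalizing gives $\frac1n I(X^n\to Y^n)\le(1-p_e)\log q$ for every feedback law, and the i.i.d.\ uniform input $X_i\sim\mathrm{Unif}(q)$ makes every inequality an equality, establishing achievability and hence $C=(1-p_e)\log q$.

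For the noncausal side-information claim, I would condition on $Z^n$ at the decoder (again free, since $Z^n$ is a function of $Y^n$) and write $I(m;Y^n)=I(m;Y^n,Z^n)=I(m;Z^n)+I(m;Y^n\mid Z^n)$; the first term vanishes because $Z^n\perp m$, and $I(m;Y^n\mid Z^n)\le H(Y^n\mid Z^n)\le \mathbf E[\#\{i:Z_i=0\}]\log q=n(1-p_e)\log q$, so the same bound survives even when $X^n$ may depend on all of $Z^n$. I expect the main obstacle to be the converse rather than achievability: one must argue carefully that feedback cannot help, which rests entirely on the two facts above---that the erasure pattern is recoverable from the output and that the state process stays independent of the message under feedback---so that the $Z$-entropy terms telescope away and no hidden rate can be smuggled through the erasures.
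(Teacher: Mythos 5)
Your proposal is logically sound and reaches the right conclusion, but it takes a genuinely different route from the paper for the key step. The paper's proof is essentially a two-line reduction: it invokes Theorem~\ref{t_MAC_MUX} to reduce everything to the point-to-point erasure capacity $C$, and then simply \emph{cites} Diggavi and Grossglauser \cite[Thm.~3.1]{Diggavi06_deleteion_erasur_channels} for the fact that $C=(1-p_e)\log q$ with or without feedback; the noncausal claim is handled by the informal observation that since erasures are input-independent, the transmitter can use only a fraction $1-p_e$ of the channel uses. You instead rederive the point-to-point result from scratch via a directed-information telescoping argument ($Z_i$ is recoverable from $Y_i$, the $H(Z_i\mid Z^{i-1})$ terms cancel, and each summand collapses to $H(Y_i\mid Z_i,Y^{i-1})\le(1-p_e)\log q$), and you make the noncausal converse rigorous through $I(m;Y^n)=I(m;Y^n,Z^n)=I(m;Y^n\mid Z^n)$. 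Your converse computations are correct and in fact tighter and more self-contained than what the paper writes down; this is a real gain, especially for the noncausal part, where the paper's argument is only a sketch.

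One caveat you should address: your achievability step is under-justified. Showing that i.i.d.\ uniform inputs make $\frac1n I(X^n\to Y^n)$ equal to $(1-p_e)\log q$ does not by itself prove that this rate is achievable; you would need a coding theorem asserting that $\lim_n\frac1n\max_{Q(x^n\|z^{n-1})}I(X^n\to Y^n)$ is the capacity of this channel. The paper's Theorems~\ref{t_inner_bound}--\ref{t_capacity_feedback} (and their point-to-point analogues) are proved only for finite-state channels whose state process is a stationary ergodic \emph{Markov} chain on a finite state set, whereas the erasure process $Z_n$ in the definition is merely assumed stationary and ergodic, so it need not fit that framework. The paper sidesteps this entirely by citing an external result that covers general stationary ergodic erasure processes. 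To close your argument without that citation, you should either restrict to Markov (or hidden-Markov) erasure processes so the paper's machinery applies, or give the elementary direct achievability argument for the erasure channel (random uniform codebook, decode from the non-erased positions, ergodicity guaranteeing that roughly $n(1-p_e)$ positions survive).
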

%
%Before proving the theorem let us show an example of such a channel.
%\begin{example}\label{ex_Binry_multiplier_MAC_fo}
%For instance, for Binary multiplier MAC followed by erasure
%channel:} The binary multiplier MAC \cite[Example 14.3.2]{CovThom}
%is a channel with binary inputs $X_1,X_2$ and output that equals
%to multiplication of the input , i.e. $X_1X_2$. The output of the
%binary MAC enters a stationary ergodic binary erasure channel with
%marginal probability of having an erasure $p_e$ as shown in Fig.
%\ref{f_multiplexer_earsure_mac}. It is straightforward to verify
%that this channel is an example of a multiplexer-erasure channel
%and hence by Theorem \ref{t_multiplexer-erasure} below, the
%capacity regions is
%\begin{equation}
%R_1+R_2\leq 1-p_e.
%\end{equation}
%\end{example}
%
%\begin{figure}[h]{
% \psfrag{x1}[][][0.8]{$X_1$}
% \psfrag{x2}[][][0.8]{$X_2$}
% \psfrag{x0}[][][0.8]{$X_2\cdot X_2$}
%\psfrag{T1}[][][1]{Erasure} \psfrag{T2}[][][1]{Channel}
% \psfrag{Y}[][][0.8]{$Y$}
%
%%\centerline{\includegraphics[width=7cm]{C:/cygwin/home/haim1/Binary_multiplier_erasuer.eps}}
%\centerline{\includegraphics[width=7cm]{binary_multiplier_erasuer.eps}}
%
%\caption{Binary multiplier MAC follows by erasure channel, possibly
%with memory. \label{f_multiplexer_earsure_mac}}
%}\end{figure} %%when I save on visio I used AI 3 - and I had to restart the computer and not all visio features work with AI3
%

{\it Proof:} According to Theorem \ref{t_MAC_MUX} the capacity
region is
\begin{equation}
\sum_{m=1}^M R_m \leq C,
\end{equation}
where $C$ is the capacity of the erasure point-to-point channel.
Diggavi and Grossglauser \cite[Thm.
3.1]{Diggavi06_deleteion_erasur_channels} showed that the capacity
of a point-to-point erasure channel, with and without feedback, is
given by $(1 - p_e)\log q$. Since the probability of having an
erasure does not depend on the input to the channel, we deduce
%
%
%
% Hence all
%we need is to prove that $C=(1-p_e)\log q$ with or without feedback.
%For the converse of the point-to-point erasure channel let us
%consider
that even in the the case where the encoder knows the sequence
$Z^n$ non-causally, which is better than feedback, the transmitter
can transmit only fraction $1-p_e$ of the time, hence the capacity
cannot exceed $ (1-p_e)\log q$.% For the achievability of
%the point-to-point channel, let us consider a scheme that has an
%interleaver at the input of the channel and the output. This new
%channel behaves essentially as a memoryless erasure channel with
%probability of having an erasure $p_e$ and for a memoryless
%channel it is well known that $ (1-p_e)\log q$ is achievable.
\hfill \QED

\subsubsection{Multiplexer followed by the trapdoor channel}
In this example feedback increases the capacity. Based on the fact
that the capacity of the trapdoor channel with
feedback\cite{Permuter06_trapdoor_submit} is the logarithm of the
golden ratio, i.e. $\log\frac{\sqrt{5}+1}{2}$, the achievable
region of a Multiplexer followed by the trapdoor channel is
\begin{equation}
\sum_{m=1}^M R_m \leq \log\frac{\sqrt{5}+1}{2}.
\end{equation}

\subsection{Source-channel coding separation}
Cover, El-Gamal and Salehi \cite{CoverGamalSalehi80MAC_Source}
showed that, in general, the source channel separation does not
hold for MACs even for a memoryless channel without feedback.
However, for the case where the MAC is a discrete Multiplexer
followed by a channel we now show that it does hold.

We want to send the sequence of symbols $U_1^n,U_2^n$ over the
MAC, so that the receiver can reconstruct the sequence. To do this
we can use a joint source-channel coding scheme where we send
through the channel the symbols $x_{1,i}(u_1^n,z^{i-1})$ and
$x_{2,i}(u_2^n,z^{i-1})$. The receiver looks at his received
sequence $Y^n$ and makes an estimate $\hat U_1^n, \hat U_2^n$. The
receiver makes an error if $\hat U_1^n\neq U_1^n$ or if $\hat
U_2^n\neq U_2^n$, i.e.,  the probability of error $P_e^{(n)}$ is
%\begin{eqnarray}
$P_e^{(n)}=\Pr( (\hat U_1^n,\hat U_2^n)\neq ( U_1^n, U_2^n)).$
%\end{eqnarray}
\begin{theorem} {(\it Source-channel coding theorem for a Multiplexer followed by a channel.})
 Let $(U_1,U_2)_{n\geq1}$ be a finite alphabet, jointly stationary and ergodic  pair of %stochastic
 processes and let the MAC channel be a multiplexer followed by a point-to-point channel with  time invariant
 feedback and capacity $C=\lim_{N\to\infty}\frac{1}{N}\max_{Q(x^n||z^{n-1})} I(X^n;Y^n)$ (e.g., a memoryless channel, an indecomposable FSC without feedback, stationary and ergodic Markovian channel).
 For the source and the MAC described above:
%\begin{itemize}

{(\it direct part.)} There exists
 a source-channel code  with $P_e^{(n)}\to 0$, if $H(\mathcal U_1,\mathcal
 U_2)< C$, where $H(\mathcal U_1,\mathcal
 U_2)$ is the entropy rate of the sources and $C$ is the capacity
 of the point-to-point channel with a time-invariant feedback.

{(converse part).} If $H(\mathcal U_1,\mathcal
 U_2)>C$, then the probability of error is bounded away from zero (independent of the blocklength). %\end{itemize}
\end{theorem}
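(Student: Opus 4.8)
The plan is to prove separation by splitting the problem into a distributed (Slepian--Wolf) source-coding stage followed by a channel-coding stage for the multiplexer-MAC, exploiting the crucial fact that, by Theorem~\ref{t_MAC_MUX}, the capacity region of this channel is the \emph{single} sum-rate constraint $\sum_m R_m\le C$ with no individual-rate constraints. The guiding observation is that the minimal achievable sum-rate of the Slepian--Wolf region equals the joint entropy rate $H(\mathcal U_1,\mathcal U_2)$, and this is matched directly against the channel's sum-rate budget $C$. This alignment is precisely what is absent for a general MAC, whose pentagonal channel region need not be compatible with the Slepian--Wolf region; that incompatibility is the reason separation fails in general \cite{CoverGamalSalehi80MAC_Source}.

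For the \emph{direct part} I would first apply distributed source coding to the jointly stationary and ergodic pair $(U_1^n,U_2^n)$. In its extension to stationary ergodic sources (via the Shannon--McMillan--Breiman theorem), Slepian--Wolf coding yields, for any rate pair with $R_1>H(\mathcal U_1|\mathcal U_2)$, $R_2>H(\mathcal U_2|\mathcal U_1)$ and $R_1+R_2>H(\mathcal U_1,\mathcal U_2)$, separate bin-index maps $m_l=f_l(U_l^n)\in\{1,\dots,2^{nR_l}\}$ together with a decoder that reconstructs $(U_1^n,U_2^n)$ with vanishing error. Because $H(\mathcal U_1,\mathcal U_2)<C$, I can pick a point on the dominant face (e.g.\ $R_1=H(\mathcal U_1)+\delta/2$, $R_2=H(\mathcal U_2|\mathcal U_1)+\delta/2$) with $H(\mathcal U_1,\mathcal U_2)<R_1+R_2<C$; since the channel imposes no individual constraints, this $(R_1,R_2)$ lies strictly inside the region of Theorem~\ref{t_MAC_MUX}. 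I then feed $(m_1,m_2)$ as messages into the multiplexer-MAC channel code, which may use the feedback, and concatenate: the receiver first decodes $(\hat m_1,\hat m_2)$ and then runs the Slepian--Wolf decoder. A union bound gives $P_e^{(n)}\le P^{(n)}_{\mathrm{SW}}+P^{(n)}_{\mathrm{ch}}\to0$.

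For the \emph{converse part}, suppose a source-channel code with $P_e^{(n)}\to0$ existed; I show this forces $H(\mathcal U_1,\mathcal U_2)\le C$. Writing $X_0^n$ for the input to the point-to-point channel (the multiplexer output), note that $X_0^i$ is a deterministic causal function of $(U_1^n,U_2^n,Y^{i-1})$, while the channel law gives $I(U_1^n,U_2^n;Y_i\mid X_0^i,Y^{i-1})=0$. Hence $I(U_1^n,U_2^n;Y_i\mid Y^{i-1})\le I(X_0^i;Y_i\mid Y^{i-1})$, and summing over $i$ yields $I(U_1^n,U_2^n;Y^n)\le I(X_0^n\to Y^n)$. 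Combining this with Fano's inequality, $H(U_1^n,U_2^n\mid Y^n)\le 1+nP_e^{(n)}\log(|\mathcal U_1||\mathcal U_2|)$, gives
\begin{equation}
\tfrac1n H(U_1^n,U_2^n)\le \tfrac1n\max_{Q(x_0^n||z^{n-1})}I(X_0^n\to Y^n)+\epsilon_n\le C+\epsilon_n,
\end{equation}
where the last step uses super-additivity of $\max I(X_0^n\to Y^n)$ (so $\tfrac1n\max I(X_0^n\to Y^n)\le C$ for every $n$), and $\epsilon_n\to0$ whenever $P_e^{(n)}\to0$. Letting $n\to\infty$ gives $H(\mathcal U_1,\mathcal U_2)\le C$, contradicting $H(\mathcal U_1,\mathcal U_2)>C$. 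Rearranging the same chain produces the quantitative statement $\liminf_n P_e^{(n)}\ge (H(\mathcal U_1,\mathcal U_2)-C)/\log(|\mathcal U_1||\mathcal U_2|)>0$, i.e.\ the error is bounded away from zero independently of the blocklength.

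The main obstacle I anticipate is making the direct part fully rigorous for \emph{stationary ergodic} rather than i.i.d.\ sources: the Slepian--Wolf achievability must be invoked in its ergodic form, and one must verify that the finite-$n$ source and channel codes concatenate at a common blocklength with matching sum-rate $R_1+R_2<C$. A secondary technical point is the careful justification, in the converse, of $I(U_1^n,U_2^n;Y^n)\le I(X_0^n\to Y^n)$ under feedback, which rests on the Markov structure $(U_1^n,U_2^n)\to(X_0^i,Y^{i-1})\to Y_i$ induced by the multiplexer together with the encoders' causal dependence on past outputs.
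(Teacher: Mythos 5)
Your proposal is correct and follows essentially the same route as the paper: Slepian--Wolf binning for jointly stationary ergodic sources concatenated with a multiplexer-MAC channel code for the direct part, and for the converse the same Fano-plus-data-processing chain that passes $I(U_1^n,U_2^n;Y^n)$ through the multiplexer output to bound it by $I(X_0^n\to Y^n)\le \max_{Q(x_0^n||z^{n-1})}I(X_0^n\to Y^n)$. Your explicit quantitative lower bound $\liminf_n P_e^{(n)}\ge (H(\mathcal U_1,\mathcal U_2)-C)/\log(|\mathcal U_1||\mathcal U_2|)$ is a small addition beyond what the paper writes out, but the underlying argument is identical.
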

%{\it proof:}
\begin{proof} The achievability is  a straightforward
consequence of the Slepian-Wolf result for Ergodic and stationary
processes \cite{Cover75SliepenWolf} and the achievability of the
multiplexer followed by a point-to-point channel. First, we encode
the sources by using the Sepian-Wolf achievability scheme where we
assign every $u_1^n$ to one of $2^{nR_1}$ bins according to a
uniform distribution on $\{1,...,2^{nR_1}\}$ and independently we
assign every $u_2^n$ to one of $2^{nR_2}$ bins according to a
uniform distribution on $\{1,...,2^{nR_2}\}$. Second, we encode
the bins as if they were messages, as shown in Fig.
\ref{f_Mac_MUX_seperation}.

In the converse, we assume that there exists a sequence of codes
with  $P_e^{(n)}\to 0$,  and we show that it implies that
$H(\mathcal U_1,\mathcal U_2)\leq C$. Fix a given coding scheme
and consider the following:

 \begin{eqnarray}\label{e_HU_1U_2}
H(U_1^n,U_2^n)&\stackrel{(a)}{\leq}& I(U_1^n,U_2^n;\hat U_1^n,
\hat
U_2^n) + n\epsilon_n\nonumber \\
&\stackrel{(b)}{\leq}& I(U_1^n,U_2^n;Y^n) + n\epsilon_n\nonumber \\
&\stackrel{=}{}& H(Y^n)-H(Y^n|U_1^n,U_2^n) + n\epsilon_n\nonumber \\
&\stackrel{=}{}& \sum_{i=1}^n H(Y_i|Y^{i-1})-H(Y_i|U_1^n,U_2^n,Y^{i-1}) + n\epsilon_n\nonumber \\
&\stackrel{(c)}{=}& \sum_{i=1}^n H(Y_i|Y^{i-1})-H(Y_i|U_1^n,U_2^n,Y^{i-1},X_1^i,X_2^i) + n\epsilon_n\nonumber \\
&\stackrel{(d)}{=}& \sum_{i=1}^n H(Y_i|Y^{i-1})-H(Y_i|Y^{i-1},X_1^i,X_2^i) + n\epsilon_n\nonumber \\
&\stackrel{=}{}& \sum_{i=1}^n H(Y_i|Y^{i-1})-H(Y_i|Y^{i-1},X_1^i,X_2^i) + n\epsilon_n\nonumber \\
&\stackrel{=}{}& \sum_{i=1}^n I(X_1^i,X_2^i;Y_i|Y^{i-1}) + n\epsilon_n\nonumber \\
&\stackrel{(e)}{\leq}& \sum_{i=1}^n I(X_{0}^i;Y_i|Y^{i-1}) + n\epsilon_n\nonumber \\
&\stackrel{}{=}& I(X_{0}^n \to Y^n) + n\epsilon_n\nonumber \\
&\stackrel{}{\leq}& \max_{Q(x_0^n||z^{n-1})}I(X_{0}^n \to Y^n) + n\epsilon_n %nonumber \\
%&\stackrel{=}{}& I(X_1^n,X_2^n\to Y^n) + n\epsilon_n\nonumber \\
\end{eqnarray}
Inequality (a) is due to Fano's inequality where $n
\epsilon_n=1+P_e^{(n)}n|\mathcal U_1||\mathcal U_2|$. Inequality
(b) follows from the data processing inequality because
$(U_1^N,U_2^N)-Y^N-(\hat U_1^N,\hat U_2^N)$ form a Markov chain.
Equality (c) is due to the fact that, for a given code, $X_1^i$ is
a deterministic function of $U_1^n,Y^{i-1}$ and, similarly,
$X_2^i$ is a deterministic function of $U_2^n,Y^{i-1}$. Equality
(d) is due to the Markov chain
$(U_1^N,U_2^N)-(X_1^i,X_2^i,Y^{i-1})-Y_i$. The notation $X_{0,i}$
denotes the output of the multiplexer which is also the input to
the point-to-point channel at time $i$. The inequality in $(e)$ is
due to the data processing inequality which can be invoked thank
to the fact that given $Y^{i-1}$ we have the Markov chain
$X_{1}^i,X_{2}^i-X_{0}^i-Y_i$.

By dividing both sides of (\ref{e_HU_1U_2}) by $n$, taking the
limit $n\to \infty$, and recalling that
$C=\lim_{n\to\infty}\frac{1}{n}\max_{Q(x^n||z^{n-1})} I(X^n;Y^n)$
we have
\begin{eqnarray}
H(\mathcal U_1,\mathcal U_2)=\lim_{n\to\infty}\frac{1}{n}
H(U_1^n,U_2^n)\leq C.
\end{eqnarray}
%
%
%is smaller then the capacity of the point-to-point channel
%dividing both sides by $n$ and inequality (e) is due to the
%previously established converse of a MUX followed by a
%point-to-point channel.
\end{proof}
%\hfill \QED
\begin{figure}[h]{
% \psfrag{gamma}[][][0.8]{$\gamma$}
\psfrag{W1}[][][0.8]{$W_1(U_1^n)$} \psfrag{U1}[][][0.8]{$U_1^n$}
\psfrag{s1}[][][0.7]{$\in\{1,...,2^{nR_1}\}$}

\psfrag{W2}[][][0.8]{$W_2(U_2^N)$}\psfrag{U2}[][][0.8]{$U_2^n$}
\psfrag{s2}[][][0.7]{$\in\{1,...,2^{nR_2}\}$}

\psfrag{X1}[][][0.8]{$X_{1i}(W_1,Y^{n-1})$}
\psfrag{Xm}[][][0.8]{$X_{2i}(W_2,Y^{i-1})$}
\psfrag{X0}[][][0.8]{$X_{0i}$} \psfrag{Z}[][][0.8]{$Z_i$}
\psfrag{Y}[][][0.8]{$Y_i$} \psfrag{Z}[][][0.8]{$Z_i$}
\psfrag{W3a}[][][0.8]{$\hat W_1(Y^n)$} \psfrag{W3b}[][][0.8]{$\hat
W_2(Y^n)$}

\psfrag{U3a}[][][0.8]{$\hat U_1^n(\hat W_1,\hat W_2)$}
\psfrag{U3b}[][][0.8]{$\hat U_2^n(\hat W_1,\hat W_2)$}

\psfrag{X2}[][][0.8]{$X_{1i}(W_1,Y^{i-1})$}
\psfrag{Y2}[][][0.8]{$X_{2i}(W_2,Y^{i-1})$}
\psfrag{A}[][][2]{$\vdots$} \psfrag{c1}[][][1]{point-to-point}
\psfrag{c2}[][][1]{channel}
% \psfrag{T1}[][][1]{MAC}
%\psfrag{T2}[][][1]{that} \psfrag{T3}[][][1]{can}
%\psfrag{T4}[][][1]{behave as a} \psfrag{T5}[][][1]{Multiplxer}
 \psfrag{T1}[][][1]{}
\psfrag{T5}[][][1]{} \psfrag{T3}[][][1]{MAC} \psfrag{T4}[][][1]{}
\psfrag{T2}[][][1]{Multiplexer}

%\centerline{\includegraphics[width=12cm]{C:/cygwin/home/haim1/symmetric_MAC.eps}}
\centerline{\includegraphics[width=15cm]{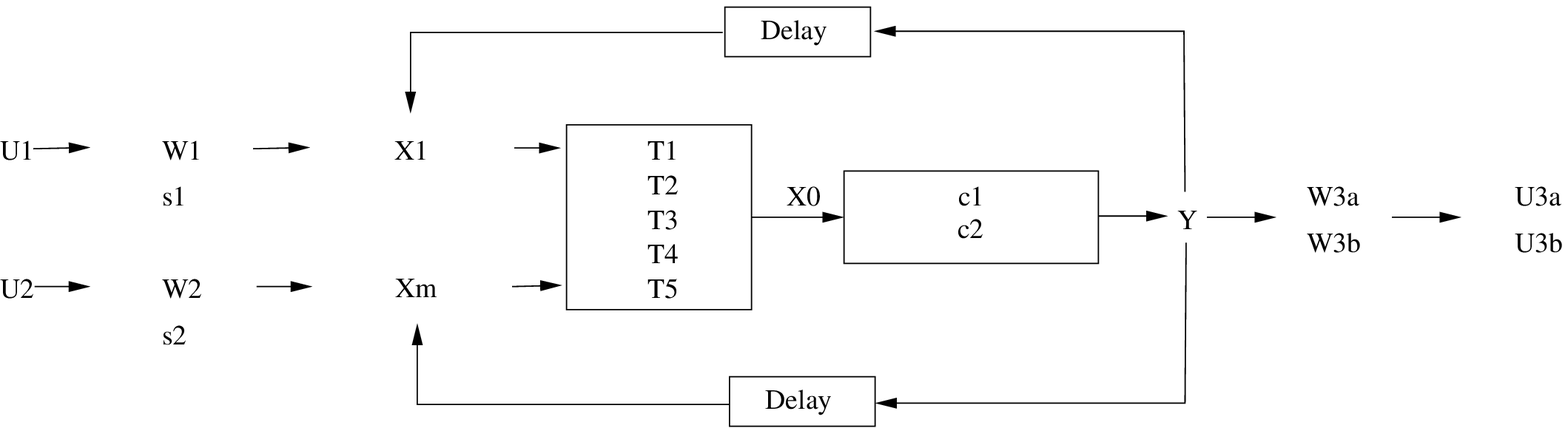}}

\caption{Source-channel coding separation in a discrete Multiplexer
followed by a point-to-point channel. \label{f_Mac_MUX_seperation}}
}\end{figure} %%when I save on visio I used AI 3 - and I had to restart the computer and not all visio features work with AI3

\section{Proof of Achievability (Theorem \ref{t_inner_bound})} \label{sec: Proof of Achievability}
The proof of achievability for the FS-MAC with feedback is similar
to the proof of achievability for the point-to-point FSC given in
\cite[Sec. V]{Permuter06_feedback_submit}, but there are two main
differences:
% which is based on
%Gallager \cite{Gallager68} derivation of an upper bound on the
%error of maximum likelihood decoding.
%There are two main differences between the achievability proof
%that is given in \cite[Sec. V]{Permuter06_feedback_submit} for
%point-to-point FSC and between the achievability proof that is
%given here for FS-MAC:
\begin{enumerate}
\item In the case of FSC, only one message is sent, and in the case of FS-MAC, two independent messages are
sent, which requires that we analyze three different types of
errors: the first type occurs when only the first message is decoded
with error, the second type occurs when only the second message is
decoded with error, and the third type occurs when both messages are
decoded with error.
\item In both cases, we generate the encoding scheme (code-trees) randomly but the distribution that is used is different. In
the case of FSC we generate, for each message in $[1,...,2^{NR}],$
a code-tree of length $N$ by using the causal conditioning
distribution
$Q^*(x^N||z^{N-1})=\arg\max_{Q(x^N||z^{N-1})}\min_{s_0}I(X^N\to
Y^N|s_0)$, and here we generate for each message in
$[1,...,2^{NR_l}], l=1,2$ a code-tree of length $N=Kn$ by
concatenating $K$ independent code-trees where each one is created
with a causal conditioning distribution $Q(x_l^n||z_l^{n-1}),
l=1,2$. %The method here, which is similar to the method presented
%in \cite{Lapidoth98Telatar}, yields a lower error exponent then
%the method in \cite{Permuter06_feedback_submit} but it is easier
%to analyze and is not restricted only to the distribution
%$Q^*(x^N||z^{N-1})$ .
\end{enumerate}

{\bf Encoding scheme:} Randomly generate for encoder
$\;\{l\in{1,2}\}$, $\;2^{NR_l}$ code-trees of length $N=Kn$ by
drawing it with the fixed distributions $Q(x_l^n||z_l^{n-1})$. In
other words, given a feedback sequence $z_1^{N-1}$ the causal
conditioning probability that the sequence $x_1^N$ will be mapped to
a given message is
\begin{equation} \label{e_conacatinationK}
Q(x_1^N||z_1^{N-1})=\prod_{k=1}^K Q(x_{1, (k-1)n+1}^{kn}||z_{1,
(k-1)n+1}^{kn}),
\end{equation} where $x_{1,
(k-1)n+1}^{kn}$  denotes the vector $(x_{1, (k-1)n+1},x_{1,
(k-1)n+2},...,x_{1, kn} )$. Fig. \ref{f_codetree} illustrates the
concatenation of trees graphically. In order to shorten the
notation we will sometimes use the notation $Q_N$ to denote
$Q(x_1^N||z_1^{N-1})Q(x_2^N||z_2^{N-1})$  and we will express the
concatenation of pmfs in (\ref{e_conacatinationK}) as
$Q_N=\prod_{k=1}^K Q_n$.
%{\bf Decoding}: We assume a maximum likelihood decoder.
%
%
%Decoding ML $P(y^n|m_1,m_2)=P(Y^n||x_1^n,x_2^n)$.

\begin{figure}[h]{
\psfrag{c1}[][][0.8]{$x_1=0$} \psfrag{a1}[][][0.8]{$$}
\psfrag{c2}[][][0.8]{$x_2=1$} \psfrag{a2}[][][0.8]{$i=1$}
\psfrag{c3}[][][0.8]{$x_3=1$} \psfrag{a3}[][][0.8]{$i=2$}
\psfrag{c4}[][][0.8]{$x_4=0$} \psfrag{a4}[][][0.8]{$i=3$}

\psfrag{d1}[][][0.8]{$x_1=0$} \psfrag{a1}[][][0.8]{$$}
\psfrag{d2}[][][0.8]{$x_2=1$} \psfrag{a1}[][][0.8]{$$}
\psfrag{d3}[][][0.8]{$x_2=1$} \psfrag{a1}[][][0.8]{$$}
\psfrag{d4}[][][0.8]{$x_3=0$} \psfrag{a1}[][][0.8]{$$}
\psfrag{d5}[][][0.8]{$x_3=1$} \psfrag{a1}[][][0.8]{$$}
\psfrag{d6}[][][0.8]{$x_3=1$} \psfrag{a1}[][][0.8]{$$}
\psfrag{d7}[][][0.8]{$x_3=1$} \psfrag{a1}[][][0.8]{$$}
\psfrag{da}[][][0.8]{$x_4=0$} \psfrag{db}[][][0.8]{$x_4=1$}
\psfrag{a1}[][][0.8]{$$} \psfrag{a5}[][][0.8]{$i=4$}

\psfrag{k0}[][][0.8]{$\;\;\;\;\;\;z_{i-1}=0$}
\psfrag{k1}[][][0.8]{$\;\;\;\;\;\;z_{i-1}=1$}
\psfrag{k01}[][][0.8]{$\:\:\:\:\:\:\:\:\:\:\:\:$ (no feedback)}

\psfrag{codeword}[][][0.9]{codeword (case of no feedback)}
\psfrag{code-tree}[][][0.9]{code-tree (used in
\cite{Permuter06_feedback_submit})} \psfrag{con
code-tree}[][][0.9]{concatenated code-tree (used here)}

%\centerline{\includegraphics[width=16cm]{C:/cygwin/usr/X11R6/bin/home/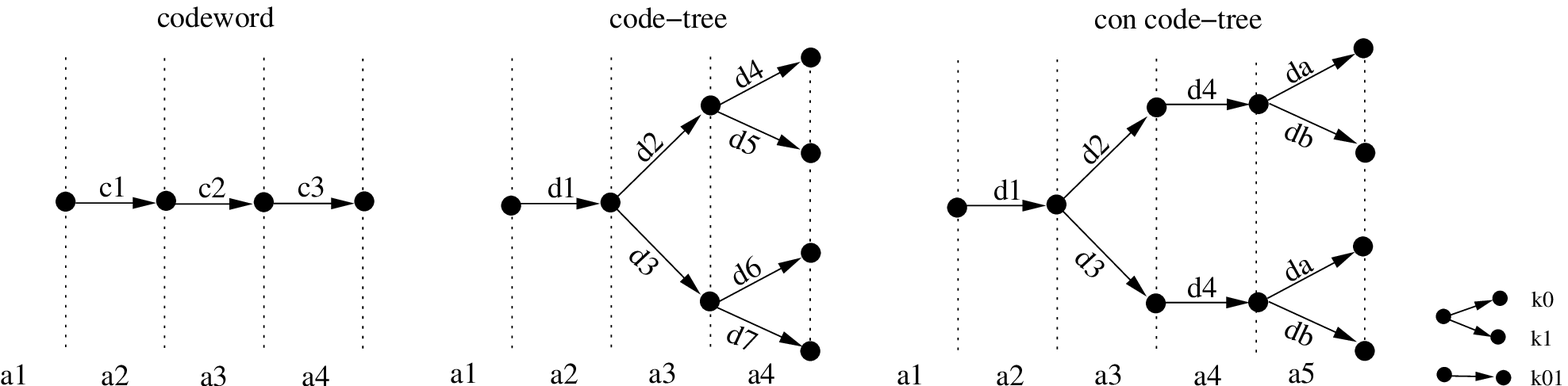}}
\centerline{\includegraphics[width=16cm]{codetree_con.eps}}
\caption{Illustration of coding scheme for setting without feedback,
setting with feedback as used for point-to-point channel
\cite{Permuter06_feedback_submit} and a code-tree that was created
by concatenating smaller code-trees. In the case of no feedback each
message is mapped to a codeword, and in the case of feedback each
message is mapped to a code-tree. The third scheme is a code-tree of
depth 4 created by concatenating two trees of depth 2. }
\label{f_codetree} }
\end{figure}

{\bf Decoding Errors:} For each code in the ensemble, the decoder
uses maximum likelihood decoding and we want to upper bound the
expected value ${\bf E} [P_e]$ for this ensemble. Let
$P_{e1},P_{e2},P_{e3}$ be defined as follows.
\begin{itemize}
\item[$P_{e1}$] (type $1$ error): probability that the decoded pair $(m_1,m_2)$ satisfies $\hat m_1 \neq
m_1, \hat m_2 = m_2$,

\item[$P_{e2}$] (type $2$ error): probability that the decoded pair $(m_1,m_2)$ satisfies $\hat m_1=
m_1, \hat m_2 \neq m_2$,

\item[$P_{e3}$] (type $3$ error): probability that the decoded pair $(m_1,m_2)$ satisfies $\hat m_1 \neq m_1, \hat m_2 \neq
m_2$.

\end{itemize}
Because the error events are disjoint we have
\begin{equation}
P_e=P_{e1}+P_{e2}+P_{e3}
\end{equation}
In the next sequence of theorems and lemmas, we upper bound the
expected value of each error type and show that if $(R_1,R_2)$
satisfies the three inequalities that define $\underline{\mathcal
R}_n$ then the corresponding ${\bf E}[P_{ei}], i=1,2,3$ goes to zero
and hence ${\bf E} [P_e]$ goes to zero.

\begin{theorem}
\label{t_MLB} Suppose that an arbitrary message $m_1,m_2, 1\leq m_1
\leq M_1, 1\leq m_2 \leq M_2$, enters the encoder with feedback and
that ML decoding is employed. Let $E[P_{e_1}|m_1,m_2]$ denote the
probability of decoding error averaged over the ensemble of codes
when the messages $m_1,m_2$ were sent. Then for any choice of
$\rho,0 < \rho \leq 1$,
\begin{eqnarray}
\mathbf E[P_{e_1}|m_1,m_2] &\leq& (M_1-1)^{\rho}
\sum_{y^N,x_2^N}Q(x_2^N||z^{N-1}) \left[ \sum_{x_1^N}
Q(x_1^N||z_1^{N-1}) P(y^N||x_1^N,x_2^N)^{\frac{1}{(1+\rho)}} \right]
^{1+\rho},\label{e_MLB_a}\\
\mathbf E[P_{e_2}|m_1,m_2] &\leq& (M_2-1)^{\rho}
\sum_{y^N,x_1^N}Q(x_1^N||z_1^{N-1}) \left[ \sum_{x_2^N}
Q(x_2^N||z^{N-1}) P(y^N||x_1^N,x_2^N)^{\frac{1}{(1+\rho)}} \right]
^{1+\rho},\label{e_MLB_b}\\
\mathbf E[P_{e_3}|m_1,m_2] &\leq& ((M_2-1)(M_2-1))^{\rho}
\sum_{y^N}\left[ \sum_{x_1^N,x_2^N}
Q(x_1^N||z_1^{N-1})Q(x_2^N||z^{N-1})
P(y^N||x_1^N,x_2^N)^{\frac{1}{(1+\rho)}} \right] ^{1+\rho}. \nonumber \\
& & \label{e_MLB_c}
\end{eqnarray}
%where the expectation is with respect to the randomness in the
%ensemble.
\end{theorem}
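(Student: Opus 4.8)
The plan is to adapt Gallager's random-coding bound on the maximum-likelihood (ML) decoding error to the causal-conditioning setting, handling the three error types separately via the $\rho$-trick. Fix the sent pair $(m_1,m_2)$ and argue conditionally on the realized output $y^N$. The crucial structural observation is that, since each feedback sample $z_{l,i-1}$ is a deterministic function of $y^{i-1}$ and the encoders are deterministic, every hypothesized message $m_l'$ induces along the path $y^N$ a uniquely determined transmitted codeword $x_l^N(m_l')$, and the probability that the channel produces $y^N$ under the sent pair is exactly the causal-conditioning likelihood $P(y^N||x_1^N(m_1),x_2^N(m_2))=\prod_i P(y_i|y^{i-1},x_1^i(m_1),x_2^i(m_2))$, the causal conditioning absorbing the feedback dependence.

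Before the error analysis I would record the ensemble fact (established for the point-to-point channel in \cite{Permuter06_feedback_submit}) that is the engine of the whole argument: when a code-tree is generated at random from $Q(x_l^N||z_l^{N-1})$, its read-out along \emph{any} fixed feedback path is a codeword distributed exactly according to $Q(x_l^N||z_l^{N-1})$, and the read-outs of distinct messages (and of the two users) are mutually independent given the path. This is precisely what allows the ensemble average to factor through the causal-conditioning pmfs.

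For the type-$1$ event ($\hat m_1\neq m_1,\ \hat m_2=m_2$) I apply Gallager's bound to the error indicator: since some $m_1'\neq m_1$ must achieve likelihood at least that of $m_1$, for any $0<\rho\le 1$
\[
\mathbf 1[\text{type 1 error at }y^N]\le\left(\sum_{m_1'\neq m_1}\left(\frac{P(y^N||x_1^N(m_1'),x_2^N(m_2))}{P(y^N||x_1^N(m_1),x_2^N(m_2))}\right)^{\frac{1}{1+\rho}}\right)^{\rho}.
\]
Weighting by $P(y^N||x_1^N(m_1),x_2^N(m_2))$ (the probability of $y^N$) makes the correct-codeword likelihood appear to the power $\tfrac{1}{1+\rho}$. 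I then take the ensemble expectation: interchange the sum over $y^N$ with the expectation, condition on $y^N$ and on the $m_2$-tree (hence on $x_2^N$), use that the sent $m_1$-tree and the $M_1-1$ competing $m_1'$-trees are independent to factor the expectation, and apply Jensen's inequality (concavity of $t\mapsto t^{\rho}$) to pull the expectation inside the $\rho$-th power of the sum over $m_1'$. This produces the factor $(M_1-1)^{\rho}$ and turns both the correct and each competing term into $\sum_{x_1^N}Q(x_1^N||z_1^{N-1})P(y^N||x_1^N,x_2^N)^{1/(1+\rho)}$; combining the powers $1$ and $\rho$ gives exponent $1+\rho$, and averaging over $x_2^N\sim Q(x_2^N||z^{N-1})$ and summing over $y^N$ reproduces (\ref{e_MLB_a}). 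The type-$2$ bound (\ref{e_MLB_b}) follows verbatim with the two users interchanged. For type $3$ ($\hat m_1\neq m_1$ and $\hat m_2\neq m_2$) the competing hypotheses range over the $(M_1-1)(M_2-1)$ pairs $(m_1',m_2')$ with both indices wrong; both induced codewords are now random, jointly distributed as $Q(x_1^N||z_1^{N-1})Q(x_2^N||z^{N-1})$ and independent of the sent codewords, so the identical Jensen/independence steps yield the prefactor $((M_1-1)(M_2-1))^{\rho}$ and the joint expression in (\ref{e_MLB_c}).

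The main obstacle is the ensemble read-out lemma together with the interchange of the sum over $y^N$ and the expectation over the random code-trees: one must verify that, along the output-determined feedback path, the realized codewords of the sent and of the competing messages are independent draws from the causal-conditioning distributions, so that the expectation factors cleanly and Jensen's inequality applies termwise. Once this conditional-independence structure is in place, the three bounds reduce to routine Gallager-type manipulations.
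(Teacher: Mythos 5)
Your proposal is correct and follows essentially the same route as the paper's Appendix proof: both reduce to Gallager's random-coding argument, using the fact that a randomly drawn code-tree read out along the output-determined feedback path is a codeword distributed according to the causal-conditioning pmf, bounding the pairwise error via the $s$-power likelihood ratio with $s=1/(1+\rho)$, and applying the union bound with the $\rho$-trick separately for each of the three error types (with $(M_1-1)(M_2-1)$ competitors for type 3). The only cosmetic difference is that you invoke Jensen's inequality to move the ensemble expectation inside the $\rho$-th power, whereas the paper first averages over the competing codeword inside $P(A_{m_1'})$ and then applies $\min\{t,1\}\le t^{\rho}$; the two orderings yield the identical bound.
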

The proof is given in Appendix \ref{s_app_proof_MLB} and is similar
to \cite[Theorem 9]{Permuter06_feedback_submit} only that here we
take into account the fact that there are two encoders rather than
one.

Let $P_{ei}(s_0), i=1,2,3$ be the probability of error of type $i$
given that the initial state of the channel is $s_0$. Also let
$R_1=\frac{1}{N}\log M_1$ and $R_2=\frac{1}{N}\log M_2$ be the rate
of the code and $R_3$ be the sum rate, i.e. $R_3=R_1+R_2$. The
following theorem establishes exponential  bounds on  ${\bf E}
[P_{ei}(s_0)]$.

\begin{theorem} \label{t_fn}
%Suppose that an arbitrary message $m_1,m_2, 1\leq m_1 \leq
%2^{NR_1}, 1\leq m_2 \leq 2^{NR_2}$, then
The average probability of error over the ensemble, for all initial
states $s_0$, and all $\rho$, $0 \leq \rho \leq 1$, is bounded as
\begin{equation}\label{e_pfn}
{\bf E}[P_{ei}(s_0)|m_1,m_2]\leq|\mathcal{S}|2^{\{-N[-\rho
R_i+F_{N,i}(\rho, Q_N)]\}},\qquad i=1,2,3
\end{equation}
where
\begin{equation} \label{eq_fn}
F_{N,i}(\rho, Q_N)= -\frac{\rho \log |\mathcal{S}|}{N}+\left[
\min_{s_0} E_{N,i}(\rho, Q_N,s_0)\right],\qquad i=1,2,3
\end{equation}
\begin{eqnarray}\label{eq_E}
E_{N,1}(\rho,Q_N,s_0)&=&-\frac{1}{N}\log
\sum_{y^N,x_2^N}Q(x_2^N||z^{N-1}) \left[ \sum_{x_1^N}
Q(x_1^N||z_1^{N-1}) P(y^N||x_1^N,x_2^N,s_0)^{\frac{1}{(1+\rho)}}
\right]
^{1+\rho}\\
E_{N,2}(\rho,Q_N,s_0)&=&-\frac{1}{N}\log\sum_{y^N,x_1^N}Q(x_1^N||z_1^{N-1})
\left[ \sum_{x_2^N} Q(x_2^N||z^{N-1})
P(y^N||x_1^N,x_2^N.s_0)^{\frac{1}{(1+\rho)}} \right]
^{1+\rho}\\
E_{N,3}(\rho,Q_N,s_0)&=&-\frac{1}{N}\log \sum_{y^N}\left[
\sum_{x_1^N,x_2^N} Q(x_1^N||z_1^{N-1})Q(x_2^N||z^{N-1})
P(y^N||x_1^N,x_2^N,s_0)^{\frac{1}{(1+\rho)}} \right] ^{1+\rho}.
\label{eq_E3}
\end{eqnarray}
\end{theorem}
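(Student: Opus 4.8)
The plan is to reduce the analysis of each fixed initial state $s_0$ to the matched maximum-likelihood bound already established in Theorem \ref{t_MLB}, paying for the unknown initial state with a few factors of $|\mathcal S|$ that will reproduce exactly the leading constant $|\mathcal S|$ and the correction term $-\rho\log|\mathcal S|/N$ in $F_{N,i}$. The device is to have the decoder perform ML decoding not against any particular state-conditional law but against the state-averaged channel
\[
\bar P(y^N||x_1^N,x_2^N)\triangleq\frac{1}{|\mathcal S|}\sum_{s_0}P(y^N||x_1^N,x_2^N,s_0),
\]
which is a bona fide causal-conditioning channel (it sums to one over $y^N$) and, crucially, does not depend on the unknown initial state. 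Since this single decoder is used for every true $s_0$, the ensemble error probability when the initial state is drawn uniformly equals $\frac{1}{|\mathcal S|}\sum_{s_0}{\bf E}[P_{ei}(s_0)|m_1,m_2]$, and this is precisely the ensemble ML error probability for the channel $\bar P$. Theorem \ref{t_MLB} therefore applies verbatim with $P$ replaced by $\bar P$.

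Next I would pass from the uniform average back to a single fixed state. Because all terms are nonnegative, for any fixed $s_0$ we have ${\bf E}[P_{ei}(s_0)]\le\sum_{s_0'}{\bf E}[P_{ei}(s_0')]=|\mathcal S|\cdot\frac{1}{|\mathcal S|}\sum_{s_0'}{\bf E}[P_{ei}(s_0')]$, so the fixed-state error is at most $|\mathcal S|$ times the Theorem \ref{t_MLB} bound for $\bar P$. For the type-$1$ error this reads
\[
{\bf E}[P_{e1}(s_0)]\le |\mathcal S|(M_1-1)^{\rho}\sum_{y^N,x_2^N}Q(x_2^N||z^{N-1})\Big[\sum_{x_1^N}Q(x_1^N||z_1^{N-1})\,\bar P(y^N||x_1^N,x_2^N)^{\frac{1}{1+\rho}}\Big]^{1+\rho},
\]
and analogously for $i=2,3$ using the corresponding lines \eqref{e_MLB_b}--\eqref{e_MLB_c} of Theorem \ref{t_MLB}.

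The remaining work is purely algebraic: substitute $\bar P=\frac{1}{|\mathcal S|}\sum_{s_0'}P(\cdot||\cdot,s_0')$ and move the state sum outside. Since $\frac{1}{1+\rho}\le 1$, subadditivity of $t\mapsto t^{1/(1+\rho)}$ gives $\bar P^{1/(1+\rho)}\le|\mathcal S|^{-1/(1+\rho)}\sum_{s_0'}P(\cdot||\cdot,s_0')^{1/(1+\rho)}$; then, since $1+\rho\ge 1$, the power-mean (Jensen) inequality $\big(\sum_{s_0'}b_{s_0'}\big)^{1+\rho}\le|\mathcal S|^{\rho}\sum_{s_0'}b_{s_0'}^{1+\rho}$, with $b_{s_0'}=\sum_{x_1^N}Q(x_1^N||z_1^{N-1})P(y^N||x_1^N,x_2^N,s_0')^{1/(1+\rho)}$, interchanges the state sum with the outer $(1+\rho)$-power. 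Collecting the factor $|\mathcal S|^{-1}$ from the first step and $|\mathcal S|^{\rho}$ from the second, each summand over $s_0'$ becomes exactly $2^{-N E_{N,1}(\rho,Q_N,s_0')}$, so that
\[
{\bf E}[P_{e1}(s_0)]\le |\mathcal S|^{\rho}(M_1-1)^{\rho}\sum_{s_0'}2^{-N E_{N,1}(\rho,Q_N,s_0')}.
\]
Bounding $\sum_{s_0'}2^{-NE_{N,1}(\rho,Q_N,s_0')}\le|\mathcal S|\,2^{-N\min_{s_0'}E_{N,1}(\rho,Q_N,s_0')}$ and $(M_1-1)^{\rho}\le M_1^{\rho}=2^{N\rho R_1}$, and finally writing $|\mathcal S|^{1+\rho}=|\mathcal S|\,2^{N(\rho\log|\mathcal S|)/N}$, reproduces the claimed $|\mathcal S|\,2^{-N[-\rho R_1+F_{N,1}]}$ with $F_{N,1}$ as defined. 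The same three lines of algebra applied to the $i=2$ and $i=3$ bounds (using $M_3=M_1M_2$ and $R_3=R_1+R_2$) give the other two cases.

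The step that requires the most care—and the main obstacle—is the bookkeeping of the powers of $|\mathcal S|$ through the two inequalities (subadditivity with exponent $\le 1$ and power-mean with exponent $\ge 1$): it is their exact combination, together with the $|\mathcal S|$ from the uniform-to-fixed-state passage and the $|\mathcal S|$ from replacing $\sum_{s_0'}$ by its maximum, that produces precisely the constant $|\mathcal S|$ and the correction $-\rho\log|\mathcal S|/N$ inside $F_{N,i}$ rather than some looser constant. Everything else is a direct invocation of Theorem \ref{t_MLB} and the nonnegativity of the summands.
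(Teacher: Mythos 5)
Your proposal is correct and follows essentially the same route as the paper, which omits the proof by reference to the point-to-point analogue (Theorem 9 of \cite{Permuter06_feedback_submit}, itself Gallager's classic finite-state argument): ML decoding against the state-averaged channel $\bar P$, the factor $|\mathcal S|$ from passing from the uniform average back to a fixed $s_0$, and the subadditivity/power-mean bookkeeping that yields exactly the $-\rho\log|\mathcal S|/N$ correction in $F_{N,i}$. The accounting of the powers of $|\mathcal S|$ in your write-up checks out for all three error types.
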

%\begin{proof}
The proof is based on algebraic manipulation of the bounds given in
(\ref{e_MLB_a})-(\ref{e_MLB_c}). It is similar to the proof of
Theorem 9 in \cite{Permuter06_feedback_submit} and therefore
omitted. There are two differences between the proofs (and both are
straightforward to accommodate): Here the input distribution
$Q_N=Q(x_1^N||z_1^N)Q(x_2^N||z_2^N)$ is arbitrary while in
\cite{Permuter06_feedback_submit} we chose the one that maximizes
the error exponent. Second, here we bound the averaged error over
the ensemble and in \cite{Permuter06_feedback_submit} we have an
additional step where we claim that there exists a code that has an
error that is bounded by the expression in (\ref{e_pfn}). Because of
this difference the bound on the probability of error in
\cite{Permuter06_feedback_submit} has an additional factor of $4$.
%\end{proof}

The following theorem presents a few properties of the functions
$E_{N,i}(\rho,Q_N,s_0), \; i=1,2,3$, such as positivity of the
function and its derivative, convexity with respect to $\rho$, and
an upper bound on the derivative which is achieved for $\rho=0$.
\begin{lemma} \label{l_der}
The term $E_{N,i}(\rho,Q_N,s_0)$ has the following properties:

\begin{equation}\label{eq_e1}
E_{N,i}(\rho,Q_N,s_0) \geq 0; \quad \rho \geq 0, i=1,2,3,
\end{equation}

\begin{eqnarray}\label{eq_e2}
% \frac{1}{N}\mathcal{I}(Q(x^N||y^{N-1}),P(y^N||x^N,s_0)) &\geq&\frac{\partial
%E_{o,N}(\rho, Q(x^N||z^{N-1}),s_0)}{\partial \rho} > 0; \quad \rho
%\geq 0,\\
 \frac{1}{N}I(X_1^N\to Y^N||X_2^N,s_0) &\geq&\frac{\partial
E_{N,1}(\rho, Q_N,s_0)}{\partial \rho} > 0; \quad \rho \geq 0\nonumber \\
 \frac{1}{N}I(X_2^N\to Y^N||X_1^N,s_0) &\geq&\frac{\partial
E_{N,2}(\rho, Q_N,s_0)}{\partial \rho} > 0; \quad \rho \geq 0\nonumber \\
 \frac{1}{N}I(X_1^N,X_2^N\to Y^N|s_0) &\geq&\frac{\partial
E_{N,3}(\rho, Q_N,s_0)}{\partial \rho} > 0; \quad \rho \geq 0
\end{eqnarray}

\begin{equation}
\frac{\partial^2 E_{N,i}(\rho,Q_N,s_0)}{\partial \rho^2}
> 0; \quad \rho \geq 0, i=1,2,3.
\end{equation}
Furthermore, equality holds in (\ref{eq_e1}) when $\rho=0$, and
equality holds on the left sides of eq. (\ref{eq_e2}) when $\rho=0$
for $i=1,2,3$.\end{lemma}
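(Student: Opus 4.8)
The plan is to recognize that each $E_{N,i}(\rho,Q_N,s_0)$ is a Gallager-type $E_0$ function in which the ordinary channel law is replaced by the causally conditioned pmf $P(y^N||x_1^N,x_2^N,s_0)$ and the input law by the causal-conditioning distributions $Q(x_1^N||z_1^{N-1})$ and $Q(x_2^N||z_2^{N-1})$. Since the sums over $x_1^N,x_2^N,y^N$ factor exactly as in the point-to-point case treated in \cite{Permuter06_feedback_submit} (itself following \cite[Ch.~5]{Gallager68}), the whole differentiation calculus carries over; the three cases differ only in how the two inputs are grouped. For $i=3$ the pair $(X_1^N,X_2^N)$ is treated as a single super-input drawn from the product law $Q(x_1^N||z_1^{N-1})Q(x_2^N||z_2^{N-1})$, while for $i=1,2$ one input is held in the outer sum and plays the role of side information known to the decoder. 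I would therefore carry out one generic computation and specialize.

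First I would verify $E_{N,i}(0,Q_N,s_0)=0$, which is the equality statement accompanying (\ref{eq_e1}): at $\rho=0$ the exponent $1+\rho$ is one, the inner bracket collapses to $\sum_{x_1^N}Q(x_1^N||z_1^{N-1})P(y^N||x_1^N,x_2^N,s_0)$, and the remaining sum over $(y^N,x_2^N)$ is a sum of a probability and equals one, so the logarithm vanishes. Non-negativity of $E_{N,i}$ for $\rho\ge0$ then follows, since $E_{N,i}$ starts at zero and its derivative will be shown to be positive.

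Next I would differentiate once in $\rho$ and evaluate at the origin. After the standard rearrangement the derivative at $\rho=0$ equals
\[ \frac{1}{N}\sum_{y^N,x_1^N,x_2^N}Q(x_1^N||z_1^{N-1})Q(x_2^N||z_2^{N-1})P(y^N||x_1^N,x_2^N,s_0)\log\frac{P(y^N||x_1^N,x_2^N,s_0)}{\sum_{{x'}_1^N}Q({x'}_1^N||z_1^{N-1})P(y^N||{x'}_1^N,x_2^N,s_0)}. \]
The kernel inside the logarithm is precisely the one defining the functional $\mathcal I$ in (\ref{e_calI_def}); since the code-trees of the two encoders are generated independently, the inputs are causal-conditionally independent given $Y^{N-1}$, so Lemma~\ref{l_mutual_becomes_directed_causal} identifies this expression with $\frac1N I(X_1^N\to Y^N||X_2^N,s_0)$ (and analogously for $i=2,3$). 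This furnishes the equalities on the left of (\ref{eq_e2}) at $\rho=0$. The remaining assertions in (\ref{eq_e2}), namely $\partial E_{N,i}/\partial\rho>0$ and its domination by the directed information, will follow once $\partial E_{N,i}/\partial\rho$ is shown to be maximized at $\rho=0$.

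The one genuinely computational step, and the main obstacle, is the second derivative. Writing $E_{N,i}=-\frac1N\log g_i(\rho)$ and expressing $\partial^2 E_{N,i}/\partial\rho^2$ through the first two $\rho$-derivatives of $g_i$, the result reduces, up to positive factors, to a variance of a log-likelihood statistic under a $\rho$-tilted measure, whose sign is pinned down by a single application of H\"older's inequality (equivalently, Jensen applied to $t\mapsto t^{1+\rho}$). This is the only place where an inequality rather than an identity is used; once the sign is fixed it yields the second-derivative property and the monotonicity of $\partial E_{N,i}/\partial\rho$, which places its extremum at $\rho=0$ and thereby delivers the bound in (\ref{eq_e2}) together with equality there. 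Because the algebra is identical for $i=1,2,3$, I would complete it once for the grouped super-input and read off the other two cases.
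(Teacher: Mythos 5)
Your overall route is the same as the paper's: the paper proves Lemma \ref{l_der} by observing that each $E_{N,i}$ is Gallager's $E_0$ function with $Q(x_1^N)Q(x_2^N)$ replaced by $Q(x_1^N||z_1^{N-1})Q(x_2^N||z_2^{N-1})$ and $P(y^N|x_1^N,x_2^N,s_0)$ by $P(y^N||x_1^N,x_2^N,s_0)$, so that the calculus of \cite[Thm.~5.6.3]{Gallager68} and \cite[eq.~(2.20)]{Gallager85} carries over verbatim and Lemma \ref{l_mutual_becomes_directed_causal} converts the mutual-information bound on the derivative into the directed-information bound. Your evaluation of $E_{N,i}(0,Q_N,s_0)=0$ and of the derivative at $\rho=0$ (identified with $\frac1N I(X_1^N\to Y^N||X_2^N,s_0)$ via Lemma \ref{l_mutual_becomes_directed_causal}) is exactly this substitution argument.

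There is, however, a genuine inconsistency in your handling of the second derivative, and it matters because the upper bound in (\ref{eq_e2}) hangs on it. You argue that the derivative of $E_{N,i}$ attains its extremum at $\rho=0$, where it equals the directed information, and that this ``delivers the bound in (\ref{eq_e2})''; for that chain to give an \emph{upper} bound you need $\partial E_{N,i}/\partial\rho$ to be \emph{decreasing}, i.e.\ $\partial^2 E_{N,i}/\partial\rho^2\le 0$. That is indeed what H\"older/Jensen gives (log-convexity of the inner sum, hence concavity of $E_0$ --- this is Gallager's actual statement), but it is the opposite sign of the convexity claim $\partial^2 E_{N,i}/\partial\rho^2>0$ that appears in the lemma and that you also claim to establish. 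The two cannot both hold for a non-affine function, so as written your proposal proves mutually contradictory assertions; you need to commit to concavity and note that the displayed sign in the lemma is a misprint inherited from the statement. A secondary gap of the same flavor: strict positivity of $\partial E_{N,i}/\partial\rho$ for all $\rho\ge 0$ does not follow from monotonicity plus positivity at $\rho=0$ once the derivative is decreasing; as in Gallager's proof it must be argued directly at each $\rho$ (the derivative is a divergence-type expectation under the $\rho$-tilted measure, positive whenever the directed information is positive).
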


The proof of the theorem is the same proof as  \cite[eq.
(2.20)]{Gallager85}, \cite[Theorem 5.6.3]{Gallager68}. In
\cite{Gallager85} the arguments $Q_N$ of $E_{N,1}(\rho, Q_N,s_0)$
are regular conditioning i.e., $Q(x_1^N)Q(x_2^N)$, and the channel
is given by $P(y^N|x_1^N,x_2^N,s_0)$, hence the derivative of
$E_{N,1}(\rho, Q_N,s_0)$ with respect to $\rho$ is upper-bounded by
$I(X_1^N;Y^N|X_2^N,s_0)$. Here we replace $Q(x_1^N)Q(x_2^N)$ with
$Q(x_1^N||z_1^{N-1})Q(x_2^N||z_2^{N-1})$ and
$P(y^N|x_1^N,x_2^N,s_0)$ with $P(y^N||x_1^N,x_2^N,s_0)$ and,
according to Lemma \ref{l_mutual_becomes_directed_causal}, the
upper-bound becomes $I(X_1^N\to Y^N||X_2^N,s_0)$.
%Theorem 5.6.3 in
%\cite{Gallager68} states these same properties with
%$Q(x^{N}||z^{N-1})$ and $P(y^N||x^N)$ replaced by $Q(x^{N})$ and
%$P(y^N|x^N)$, respectively. The proof of those properties only
%requires that $\sum_{x^N} Q(x^{N}||z^{N-1})=1$ and $\sum_{x^N,y^N}
%Q(x^{N}||z^{N-1}) P(y^N||x^N,s_0)=1$, which hold according to
%Lemmas \ref{l_sumQ} and \ref{l_QP}. By using Lemma \ref{l_dir_mut}
%we can substitute $\mathcal{I}(Q(x^N||y^{N-1}),P(y^N||x^N,s_0))$
%in (\ref{eq_e2}) by the directed mutual information
%$I(X^N\rightarrow Y^N|s_0)$.
%
The next lemma establishes the sup-additivity of $F_{N,i}(\rho,Q_N),
i=1,2,3$.
\begin{lemma} \label{l_fn}
{\it Sup-additivity of $F_{N,i}(\rho,Q_N)$.} For any finite-state
channel, $F_{N,i}(\rho,Q_N)$, as given by eq. (\ref{eq_fn}),
satisfies
\begin{equation}
F_{n+l,i}(\rho,Q_{n+l})\geq \frac{n}{n+l} F_{n,i}(\rho,Q_n) +
\frac{l}{n+l}F_{l,i}(\rho,Q_l), \qquad i=1,2,3.
\end{equation}
%for all positive integers $n$ and $l$ with $N=n+l$.
\end{lemma}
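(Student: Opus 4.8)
The plan is to establish a multiplicative counterpart of the claimed additive inequality at the level of the un-logged Gallager sums, in the spirit of Gallager's finite-state sup-additivity argument \cite[Ch.~5]{Gallager68}. Fix $\rho\in[0,1]$ and a type $i\in\{1,2,3\}$, and for an initial state $s_0$ let $G_{N,i}(s_0)$ denote the summation appearing under the logarithm in (\ref{eq_E})--(\ref{eq_E3}), so that $E_{N,i}(\rho,Q_N,s_0)=-\tfrac1N\log G_{N,i}(s_0)$ and, by the definition (\ref{eq_fn}), $F_{N,i}(\rho,Q_N)=-\tfrac{\rho\log|\mathcal S|}{N}-\tfrac1N\log\max_{s_0}G_{N,i}(s_0)$. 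With $N=n+l$, I would reduce the lemma to the single multiplicative estimate
\begin{equation}
\max_{s_0}G_{n+l,i}(s_0)\;\leq\;|\mathcal S|^{\rho}\,\Big(\max_{s_0}G_{n,i}(s_0)\Big)\Big(\max_{s_0}G_{l,i}(s_0)\Big),
\end{equation}
since taking $-\tfrac1{n+l}\log(\cdot)$ and distributing the weights $\tfrac{n}{n+l}$ and $\tfrac{l}{n+l}$ makes the factor $|\mathcal S|^{\rho}$ contribute exactly the penalty $-\tfrac{\rho\log|\mathcal S|}{n+l}$ carried by the definition of $F$; a short calculation then shows all penalty terms cancel, leaving the stated sup-additivity.

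To prove the multiplicative estimate I would split every block at time $n$. Two structural facts drive the argument: the input law factors as $Q_{n+l}=Q_nQ_l$, separately for each encoder, by the concatenation construction (\ref{e_conacatinationK}); and the finite-state kernel obeys the midpoint identity $P(y^{n+l}||x_1^{n+l},x_2^{n+l},s_0)=\sum_{s_n}P(y^n,s_n||x_1^n,x_2^n,s_0)\,P(y_{n+1}^{n+l}||x_{1,n+1}^{n+l},x_{2,n+1}^{n+l},s_n)$, which holds because, conditioned on $s_n$, the second block is independent of the first. Substituting this into $G_{n+l,i}(s_0)$ and using $(\sum_{s_n}a_{s_n})^{1/(1+\rho)}\leq\sum_{s_n}a_{s_n}^{1/(1+\rho)}$ (valid since $\tfrac1{1+\rho}\leq 1$) pulls the outer exponent inside the state sum; the factorization of $Q_{n+l}$ then separates the inner sum into an $n$-block factor depending on $(y^n,s_n)$ and an $l$-block factor depending on $(y_{n+1}^{n+l},s_n)$. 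Raising the bracket back to the power $1+\rho$ and applying $(\sum_{s_n}c_{s_n})^{1+\rho}\leq|\mathcal S|^{\rho}\sum_{s_n}c_{s_n}^{1+\rho}$ produces the penalty factor $|\mathcal S|^{\rho}$, after which $\sum_{y^{n+l}}=\sum_{y^n}\sum_{y_{n+1}^{n+l}}$ fully separates the blocks, giving $G_{n+l,i}(s_0)\leq|\mathcal S|^{\rho}\sum_{s_n}\hat G_{n,i}(s_0,s_n)\,G_{l,i}(s_n)$, where $\hat G_{n,i}(s_0,s_n)$ is the analogue of $G_{n,i}$ built from the terminal-state-resolved kernel $P(y^n,s_n||\cdots,s_0)$.

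It remains to remove the intermediate state $s_n$, and this is the step I expect to be the crux. The second block is handled by the crude bound $G_{l,i}(s_n)\leq\max_{s}G_{l,i}(s)$, pulling it out of the sum. For the first block I would invoke the reverse Minkowski inequality: since $\tfrac1{1+\rho}\leq 1$, the weighted quasi-norm $\mathbf v\mapsto\big(\sum_x Q(x)\,v_x^{1/(1+\rho)}\big)^{1+\rho}$ is superadditive in the nonnegative vector $\mathbf v$ indexed by the free inputs, so summing the terminal-state-resolved block over $s_n$ only decreases it, i.e. $\sum_{s_n}\hat G_{n,i}(s_0,s_n)\leq G_{n,i}(s_0)$ (using $P(y^n||\cdots,s_0)=\sum_{s_n}P(y^n,s_n||\cdots,s_0)$). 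Combining the two bounds gives $G_{n+l,i}(s_0)\leq|\mathcal S|^{\rho}G_{n,i}(s_0)\max_{s}G_{l,i}(s)$, and maximizing over $s_0$ yields the multiplicative estimate.

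The main obstacle is precisely the intermediate state $s_n$, which is simultaneously the terminal state of the first block and the initial state of the second and hence cannot be optimized independently in the two factors; recognizing that reverse Minkowski collapses the first block back to $G_{n,i}(s_0)$ while a worst-case bound suffices for the second, and verifying that the resulting $|\mathcal S|^{\rho}$ is exactly absorbed by the state penalty in the definition of $F$, is the delicate part. Finally, the three types $i=1,2,3$ differ only in which encoder's input is retained in the outer summation; since $Q_{n+l}=Q_nQ_l$ factors over each encoder separately, every inequality above applies verbatim, so the identical argument establishes the lemma for all three.
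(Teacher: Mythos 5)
Your proposal is correct and follows essentially the same route as the paper, which defers to the point-to-point sup-additivity proof of \cite[Lemma 11]{Permuter06_feedback_submit} (itself Gallager's classical argument): split at the intermediate state $s_n$, use $(\sum_s a_s)^{1/(1+\rho)}\le\sum_s a_s^{1/(1+\rho)}$ and $(\sum_s c_s)^{1+\rho}\le|\mathcal S|^\rho\sum_s c_s^{1+\rho}$, bound the second block by its worst initial state, and collapse the first block via the reverse Minkowski inequality, with the $|\mathcal S|^\rho$ factor exactly absorbed by the $-\rho\log|\mathcal S|/N$ term in the definition of $F_{N,i}$. Your bookkeeping of the penalty terms and your observation that the three error types differ only in which encoder's input sits in the outer summation are both accurate.
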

The proof steps are identical to the proof of the sub-additivity for
the point-to-point channel \cite[Lemma
11]{Permuter06_feedback_submit}.

Invoking this lemma on the pmf $Q_N=\prod_{k=1}^K Q_n$ where $N=nK$
%$Q(x_1^N||z_1^{N-1})Q(x_2^N||z_2^{N-1})=\prod_{k=1}^K
%Q(x^n||z^{n-1})$
 we get
\begin{equation}\label{e_FN_n}
 F_{N,i}(\rho,Q_N)\geq
K \frac{n}{N} F_{n,i}(\rho,Q_n)=F_{n,i}(\rho,Q_n).
\end{equation}

Let us define
 \begin{eqnarray}
 \underline C_{N,1}(Q_N) &=& \frac{1}{N}  \min_{s_0}I(X_1^N
\rightarrow Y^N ||X_2^N, s_0)\\
 \underline C_{N,2}(Q_N) &=& \frac{1}{N}  \min_{s_0}I(X_2^N
\rightarrow Y^N ||X_1^N, s_0)\\
 \underline C_{N,3}(Q_N) &=& \frac{1}{N}  \min_{s_0}I(X_1^N,X_2^N
\rightarrow Y^N | s_0)
 \end{eqnarray}
 where the joint distribution of $X_1^N,X_2^N,Y^N$ conditioned on $s_0$ is given by
 $P(x_1^N,x_2^N,y^N|s_0)=Q(x_1^N||z_1^{N-1})Q(x_2^N||z_2^{N-1})P(y^N||x_1^N,x_2^N,s_0)$.

%A pair of rates $(R_1,R_2)$ is {\it achievable} if there exists a
%sequence of block codes $(N,\lceil 2^{NR_1} \rceil,\lceil 2^{NR_1}
%\rceil)$ such that the probability of error
%$P_{e}(s_0)=P_{e1}(s_0)+P_{e2}(s_0)+P_{e3}(s_0)$ averaged over the
%messages tends to zero as $N \to \infty$ for all initial states
%$s_0$ \cite{CovThom}.
%
Theorem \ref{t_inner_bound} (inner bound) given in Sec. \ref{sec:
Main Theorems} states that for every $n$ and $0\leq R_i<
\underline C_{n,i}(Q_n)-\frac{\log
|\mathcal S|}{n},\; i=1,2,3 $ (recall, $R_3\triangleq R_1+R_2$) %and $\; R_1+R_2< \underline
%C_{n,3}(Q_n)-\frac{\log |\mathcal S|}{n}$
and every $\eta>0$  there exists an $N$ and an  $(N,\lceil
2^{NR_1} \rceil,\lceil 2^{NR_1} \rceil)$ code with a
 probability of error $P_{e}(s_0)$
(averaged over the messages) that is less than $\eta$ for all
initial states $s_0$.

%
%The following theorem states that any pair $(R_1,R_2)$ that
%satisfies $R_1<\underline C_1$, $R_2<\underline C_2$,and
%$R_3<\underline C_3$ (recall, $R_3\triangleq R_1+R_2$) is
%achievable.
%
%
%
%\begin{theorem}\label{t_ach}
%For every $n$ and $0\leq R_i< \underline C_{n,i}(Q_n)-\frac{\log
%|\mathcal S|}{n},\; i=1,2,3 $ %and $\; R_1+R_2< \underline
%%C_{n,3}(Q_n)-\frac{\log |\mathcal S|}{n}$
%and every $\eta>0$  there exists an $N$ and an  $(N,\lceil 2^{NR_1}
%\rceil,\lceil 2^{NR_1} \rceil)$ code with a
% probability of error $P_{e}(s_0)$
%averaged over the messages that is less than $\eta$ for all initial
%states $s_0$.
%\end{theorem}
%

{\it Proof of Theorem \ref{t_inner_bound}:} The proof consists of
the following three steps:
\begin{itemize}
\item Showing that for a fixed $n$ if $ R_i< \underline C_{n,i}(Q_n)-\frac{\log
|\mathcal S|}{n},\; i=1,2,3 $ then there exists $\rho^*$ such
that,
\begin{equation}
F_{n,i}(\rho^*, Q_n)-\rho^* R_i>0, \; i=1,2,3.
\end{equation}
\item We choose $\epsilon< \min_{i\in\{1,2,3\}} F_{n,i}(\rho^*,
Q_n)-\rho^* R_i$ and show that for sufficiently large $N$
\begin{equation}\label{e_pei}
{\bf E}[P_{ei}(s_0)|m_1,m_2]\leq2^{-N([F_{n,i}(\rho^*, Q_n)-\rho^*
R_i]-\epsilon)},\; \forall s_0.
\end{equation}

\item From the last step we deduce the existence of a $(N,\lceil
2^{NR_1} \rceil,\lceil 2^{NR_1} \rceil)$ code s.t.
\begin{equation}
P_{e}(s_0)<\eta,\; \forall s_0.
\end{equation}
\end{itemize}
First step: for any pair $(R_1,R_2)$, we can rewrite eq.
(\ref{e_pfn}) for i=1,2,3 as
\begin{equation}%\label{e_pfn}
{\bf E}[P_{ei}(s_0)|m_1,m_2]\leq2^{-N(F_{N,i}(\rho, Q_N)-\rho
R_i-\frac{\log |\mathcal{S}|}{N})}.
\end{equation}
By using (\ref{e_FN_n}), which states that $F_{N,i}(\rho,Q_N)\geq
F_{n,i}(\rho,Q_n)$, we get
\begin{equation}\label{e_pei}
{\bf E}[P_{ei}(s_0)|m_1,m_2]\leq2^{-N(F_{n,i}(\rho, Q_n)-\rho
R_i-\frac{\log |\mathcal{S}|}{N})}.
\end{equation}
%If we choose the $\rho_i$ that maximizes $F_{ni}(\rho)-\rho R $
Note that $F_{n,i}(\rho,Q_n)$ and therefore
$F_{n,i}(\rho,Q_n)-\rho R$ is continuous in $\rho\in[0,1]$, so
there exists a maximizing $\rho$.
%, then (\ref{e_pfepsilon})
%becomes (\ref{e_pei}) which proves the first step.
%Second step:
Let us show that if $R_1< \underline C_{n,1}(Q_n)-\frac{\log
|\mathcal S|}{n} $, then $\max_{0\leq \rho \leq
1}[F_{n,1}(\rho,Q_n)-\rho R_1]>0$ %, w will show the first
%step.
(the cases $i=2,3$ are identical to $i=1$). Let us define
$\delta\triangleq \underline C_{n,1} - R_1$ . From Lemma
\ref{l_der}, we have
%\begin{equation}
%\frac{\partial E_{n,1}(\rho,Q_n,s_0)}{\partial \rho} \leq \underline
%C_{n,1}(Q_n)), \qquad \forall s_0,
%\end{equation}
%Note
that $E_{n,1}(\rho,Q_N,s_0)$ is zero when $\rho=0$, is a
continuous function of $\rho$, and its derivative at zero with
respect to $\rho$ is equal or greater to $\underline C_{n,1}$,
which satisfies $\underline C_{n,1}\geq R_1+\frac{\log
|\mathcal{S}|}{n}+\frac{\delta}{2}$. Thus, for each state $s_0$
there is a range $\rho>0$ such that
\begin{equation}\label{e_EoN}
E_{n,1}(\rho,Q_N,s_0) - \rho(R_1+\frac{\log |\mathcal{S}|}{n}) > 0.
\end{equation}
Moreover, because the number of states is finite, there exists a
$\rho^*>0$ for which the inequality (\ref{e_EoN}) is true for all
$s_0$. Thus, from the definition of $F_{n,1}(\rho^*,Q_n)$ given in
(\ref{eq_fn}) and from (\ref{e_EoN}),
\begin{equation}
F_{n,1}(\rho^*,Q_n) =-\rho^*\frac{\log
|\mathcal{S}|}{n}+\min_{s_0} E_{n,1}(\rho^*,Q_n,s_0)
>\rho^*R_1, \qquad \forall s_0.
\end{equation}
%and therefore the exponent, $\max_{0\leq \rho \leq 1}[F_{n,1}-\rho
%R_1]$, is strictly positive for $R_1<\underline
%C_{n,1}(Q_n)-\frac{\log |\mathcal{S}|}{n}$.

Second step: We choose a positive number $\epsilon$ such that
$\epsilon< \min_{i\in\{1,2,3\}} F_{n,i}(\rho^*, Q_n)-\rho^* R_i$.
It follows from (\ref{e_pei}) that for every $N$ that satisfies
$N>\frac{\log |\mathcal S|}{\epsilon}$,
\begin{equation}\label{e_pfepsilon}
{\bf E}[P_{ei}(s_0)|m_1,m_2]\leq2^{-N(F_{n,i}(\rho^*, Q_n)-\rho^*
R_i-\epsilon)},
\end{equation}
and according to the first step of the proof the exponent
$F_{n,i}(\rho^*, Q_n,s_0)-\rho^* R_i-\epsilon$ is strictly positive.

Third step: According to the previous step, for all
$\frac{\eta}{3|\mathcal S|+1}>0$ there exists an $N$ such that
${\bf E}[P_{ei}(s_0)|m_1,m_2]\leq \frac{\eta}{3|\mathcal S+1|}$
for all $i\in{1,2,3}$ all $s_0\in\mathcal S$ and all messages.
Since $P_{e}(s_0)=\sum_{i=1}^3P_{ei}(s_0)$, then ${\bf
E}[P_{e}(s_0)|m_1,m_2]\leq \frac{\eta}{|\mathcal S|+1}$;
furthermore ${\bf E}[P_{e}(s_0)]\leq \frac{\eta}{|\mathcal S|+1}$
for all $s_0\in\mathcal S$. By using the Markov inequality, we
have
\begin{eqnarray}
\Pr(P_{e}(s_0)\geq \eta)\leq \frac{1}{|\mathcal S|+1},
\end{eqnarray}
and by using the union bound we have
\begin{eqnarray}
\Pr(P_{e}(s_0)\geq \eta, \text{for some } s_0\in \mathcal S)\leq
\sum_{s_0\in \mathcal S}\Pr(P_{e}(s_0)\geq \eta)= \frac{|\mathcal S
|}{|\mathcal S|+1} < 1.
\end{eqnarray}
Because the probability over the ensemble of codes of having a code
with  error probability (averaged over all messages) that is less
than $\eta$ for all initial states is positive, there must exist at
least one code that has an error probability (averaged over all
messages) that is less than $\eta$ for all initial states. \hfill
\QED

\section{Proof of the Outer Bound (Theorem \ref{t_outer_bound})}
\label{sec: Proof of the outer bound} In this section we prove
Theorem \ref{t_outer_bound}, which states that for any FS-MAC
there exists a distribution
$Q(x_1^n||z_1^{n-1})Q(x_2^n||z_2^{n-1})$ such that the following
inequalities hold:
\begin{eqnarray}\label{e_outer_bound}
R_1 &\leq&  \frac{1}{n}I(X_1^n \to Y^n ||X_2^{n})+\epsilon_n \nonumber \\
R_1 &\leq&  \frac{1}{n}I(X_2^n \to Y^n ||X_1^{n})+\epsilon_n \nonumber \\
R_1+R_2 &\leq& \frac{1}{n}I((X_1,X_2)^n \to Y^{n})+\epsilon_n,
\end{eqnarray}
where $\epsilon_n$ goes to zero as $n$ goes to infinity.

{\it Proof of Theorem \ref{t_outer_bound}:} Let $W_1$ and $W_2$ be
two independent messages, chosen independently and according to a
uniform distribution ${\Pr(W_l=w_l)=2^{-nR_l}},l=1,2$. The input
to the channel from encoder $l$ at time $i$ is $x_{li}$, and is a
function of the message $W_i$ and the arbitrary deterministic
feedback output $z_l^{i-1}(y^{i-1})$.

The following sequence of equalities and inequalities proves that
if a code that achieves rate $R_1$ exists then  the first
inequality holds, i.e., $R_1 \leq \frac{1}{n}I(X_1^n \to Y^n
||X_2^{n})+\epsilon_n$:
\begin{eqnarray}\label{e_con}\label{e_outer_bound_proof}
nR_1 &\stackrel{(a)}{=}& H(W_1) \nonumber \\
&\stackrel{(b)}{=}&H(W_1|W_2) \nonumber \\
&=&I(W_1;Y^n|W_2)+H(W_1|Y^n,W_2) \nonumber \\
& \stackrel{(c)}{\leq} & I(Y^n;W_1|W_2) + 1 + P_e^{(n)}nR \nonumber \\
&=& H(Y^n|W_2)-H(Y^n|W_1,W_2)+1 + P_e^{(n)}nR \nonumber \\
&\stackrel{(d)}{=}& \sum_{i=1}^{n} H(Y_i|Y^{i-1},W_2) - \sum_{i=1}^{n} H(Y_i|W_1,W_2,Y^{i-1})+1 + P_e^{(n)}nR \nonumber \\
&\stackrel{(e)}{=}& \sum_{i=1}^{n} H(Y_i|Y^{i-1},W_2,X_2^i) - \sum_{i=1}^{n} H(Y_i|W_1,W_2,Y^{i-1},X_1^i,X_2^i)+1+ P_e^{(n)}nR \nonumber \\
&\stackrel{(f)}\leq& \sum_{i=1}^{n} H(Y_i|Y^{i-1},X_2^i) - \sum_{i=1}^{n} H(Y_i|Y^{i-1},X_1^i,X_2^i)+1+ P_e^{(n)}nR \nonumber \\
&=& \sum_{i=1}^{n} I(Y_i;X_1^i|Y^{i-1},X_2^i)+1+ P_e^{(n)}nR\nonumber \\
&\leq& I(X_1^n\to Y^n||X_2^n)+1+ P_e^{(n)}nR,
%&\stackrel{(g)}{\leq}& I(X_1^n\to Y^n||X_2^n,S_0)+\log |\mathcal{S}|+1+ P_e^{(n)}nR\nonumber \\
%&=&  \sum_{s_0\in {\cal S}} p(s_0) I(X^n\to Y^n|s_0)+\log |\mathcal{S}|+1+ P_e^{(n)}nR\nonumber \\
%&\leq&  \max_{s_0} I(X^n\to Y^n|s_0)+\log |\mathcal{S}|+1+ P_e^{(n)}nR \\
\end{eqnarray}
where,
\begin{itemize}
\item[(a)] and (b) follow from the fact that the messages $W_1$ and
$W_2$ are independent and chosen according to a  uniform
distribution,
\item[(c)] follows from Fano's inequality,
\item[(d)] follows from the chain rule,
\item[(e)] follows from the fact that $x_{1i}$
is a deterministic function given the message $W_1$ and the feedback
$z_1^{i-1}$, where the feedback $z_1^{i-1}$ is a deterministic
function of the output $y^{i-1}$,
\item[(f)] follows from the fact that the random
variables $W_1,W_2,X_1^i,X_2^i,Y^i$ form the Markov chain
$(W_1,W_2)-(X_1^i,X_2^i,Y^{i-1})-Y_i$.
%\item[(g)] follows from Lemma \ref{l_diff_cond_S}.
\end{itemize}

Dividing (\ref{e_outer_bound_proof}) by $n$, we conclude that if
there exists a code for which the error probability of decoding
the messages $W_1,W_2$ is $P_e^{(n)}$ then the distribution
$Q(x_1^n||z_1^{n-1})Q(x_2^n||z_2^{n-1})$ induced by the code
satisfies the first inequality of the outer bound theorem where
$\epsilon_n=\frac{1}{n}+ P_e^{(n)}R$. The proofs of the other two
inequalities in (\ref{e_outer_bound}) follow by a completely
analogous sequence of steps as in (\ref{e_outer_bound_proof}): The
proof of the second inequality of the outer bound starts with the
equalities $R_2=H(W_2)=H(W_2|W_1)$ and the third with
$R_1+R_2=H(W_1,W_2)$. \hfill \QED
\begin{corollary}\label{c_outer_bound}
The outer bound given in Theorem \ref{t_outer_bound} implies that
$\liminf {\mathcal R}_n$ is an outer bound for the achievable
region.
\end{corollary}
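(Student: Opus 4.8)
The plan is to establish the set inclusion that defines an outer bound, namely that every achievable pair, and hence the whole capacity region, lies in $\liminf \mathcal R_n$. I would begin by recalling from Appendix \ref{s_app_supadditive} the characterization of the lower limit: $x \in \liminf \mathcal R_n$ precisely when $d(x,\mathcal R_n) \to 0$, where $d(x,A) = \inf_{y \in A}\|x-y\|$, or equivalently when there is a sequence $x_n \in \mathcal R_n$ with $x_n \to x$; I would also use the standard fact that $\liminf \mathcal R_n$ is a closed set.

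Now fix any achievable pair $(R_1,R_2)$. Theorem \ref{t_outer_bound} supplies, for each $n$, an input distribution $Q(x_1^n||z_1^{n-1})Q(x_2^n||z_2^{n-1})$ under which the three $\epsilon_n$-relaxed bounds hold. The key step is to convert these relaxed bounds into exact membership: I would consider the clipped point $r_n = \big((R_1-\epsilon_n)^+,(R_2-\epsilon_n)^+\big)$, which lies in $\mathbb R_+^2$, and verify the three inequalities defining $\mathcal R_n$ in (\ref{e_def_Rn}) for this same distribution. The two single-rate constraints are immediate, since each coordinate of $r_n$ equals $\max\{R_i-\epsilon_n,0\}$ while both $R_i-\epsilon_n$ and $0$ are bounded by the corresponding $\frac1n I(\cdots)$ (the former by Theorem \ref{t_outer_bound}, the latter by nonnegativity of directed information). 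Hence $r_n \in \mathcal R_n$. Since $|R_i-(R_i-\epsilon_n)^+|\le\epsilon_n$ coordinatewise, we obtain $\|(R_1,R_2)-r_n\|\le \sqrt 2\,\epsilon_n$, so $d\big((R_1,R_2),\mathcal R_n\big)\le\sqrt2\,\epsilon_n\to0$ and therefore $(R_1,R_2)\in\liminf\mathcal R_n$. Because this holds for every achievable pair and $\liminf\mathcal R_n$ is closed, it contains the closure of the achievable pairs, i.e. the capacity region, which is exactly the assertion that $\liminf\mathcal R_n$ is an outer bound.

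The one step that genuinely needs care is the sum-rate constraint for $r_n$. Subtracting $\epsilon_n$ from each individual rate might appear to require relaxing the sum-rate by $2\epsilon_n$, whereas Theorem \ref{t_outer_bound} relaxes it only by $\epsilon_n$; the resolution is that the third bound of the theorem is already phrased in terms of $R_1+R_2-\epsilon_n$, and a short case check shows that the coordinate sum of the clipped point never exceeds $\big(R_1+R_2-\epsilon_n\big)^+$, which is in turn bounded by $\frac1n I((X_1,X_2)^n\to Y^n)$. Thus there is in fact a full $\epsilon_n$ of slack to spare. The only other bookkeeping is the clipping to $\mathbb R_+^2$, which guarantees $r_n\in\mathcal R_n$ for every $n$ (not merely for $n$ large enough that $\epsilon_n\le\min\{R_1,R_2\}$), so that the argument is valid under either the sequential or the distance formulation of $\liminf$ from Appendix \ref{s_app_supadditive}.
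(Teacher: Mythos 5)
Your proposal is correct and follows essentially the same route as the paper: the paper also fixes an achievable pair, uses the $\epsilon_n$-relaxed inequalities of Theorem \ref{t_outer_bound} to produce for each $n$ a point of $\mathcal R_n$ within $O(\epsilon_n)$ of $(R_1,R_2)$, and concludes via the sequential definition of $\liminf$ from Appendix \ref{s_app_supadditive}. The only difference is that you explicitly exhibit the clipped point $\bigl((R_1-\epsilon_n)^+,(R_2-\epsilon_n)^+\bigr)$ and verify all three constraints (including the sum-rate case check), whereas the paper simply takes the closest point of $\mathcal R_n$ and asserts the distance bound; your version is a more careful instance of the same argument.
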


\begin{proof}
Recall the definition of  ${\mathcal R}_n$ in eq.
(\ref{e_def_Rn}). Let $(R_1,R_2)$ be an achievable rate pair. We
will create a sequence of rate pairs $(R_{1,n},R_{2,n})\in
{\mathcal R}_n$ that converges to $(R_1,R_2)$ and therefore, by
the definition of $\liminf$ of a sequence of sets (given in
Appendix \ref{s_app_supadditive}), $(R_1,R_2)\in \liminf {\mathcal
R}_n$.

If $(R_1,R_2)\in{\mathcal R}_n$ then we choose
$(R_{1,n},R_{2,n})=(R_1,R_2)$. Otherwise we choose the closest
point in ${\mathcal R}_n$ to $R_1,R_2$. Because of inequality
(\ref{e_outer_bound}) the distance
$||(R_{1,n},R_{2,n})-(R_1,R_2)||\leq 2\epsilon_n$ and, therefore,
the sequence $(R_{1,n},R_{2,n})$ converges to $(R_1,R_2)$.
\end{proof}

\section{Capacity Region of the FS-MAC without Feedback}
\label{sec: Capacity region of FS-MAC without feedback} The inner
and outer bounds given in Theorems \ref{t_inner_bound} and
\ref{t_outer_bound} specialize to the case where there is no
feedback, i.e., $z_1,z_2$ are null. Hence, we can use it in order to
extend Gallager's results \cite[Ch. 4]{Gallager68} on the capacity
of indecomposable FSCs to indecomposable FS-MACs. An indecomposable
FS-MAC (FSC) is a FS-MAC (FSC) for which the effect of the initial
state vanishes with time. More precisely:
\begin{definition}
A FS-MAC (FSC) is {\it indecomposable} if, for every $\epsilon>0$,
there exists an $n_0$ such that for $n\geq n_0$,
%\begin{eqnarray}
$|P(s_n|x_1^n,x_2^n,s_0)-P(s_n|x_1^n,x_2^n,s_0')|\leq \epsilon$
%\end{eqnarray}
for all $s_n$,$x_1^n,x_2^n$ , $s_0$ and $s_0'$.
\end{definition}

Since there is no feedback, according to Lemma
\ref{l_directed_mutaul_if_no_feedback} directed information becomes
mutual information and causal conditioning becomes regular
conditioning in all the expressions in the inner bound (Theorem
\ref{t_inner_bound}) and outer bound (Theorem \ref{t_outer_bound}).

The proof of the capacity region of FS-MAC is based on the
following two lemmas. The first lemma is used for showing that the
difference between the lower bound and the upper bound goes to
zero as $n\to \infty$ and the second lemma, which is proved in
Appendix \ref{s_app_proof_of_lemma_supadditive_Rn}, is used for
showing that the limits exist.
\begin{lemma}\label{l_indecomposible} Let $\{Q(x_1^n)Q(x_2^n)\}_{n\geq1}$
be an arbitrary sequence of input distribution. If the channel is
an indecomposable FS-MAC then the following holds for all $s_0',
s_0''$:
%The following
%bounds hold for an indecomposable FS-MAC for any $s_0',s_0''$ and
%any arbitrary $\epsilon>0$,
\begin{eqnarray}
\lim_{n\to \infty}
\frac{1}{n}|I(X_1^n;Y^n|X_2^n,s_0')-I(X_1^n;Y^n|X_2^n,s_0'')|&=&
0 \nonumber \\
\lim_{n\to \infty}
\frac{1}{n}|I(X_2^n;Y^n|X_1^n,s_0')-I(X_2^n;Y^n|X_1^n,s_0'')|&=&
0 \nonumber \\
\lim_{n\to \infty}
\frac{1}{n}|I(X_1^n,X_2^n;Y^n|s_0')-I(X_1^n,X_2^n;Y^n|s_0'')|&=&
0.
\end{eqnarray}
\end{lemma}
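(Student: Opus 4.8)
The plan is to reduce the three displayed limits to a single estimate on output entropies, and then to bring in indecomposability through the \emph{predictive} state distribution rather than the \emph{filtering} one. Writing, e.g., $I(X_1^n;Y^n|X_2^n,s_0)=H(Y^n|X_2^n,s_0)-H(Y^n|X_1^n,X_2^n,s_0)$ and using the triangle inequality, all three claims follow once we show that $\frac1n|H(Y^n|\mathcal C,s_0')-H(Y^n|\mathcal C,s_0'')|\to 0$ for each of the conditioning sets $\mathcal C\in\{\emptyset,\{X_1^n\},\{X_2^n\},\{X_1^n,X_2^n\}\}$, where any input not in $\mathcal C$ is marginalized against $Q$. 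So it suffices to prove one generic bound of this form.

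Fix such a $\mathcal C$ and a cut-point $k$, and split $H(Y^n|\mathcal C,s_0)=H(Y_1^k|\mathcal C,s_0)+H(Y_{k+1}^n|Y_1^k,\mathcal C,s_0)$. The first piece is at most $k\log|\mathcal Y|$, so it contributes at most $k\log|\mathcal Y|$ to the difference between $s_0'$ and $s_0''$. For the tail I insert the state $S_k$: by the Markov property of the state process, conditioned on $S_k$ the block $Y_{k+1}^n$ is independent of $(Y_1^k,s_0)$, so $H(Y_{k+1}^n|Y_1^k,\mathcal C,s_0)=H(Y_{k+1}^n|S_k,\mathcal C,s_0)+I(Y_{k+1}^n;S_k|Y_1^k,\mathcal C,s_0)$ with the last term in $[0,\log|\mathcal S|]$. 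This insertion is the crucial step: it replaces the \emph{filtering} distribution $P(s_k|y_1^k,\mathcal C,s_0)$, which the naive chain-rule expansion surfaces and which indecomposability does not control, by the \emph{predictive} distribution $P(s_k|\mathcal C,s_0)$, at a cost of only $\log|\mathcal S|$.

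Now $H(Y_{k+1}^n|S_k,\mathcal C,s_0)=\sum_{s_k}P(s_k|\mathcal C,s_0)\,h(s_k)$, where $h(s_k)=H(Y_{k+1}^n|S_k=s_k,\mathcal C)\in[0,(n-k)\log|\mathcal Y|]$ depends on $\mathcal C$ but not on $s_0$, so the tail difference equals $\sum_{s_k}[P(s_k|\mathcal C,s_0')-P(s_k|\mathcal C,s_0'')]\,h(s_k)$. Here indecomposability enters: for $k\ge n_0(\epsilon)$ it bounds each bracket by $\epsilon$ in absolute value—uniformly over all inputs, hence both for each realization of $\mathcal C$ and after averaging the marginalized inputs against $Q$ (averaging a family of $\epsilon$-bounds stays $\epsilon$-bounded)—so the tail difference is at most $|\mathcal S|\,\epsilon\,(n-k)\log|\mathcal Y|$. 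Collecting the three contributions and choosing $k=n_0(\epsilon)$ yields $\frac1n|H(Y^n|\mathcal C,s_0')-H(Y^n|\mathcal C,s_0'')|\le \frac{n_0(\epsilon)\log|\mathcal Y|}{n}+|\mathcal S|\,\epsilon\log|\mathcal Y|+\frac{2\log|\mathcal S|}{n}$. Letting $n\to\infty$ eliminates the first and third terms, so the $\limsup$ is at most $|\mathcal S|\,\epsilon\log|\mathcal Y|$; as $\epsilon>0$ is arbitrary, the limit is $0$, which is exactly what we need.

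The main obstacle is precisely the mismatch flagged above: indecomposability as defined bounds $P(s_n|x_1^n,x_2^n,s_0)$, the state distribution predicted from the inputs alone, whereas the letter-by-letter expansion of $H(Y^n|\mathcal C,s_0)$ naturally produces the output-conditioned (filtering) state distribution, for which no $s_0$-insensitivity is assumed and which could a priori retain information about the initial state. The device that overcomes this—conditioning on $S_k$ and absorbing the discrepancy into a single $I(Y_{k+1}^n;S_k|Y_1^k,\mathcal C,s_0)\le\log|\mathcal S|$ term—is the heart of the argument; everything after it is routine entropy and state-distribution bookkeeping, and because all bounds are uniform in the conditioning inputs, the passage from fixed input realizations to the $Q$-averaged mutual informations in the three displayed limits is immediate.
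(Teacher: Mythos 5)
The paper does not actually spell out a proof of this lemma---it invokes Gallager's Theorem 4.6.4 verbatim---so your argument has to stand on its own. Your architecture (reduce to differences of output entropies, split at a cut-point $k$, pay $\log|\mathcal S|$ to insert $S_k$, invoke indecomposability on the predictive state law) is the right one, but there is a genuine gap in the tail analysis. You assert that, given $S_k$ and $\mathcal C$, the block $Y_{k+1}^n$ is independent of $(Y_1^k,s_0)$, and consequently that $h(s_k)=H(Y_{k+1}^n|S_k=s_k,\mathcal C)$ does not depend on $s_0$. Both claims hold only when $\mathcal C$ contains \emph{both} complete input sequences. For $\mathcal C=\emptyset$ or $\mathcal C=\{X_2^n\}$---which you need for the positive terms $H(Y^n|s_0)$ and $H(Y^n|X_2^n,s_0)$---the unconditioned input is drawn from an arbitrary joint law $Q(x_1^n)$ with memory across the cut. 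Conditioning on $S_k=s_k$ tilts the posterior of $X_1^k$ by the factor $P(s_k|x_1^k,x_2^k,s_0)/P(s_k|\,\cdot\,,s_0)$, which depends on $s_0$; through the correlation $Q(x_{1,k+1}^n|x_1^k)$ this tilt propagates to the future inputs and hence to the conditional law of $Y_{k+1}^n$. So $h(s_k)$ retains $s_0$-dependence, the identity $H(Y_{k+1}^n|Y^k,\mathcal C,s_0)=H(Y_{k+1}^n|S_k,\mathcal C,s_0)+I(Y_{k+1}^n;S_k|Y^k,\mathcal C,s_0)$ fails for these $\mathcal C$, and the cancellation $\sum_{s_k}[P(s_k|\mathcal C,s_0')-P(s_k|\mathcal C,s_0'')]h(s_k)$ that your whole estimate rests on is not available. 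You correctly flagged the filtering-versus-prediction problem for the \emph{state}, but the same problem recurs for the \emph{inputs} and your insertion of $S_k$ alone does not cure it.

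The fix is local: condition at the cut on $(X_1^k,X_2^k,S_k)$ rather than on $S_k$ alone. Given the past inputs and the state, the joint law of $(X_{1,k+1}^n,X_{2,k+1}^n,Y_{k+1}^n)$ is $Q(x_{1,k+1}^n|x_1^k)\,Q(x_{2,k+1}^n|x_2^k)\,P(y_{k+1}^n||x_{1,k+1}^n,x_{2,k+1}^n,s_k)$, which genuinely does not depend on $(Y^k,s_0)$, so the conditional independence you want becomes true for every $\mathcal C$. The insertion now costs $I(Y_{k+1}^n;X_1^k,X_2^k,S_k|Y^k,\mathcal C,s_0)\le k\log(|\mathcal X_1||\mathcal X_2|)+\log|\mathcal S|$, still $o(n)$ for fixed $k$, and the tail difference becomes $\sum_{x_1^k,x_2^k}Q(x_1^k)Q(x_2^k)\sum_{s_k}[P(s_k|x_1^k,x_2^k,s_0')-P(s_k|x_1^k,x_2^k,s_0'')]\,h(x_1^k,x_2^k,s_k)$, to which the definition of indecomposability applies exactly as you intended, with $h$ now truly $s_0$-free. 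With that replacement the rest of your bookkeeping goes through and the argument matches the structure of the Gallager proof the paper cites.
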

\begin{proof}
The proof is identical to the proof of Theorem 4.6.4 in
\cite{Gallager68}.
\end{proof}

%Recall the definition of $\underline {\mathcal R}_n$:% that is
%%induced by an input distribution $Q(x_i^n||z^{n-1}), \; i=1,2$ and
%%the channel and it satisfies
%\begin{equation}\label{e_def_underline_Rn}
%\underline {\mathcal
%R}_n=\bigcup_{Q(w)Q(x_1^n||z_1^{n-1},w)Q(x_2^n||z_2^{n-1},w)}
%\begin{cases}
%R_1 \leq  \min_{s_0} \frac{1}{n}I(X_1^n \to Y^n ||X_2^{n},W,s_0)-\frac{\log |\mathcal S|}{n},\\
%R_1 \leq  \min_{s_0} \frac{1}{n}I(X_2^n \to Y^n ||X_1^{n},W,s_0)-\frac{\log |\mathcal S|}{n},\\
%R_1+R_2 \leq  \min_{s_0} \frac{1}{n}I((X_1,X_2)^n \to
%Y^{n}|W,s_0)-\frac{\log |\mathcal S|}{n}.
%\end{cases}
%\end{equation}

The following lemma, which is proved in Appendix
\ref{s_app_proof_of_lemma_supadditive_Rn}, establishes the
sup-additivity of $\{\underline {\cal R}_n\}$.
\begin{lemma}\label{l_supadditive_Rn} ({\it sup-additivity of $\underline {\cal R}_n$.
}) For any FS-MAC, the sequence $\{\underline {\cal R}_n\}$, which
is defined in (\ref{e_def_underline_Rn}), is sup-additive, i.e.,
\begin{equation}
(n+l)\underline {\mathcal R}_{n+l}\supseteq n\underline {\mathcal
R}_{n}+l\underline {\mathcal R}_{l},
\end{equation}
and therefore $\lim_{n\to\infty} \underline {\mathcal R}_{n}$
exists. Moreover, for an indecomposable FS-MAC without feedback
$\lim_{n\to\infty} \underline {\mathcal R}_{n}=\lim_{n\to\infty}
{\mathcal R}_{n}$ where ${\mathcal R}_{n}$ is defined
(\ref{e_def_Rn}).
\end{lemma}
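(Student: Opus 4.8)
The final statement to prove is Lemma \ref{l_supadditive_Rn}, the sup-additivity of the region sequence $\{\underline{\mathcal R}_n\}$ and the consequent existence of its limit, together with the coincidence of the two limits in the no-feedback indecomposable case. Let me sketch my plan.

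\textbf{Plan for the sup-additivity.} The plan is to reduce the set-inclusion statement $(n+l)\underline{\mathcal R}_{n+l}\supseteq n\underline{\mathcal R}_n + l\underline{\mathcal R}_l$ to a statement about the scalar functionals defining the three boundary constraints. Fix an arbitrary point $n\mathbf R^{(n)} + l\mathbf R^{(l)}$ in the Minkowski sum, where $\mathbf R^{(n)}\in\underline{\mathcal R}_n$ is achieved by some product distribution $Q_n(w)Q_n(x_1^n\|z_1^{n-1},w)Q_n(x_2^n\|z_2^{n-1},w)$, and similarly $\mathbf R^{(l)}\in\underline{\mathcal R}_l$ by some $Q_l$. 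I would then form the \emph{concatenated} input distribution on blocklength $n+l$ by using $Q_n$ on the first $n$ coordinates and $Q_l$ on the next $l$ (the same concatenation-of-code-trees device used in the achievability proof, eq.~(\ref{e_conacatinationK})), and show this candidate distribution places the summed rate point inside $\underline{\mathcal R}_{n+l}$. The crux is therefore a super-additivity inequality for each of the three mutual-information-type functionals
\begin{equation}
\min_{s_0} I(X_1^{n+l}\to Y^{n+l}\|X_2^{n+l},W,s_0) \geq \min_{s_0} I(X_1^{n}\to Y^{n}\|X_2^{n},W,s_0) + \min_{s_0} I(X_1^{l}\to Y^{l}\|X_2^{l},W,s_0),
\end{equation}
and analogously for the other two constraints, because the $-\frac{\log|\mathcal S|}{n}$ penalty terms combine correctly under the $\frac{n}{n+l},\frac{l}{n+l}$ weighting.

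\textbf{Key steps, in order.} First I would establish the scalar super-additivity of each directed-information functional under concatenation. The mechanism is that, conditioned on the intermediate state $S_n$, the channel over the second block is independent of the first block's inputs/outputs, so the directed information over $[1,n+l]$ splits as a sum of the first-block term plus a term over the second block conditioned on $S_n$; invoking Lemma \ref{l_diff_cond_S} lets me trade conditioning on $S_n$ for conditioning on the worst-case initial state $s_0$ at the cost of a single $\log|\mathcal S|$ term, which is exactly the penalty already built into the definition of $\underline{\mathcal R}_n$ in (\ref{e_def_underline_Rn}). This is precisely the bookkeeping done for $F_{N,i}$ in Lemma \ref{l_fn}, so I would either mirror that argument or leverage it. Second, having the three scalar inequalities, I would translate them back into the set statement: the concatenated distribution shows $n\mathbf R^{(n)}+l\mathbf R^{(l)}\in(n+l)\underline{\mathcal R}_{n+l}$, giving the claimed inclusion. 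Third, I would invoke Lemma \ref{l_supadditive_region} from Appendix \ref{s_app_supadditive} (the Fekete-type convergence theorem for sup-additive set sequences) to conclude $\lim_{n\to\infty}\underline{\mathcal R}_n = \mathrm{cl}\big(\bigcup_{n\geq1}\underline{\mathcal R}_n\big)$ exists.

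\textbf{Coincidence of the two limits.} For the final clause, I would show that in the indecomposable no-feedback case $\lim_n\underline{\mathcal R}_n=\lim_n\mathcal R_n$. Here Lemma \ref{l_directed_mutaul_if_no_feedback} collapses the directed informations to ordinary conditional mutual informations, so the two regions differ only through (i) the auxiliary $W$, which the remark after (\ref{e_def_underline_Rn}) asserts has vanishing effect as $n\to\infty$, (ii) the $\min_{s_0}$ over the initial state, and (iii) the $-\frac{\log|\mathcal S|}{n}$ penalty. The penalty term vanishes as $n\to\infty$, and Lemma \ref{l_indecomposible} is exactly the tool that makes the $\min_{s_0}$ immaterial in the limit: it shows that the per-letter mutual information is asymptotically independent of the initial state, so $\frac{1}{n}\min_{s_0}I(\cdots|s_0)$ and the unconditioned $\frac{1}{n}I(\cdots)$ share the same limit. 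Assembling these three vanishing discrepancies gives the coincidence of the limiting regions.

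\textbf{Main obstacle.} The hard part will be the first step done cleanly at the level of sets rather than scalars: one must verify that a \emph{single} concatenated distribution simultaneously satisfies all three super-additivity inequalities so that the \emph{entire} summed point (not merely its coordinate projections) lands in $(n+l)\underline{\mathcal R}_{n+l}$, and that the Markov/conditioning structure of the state process legitimately decouples the two blocks given $S_n$. The $\min_{s_0}$ operation interacts subtly with concatenation — the worst initial state for the combined block need not be the worst for either sub-block — so care is needed to show the $\min$ over $s_0$ of the combined functional still dominates the sum of the individual minima; this is where Lemma \ref{l_diff_cond_S} carries the argument by absorbing the state-conditioning mismatch into the $\log|\mathcal S|$ terms.
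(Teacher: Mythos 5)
Your proposal follows essentially the same route as the paper's proof in Appendix~\ref{s_app_proof_of_lemma_supadditive_Rn}: concatenate the two input distributions, split the directed-information sum into the two blocks, use $\min_s[f(s)+g(s)]\geq\min_sf(s)+\min_sg(s)$ together with Lemma~\ref{l_diff_cond_S} and stationarity to absorb the state mismatch into the single $\log|\mathcal S|$ penalty, invoke Lemma~\ref{l_supadditive_region} for existence of the limit, and then combine the vanishing effect of $W$, the vanishing penalty, and Lemma~\ref{l_indecomposible} (via Lemma~\ref{l_distance_goes_to_zero}) for the coincidence with $\lim\mathcal R_n$. The obstacle you flag about $\min_{s_0}$ interacting with concatenation is resolved exactly as you anticipate, so no further commentary is needed.
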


 {\it Proof of Theorem
\ref{t_no_feedback}:} Theorem \ref{t_inner_bound} implies that
$\lim_{n\to\infty} \underline {\mathcal R}_{n}$ is achievable, and
Corollary \ref{c_outer_bound} implies that
 $\liminf_{n\to\infty} {\mathcal R}_{n}$ is an
outer bound. Finally, since according to Lemma
\ref{l_supadditive_Rn} the two limits are equal to
$\lim_{n\to\infty} {\mathcal R}_{n}$, the capacity region is given
by the last limit. \hfill \QED

\section{Sufficient Conditions for the Inner and Outer Bounds to Coincide for General Feedback}
\label{sec: Stationary Finite state Markovian MAC with feedback}
%In this section we consider a FS-MAC that its state evolves
%according to an ergodic Markov chain $P(s_i|s_{i-1})$ that does
%not depend on the input  that its conditional probability
%satisfies

\subsection{Stationary Finite state Markovian MAC with feedback} A
stationary finite state Markovian MAC satisfies
\begin{equation}
P(y_i,s_i|x_{1i},x_{2i},s_{i-1})=P(s_i|s_{i-1})P(y_i|s_{i-1},x_{1i},x_{2i}),
\end{equation}
where the initial state distribution is the stationary
distribution $P(s_0)$. In words, the states are not affected by
the channel inputs.

%Recall that $\mathcal R_n$ is defined as
%\begin{equation}%\label{e_def_Rn}
% {\mathcal R}_n=\bigcup_{Q(x_1^n||z_1^{n-1})Q(x_2^n||z_2^{n-1})}
%\begin{cases}
%R_1 \leq  \frac{1}{n}I(X_1^n \to Y^n ||X_2^{n}),\\
%R_1 \leq  \frac{1}{n}I(X_2^n \to Y^n ||X_1^{n}),\\
%R_1+R_2 \leq  \frac{1}{n}I((X_1,X_2)^n \to Y^{n}).
%\end{cases}
%\end{equation}
For the stationary Markovian-MAC, the sequence $\{\mathcal R_n\}$ is
sup-additive. It follows from the fact that if we concatenate two
input distributions $Q_{n+k}=Q_nQ_k$, then $I(X_1^{n+k} \to Y^{n+k}
||X_2^{n+k})=I(X_1^{n} \to Y^{n} ||X_2^{n})+I(X_{1,n+1}^{n+k} \to
Y_{n+1}^{n+k} ||X_{2,n+1}^{n+k})$, hence $(n+k)\mathcal
R_{n+k}\supseteq n\mathcal R_{n}+k\mathcal R_{k}  $. According to
Lemma \ref{l_supadditive_region}, the limit exists and is equal to
\begin{equation}
\lim_{n\to\infty}  {\mathcal R}_n= \text{cl} \left(
\bigcup_{n\geq1} {\mathcal R}_n \right).
\end{equation}

%\begin{theorem}\label{t_capacity_stationary_ergodic_markovian}
Next, we prove Theorem \ref{t_capacity_feedback} that states that
for a Markovian FS-MAC with a stationary ergodic state process,
the inner bound (Theorem \ref{t_inner_bound}) and the outer bound
(Theorem \ref{t_outer_bound}) coincide and therefore the capacity
region is given by $\lim_{n\to \infty} \mathcal R_n$.
%\end{theorem}

%\begin{proof}
{\it Proof of Theorem \ref{t_capacity_feedback}:} Recall that the
inner bound is given in Theorem \ref{t_inner_bound} as $\underline
{\mathcal R}_N$ and the outer bound given in Theorem
\ref{t_outer_bound} and in Corollary \ref{c_outer_bound} as
$\liminf {\mathcal R}_N$. Next we show that the distance between
$\underline {\mathcal R}_N$ and ${\mathcal R}_N$ goes to zero
which implies by Lemma \ref{l_distance_goes_to_zero} that both
limits equal and therefore the capacity region can be written as
$\lim {\mathcal R}_N$.
%
%given in Theorem \ref{t_inner_bound} as the limit of the regions
%$\underline {\mathcal R}_N$, where the region $\underline
%{\mathcal R}_N$ is defined as
%\begin{equation}%\label{e_def_underline_Rn}
%\underline {\mathcal
%R}_N=\bigcup_{Q(w)Q(x_1^N||z^{N-1},w)Q(x_2^N||z^{N-1},w)}
%\begin{cases}
%R_1 \leq  \min_{s_0} \frac{1}{N}I(X_1^N \to Y^N ||X_2^{N},W,s_0)-\frac{\log |\mathcal S|}{N},\\
%R_1 \leq  \min_{s_0} \frac{1}{N}I(X_2^N \to Y^N ||X_1^{N},W,s_0)-\frac{\log |\mathcal S|}{N},\\
%R_1+R_2 \leq  \min_{s_0} \frac{1}{N}I((X_1,X_2)^N \to
%Y^{N}|W,s_0)-\frac{\log |\mathcal S|}{N}.
%\end{cases}
%\end{equation}
%The outer bound is given in Theorem \ref{t_outer_bound} as
%\begin{eqnarray}
%R_1 &\leq&  \frac{1}{N}I(X_1^N \to Y^N ||X_2^{N})+\epsilon_N \nonumber \\
%R_1 &\leq&  \frac{1}{N}I(X_2^N \to Y^N ||X_1^{N})+\epsilon_N \nonumber \\
%R_1+R_2 &\leq& \frac{1}{N}I((X_1,X_2)^N \to Y^{N})+\epsilon_N.
%\end{eqnarray}
%for some sequence of input distribution $\overline
%Q(x_1^N||z^{N-1})\overline Q(x_2^N||z^{N-1})$ where $\epsilon_N$
%goes to zero as $N$ goes to infinity.

%The idea of the proof is that for every $\epsilon$ to choose an
%$n(\epsilon)$ large enough such that $P(s_n|s_0)-P(s_n)<
%\frac{\epsilon}{3}$, $\epsilon_n< \epsilon$ and $\frac{\log
%|\mathcal S|}{N}<\frac{1}{3}\epsilon$.
Let us consider a specific input distribution denoted by
$\overline Q(x_1^N||z^{N-1})\overline Q(x_2^N||z^{N-1})$
corresponding to the region of the outer bound  ${\mathcal R}_N$.
Let us now consider an input distribution $\underline Q$ for $n+N$
inputs corresponding to the inner bound $\underline {\mathcal
R}_N$, such that it is arbitrary for the first $n$ inputs and then
it is $\overline Q(x_1^N||z^{N-1})\overline Q(x_2^N||z^{N-1})$.
%=$\overline
%Q(x_{1,n+1}^{N+n}||z_{1,n+1}^{N+n-1})\overline
%Q(x_{2,n+1}^{N+n}||z_{2,n+1}^{N+n-1})$..
% The directed information
%in the first inequality in the achievable region induced by this
%input distribution becomes

Now let us show that the term of the inner bound, i.e.
${I_{\underline Q}(X_1^{N} \to Y^{N} ||X_2^{N+n},s_0)}$ and the term
of the outer bound $I_{\overline Q}(X_{1}^{N} \to Y^{N}
||X_{2}^{N})$ are arbitrarily close to each other.
\begin{eqnarray}
\lefteqn{I_{\underline Q}(X_1^{N+n} \to Y^{N+n}
||X_2^{N+n},s_0)}\nonumber\\
&\stackrel{(a)}{\geq}& I_{\underline Q}(X_1^{N+n} \to Y^{N+n} ||X_2^{N+n},S_n,s_0)-\log|\mathcal S| \nonumber \\
&\stackrel{(b)}{\geq}& \sum_{i=n+1}^{N+n} H_{\underline Q}(Y_i|Y^{i-1},X_2^{i},S_n,s_0)-H_{\underline Q}(Y_i|Y^{i-1},X_2^{i},X_1^{i},S_n,s_0)-\log|\mathcal S| \nonumber \\
&\stackrel{(c)}{\geq}& \sum_{i=n+1}^{N+n} H_{\underline Q}(Y_i|Y_{n+1}^{i-1},X_{2,n+1}^{i},S_n,s_0)-H_{\underline Q}(Y_i|Y_{n+1}^{i-1},X_{2,n+1}^{i},X_{1,n+1}^{i},S_n,s_0)-\log|\mathcal S| \nonumber \\
&=& I_{\underline Q}(X_{1,n+1}^{N+n} \to Y_{n+1}^{N+n} ||X_{2,n+1}^{N+n},S_n,s_0)-H(S_n) \nonumber \\
&\stackrel{(d)}{\geq}& I_{\underline Q}(X_{1,n+1}^{N+n} \to Y_{n+1}^{N+n} ||X_{2,n+1}^{N+n},S_n)(1-\delta)-\log|\mathcal S| \nonumber \\
&\stackrel{}{\geq}& I_{\underline Q}(X_{1,n+1}^{N+n} \to Y_{n+1}^{N+n} ||X_{2,n+1}^{N+n},S_n)-\delta(N+n)\log|\mathcal Y|-\log|\mathcal S| \nonumber \\
&\stackrel{(e)}{\geq}& I_{\underline Q}(X_{1,n+1}^{N+n} \to Y_{n+1}^{N+n} ||X_{2,n+1}^{N+n})-\delta(N+n)\log|\mathcal Y|-2\log|\mathcal S| \nonumber \\
&\stackrel{(f)}{\geq}& I_{\overline Q}(X_{1}^{N} \to Y^{N}
||X_{2}^{N})-\delta(N+n)\log|\mathcal Y|-2\log|\mathcal S|,
\end{eqnarray}
where
\begin{itemize}
\item[(a)] follows from Lemma \ref{l_diff_cond_S} that states that conditioning on $S_n$ can differ at most by $\log|\mathcal S|$,
\item[(b)] follows from omitting the first $n$ elements in the sum that
defines directed information,
\item[(c)]follows from the fact that conditioning decreases entropy,
\item[(d)]follows from the fact that the Markov chain is ergodic,
hence for any $\delta>0$, there exists an $n$ such that
$|P(s_n|s_0)-P(s_n)|\leq \delta$ for any $s_0\in \mathcal S$ and
$s_n\in \mathcal S$, where $P(s_n)$ is the stationary distribution
of $s_n$,
\item[(e)]  follows from Lemma \ref{l_diff_cond_S} that states that conditioning on $S_n$ can differ by at most $\log |\mathcal
S|$,
\item[(f)] follows from the stationarity of the channel.
\end{itemize}

Dividing both sides by $N+n$ we get that for any $s_0$,
%\end{equation}
\begin{eqnarray}\label{e_diff_upper_lower}
\frac{1}{N+n}I_{\underline Q}(X_1^{N+n} \to Y^{N+n}
||X_2^{N+n},s_0)-\frac{1}{N+n}I_{\overline Q}(X_{1}^{N} \to Y^{N}
||X_{2}^{N})\geq -\delta(1+\frac{n}{N})\log|\mathcal
Y|-2\frac{\log|\mathcal S|}{N+n}
\end{eqnarray}

Inequality (\ref{e_diff_upper_lower}) shows that the difference
between the upper bound region and the lower bound is arbitrarily
small for $N$ large enough and, hence, in the limit the regions
coincide.\hfill\QED

%\end{proof}

\subsection{Finite State Markovian MAC with limited ISI}
In this subsection we consider a MAC inspired by Kim's
 point-to-point channel \cite{Kim07_feedback}. The conditional probability of
the MAC is given by
\begin{equation}\label{e_channel_limited_isi}
P(y_i,z_i|x_{1}^{i},x_{2}^{i},z_{i-1})=P(z_i|z_{i-1})P(y_i|z_{i-1},x_{1,i-m}^i,x_{2,i-m}^i),
\;i=1,2,3,...
\end{equation}
where the distribution of $Z_0$ is the stationary distribution
$P(z_0)$, and there is also some initial distribution
$P(x_{-m+1},...,x_0)$.% If the Finite state Markovian MAC has no ISI,
%such as in the previous section, then $m=0$.

This channel is a FS-MAC where the state at time $i$ is
$(z_{i-1},x_{1,i-m}^{i-1},x_{2,i-m}^{i-1})$ and therefore the
inner bound (Theorem \ref{t_inner_bound}) and the outer bound
(Theorem \ref{t_outer_bound}) apply to this channel. Theorem
\ref{t_capacity_feedback} also holds for this kind of channels,
namely, the capacity region is given by $\lim_{n\to\infty}\mathcal
R_n.$ The proof is very similar, the only difference being that
the input $\underline Q$ for $n+N$ inputs is constructed slightly
differently: it is arbitrary for the first $n-m$ inputs, then it
is as the initial distribution $P(x_{-m+1},...,x_0)$, and then it
is $\overline Q(x_1^N||z^{N-1})\overline Q(x_2^N||z^{N-1})$.

%Note, that the capacity $\lim_{n\to\infty}\mathcal R_n$ is not
%influenced by the initial distribution $P(x_{-m+1},...,x_0)$
%because  the first $m$ input distribution can be any desired
%distribution on $(x_{1},...,x_{m+1})$ and the distribution on the
%rest of the inputs $(x_{m+2},x_{m+2},...)$ can be constructed as
%the distribution with the desired initial distribution shifted by
%$m$.

It is also possible to represent the channel with an alternative
law,  identical to the law of the channel given in eq.
(\ref{e_channel_limited_isi}) for $i\geq m+1$ but for  $i\leq m$
the output $y_i$ is not influenced by the input and is, with
probability 1, a particular output $\phi\in \mathcal Y$. Let us
define $\mathcal R_n^{\phi}$ similarly as $\mathcal R_n$ but with
the alternative law for the channel. On one hand, it is clear that
$ \mathcal R_n^{\phi}\subseteq \mathcal R_n$ for all $n$, and on
the other hand the difference between $ \mathcal R_n^{\phi}$ and $
\mathcal R_n$ is at most $m\log \mathcal Y$ because it is possible
to use the distribution of the first $m$ inputs, $Q(x_1^m)$, to
create a desired initial distribution and then use the same input
as in $\mathcal R_n$. Hence,
\begin{equation}
\lim_{n\to \infty} \mathcal R_n^{\phi}= \lim_{n\to \infty} \mathcal
R_n.
\end{equation}
The advantage of analyzing $\mathcal R_n^{\phi}$ rather than
analyzing $\mathcal R_n$ is that the sequence $nR_n^{\phi}$ is
sup-additive, i.e. $(n+l)\mathcal R_{n+l}^{\phi}\supseteq
n\mathcal R_{n}^{\phi}+l\mathcal R_l^{\phi}$, and according to
Lemma \ref{l_supadditive_region}, $\lim_{n\to \infty} \mathcal
R_{n}^{\phi}=\text{cl} \left
(\bigcup_{n\geq1}R_{n}^{\phi}\right)$. Hence, we can conclude that
Theorem \ref{t_capacity_zero} holds for this channel too, namely,
if the capacity of the Finite state Markovian MAC with limited ISI
is zero without feedback then it is zero also in the presence of
feedback.

%It is also possible to derive the capacity of the MAC where the
%distribution law of the first $m$ outputs do not depends on the
%input by the techniques given by Kim \cite{Kim07_feedback} for the
%point-to-point channel that are based on ergodic theory and
%typical sets.

\section{Conclusions and Future Directions} \label{sec: conclusions}
In this paper we have shown that directed information and causal
conditioning emerge naturally in characterizing the capacity region
of FS-MACs in the presence of a time-invariant feedback. The
capacity region is given as a `multi-letter' expression and it is a
first step toward deriving useful concepts in communication. For
instance, we use this characterization in order to show that for a
stationary and ergodic Markovian channel, the capacity is zero if
and only if the capacity with feedback is zero. Further, we identify
 FS-MACs for which feedback does not enlarge the capacity region
and for which source-channel separation holds.

For the point-to-point channel with feedback, recent work has shown
that, for some families of channels such as unifilar channels
\cite{Permuter06_trapdoor_submit} or the additive Gaussian where the
noise is ARMA \cite{Kim07_feedback}, the directed information
formula can be computed and, further, can lead to the development of
capacity achieving coding schemes. One future direction is to use
the characterizations developed in this paper to explicitly compute
the capacity regions of classes of MACs with memory and feedback
(other than the multiplexer followed by a point-to-point channel),
and to find optimal coding schemes.

%\end{itemize}

\newpage
 \appendices
\section{Proof of Lemma \ref{l_mutual_becomes_directed_causal}}\label{s_app_lemma_proof_mutual_becomes_directed}
Recall that Lemma \ref{l_mutual_becomes_directed_causal} states
that if
\begin{equation}\label{e_assumption_conditional_indep}
Q(x_1^N,x_2^N||y^{N-1})=Q(x_1^N||y^{N-1})Q(x_2^N||y^{N-1}),
\end{equation}
 then
\begin{equation} \mathcal
I(Q(x_1^N,x_2^N||y^{N-1});P(y^N||x_1^N,x_2^N))=I(X_1^N\to
Y^N||X_2^N).
\end{equation}

\begin{proof}
The following sequence of equalities proves the lemma.
\begin{eqnarray}
\lefteqn{ \mathcal
I(Q(x_1^N,x_2^N||y^{N-1});P(y^N||x_1^N,x_2^N))}\nonumber
\\
&\stackrel{(a)}{=}& \mathcal I(Q(x_1^N||y^{N-1})Q(x_2^N||y^{N-1});P(y^N||x_1^N,x_2^N))\nonumber \\
&\stackrel{(b)}{=}&\sum_{y^N,x_1^N,x_2^N}Q(x_1^N||y^{N-1})Q(x_2^N||y^{N-1})P(y^N||x_1^N,x_2^N)\frac{P(y^N||x_1^N,x_2^N)}{\sum_{{x'}_1^N}Q({x'}_1^N||y^{N-1})P(y^N||{x'}_1^N,x_2^N)}\nonumber\\
&\stackrel{(c)}{=}&\sum_{y^N,x_1^N,x_2^N}P(x_1^N,x_2^N,y^N)\frac{P(y^N||x_1^N,x_2^N)}{\sum_{{x'}_1^N}Q({x'}_1^N||y^{N-1})P(y^N||{x'}_1^N,x_2^N)}\nonumber\\
&\stackrel{}{=}&{\bf E} \left[\frac{P(y^N||x_1^N,x_2^N)}{\sum_{{x'}_1^N}Q({x'}_1^N||y^{N-1},x_2^{N})P(y^N||{x'}_1^N,x_2^N)}\right]\nonumber\\
&\stackrel{}{=}&{\bf E} \left[\frac{Q({x}_2^N||y^{N-1})P(y^N||x_1^N,x_2^N)}{Q({x}_2^N||y^{N-1})\sum_{{x'}_1^N}Q({x'}_1^N||y^{N-1},x_2^{N})P(y^N||{x'}_1^N,x_2^N)}\right]\nonumber\\
%&\stackrel{}{=}&{\bf E} \left[\frac{P(y^N||x_1^N,x_2^N)}{\sum_{{x'}_1^N}P(y^N,{x'}_1^N||x_2^N)}\right]\nonumber\\
&\stackrel{}{=}&{\bf E} \left[\frac{Q({x}_2^N||y^{N-1})P(y^N||x_1^N,x_2^N)}{\sum_{{x'}_1^N}P(y^N,{x'}_1^N,x_2^N)}\right]\nonumber\\
&\stackrel{}{=}&{\bf E} \left[\frac{Q({x}_2^N||y^{N-1})P(y^N||x_1^N,x_2^N)}{P(x_2^N,y^N)}\right]\nonumber\\
&\stackrel{}{=}&{\bf E} \left[\frac{P(y^N||x_1^N,x_2^N)}{P(y^N||x_2^N)}\right]\nonumber\\
&\stackrel{(d)}{=}&I(X_1^N\to Y^N||X_2^N)\nonumber\\
\end{eqnarray}
\begin{itemize}
\item[(a)] follows from the assumption given in eq.
(\ref{e_assumption_conditional_indep}).
\item[(b)] follows from the definition of the functional $\mathcal
I(Q;P)$ given in eq. (\ref{e_calI_def}).
\item[(c)] follows from Lemma
\ref{l_joint_causal_condition_decomposition} that states that
$P(x_1^N,x_2^N,y^N)=Q(x_1^N,x_2^N||y^{N-1})P(y^N||x_1^N,x_2^N)$
and the assumption given in
(\ref{e_assumption_conditional_indep}).
\item[(d)] follows from the definition of directed information.
\end{itemize}
\end{proof}

\section{Proof of Lemma \ref{l_directed_mutaul_if_no_feedback}}\label{s_app_lemma_proof_directed_mutual_no_feedback}
Lemma \ref{l_directed_mutaul_if_no_feedback} states that if
\begin{equation}\label{e_assumption_indep}
Q(x_1^N,x_2^N||y^{N-1})=Q(x_1^N)Q(x_2^N),
\end{equation}
 then
\begin{eqnarray}
I(X_1^N;Y^N|X_2^N) = I(X_1^N \to Y^N||X_2^N).
\end{eqnarray}

\begin{proof}
The following sequence of equalities proves the lemma.
\begin{eqnarray}
I(X_1^N;Y^N|X_2^N) &\stackrel{}{=}&  \mathbf E\left[ \log \frac{
P(Y^N,X_1^N|X_2^N)}{ P(Y^N|X_2^N)Q(X_1^N|X_2^N)}\right]\nonumber\\
&\stackrel{(a)}{=}&  \mathbf E\left[ \log \frac{
P(Y^N,X_1^N,X_2^N)}{
P(Y^N,X_2^N)Q(X_1^N|X_2^N)}\right]\nonumber\\
&\stackrel{(b)}{=}&  \mathbf E\left[ \log \frac{
Q(X_1^N,X_2^N||Y^{N-1})P(Y^N||X_1^N,X_2^N)}{
P(Y^N||X_2^N)Q(X_2^N||Y^{N-1})Q(X_1^N|X_2^N)}\right]\nonumber\\
&\stackrel{(c)}{=}&  \mathbf E\left[ \log \frac{
Q(X_1^N)Q(X_2^N)P(Y^N||X_1^N,X_2^N)}{
P(Y^N||X_2^N)Q(X_2^N)Q(X_1^N)}\right]\nonumber\\
&\stackrel{}{=}&  \mathbf E\left[ \log \frac{ P(Y^N||X_1^N,X_2^N)}{
P(Y^N||X_2^N)}\right]\nonumber\\
&\stackrel{}{=}&  I(X_1^N \to Y^N||X_2^N).
\end{eqnarray}
\begin{itemize}
\item[(a)] follows from multiplying the numerator and denominator
by $P(x_2^N)$.
\item[(b)] follows from decomposing the joint distributions $P(y^N,x_1^N,x_2^N)$ and $P(Y^N,X_2^N)$ into causal conditioning distribution by using Lemma
\ref{l_joint_causal_condition_decomposition}.
\item[(c)] follows from the fact that the assumption of the lemma given in
(\ref{e_assumption_indep}) implies that
$Q(X_1^N,X_2^N)=Q(X_1^N)Q(X_1^N)$. This can be obtained by
multiplying both sides of (\ref{e_assumption_indep}) by
$P(y^n||x_1^n,x_2^n)$ and then summing over all $y^n\in \mathcal
Y^n$.
\end{itemize}
\end{proof}
\section{Proof of Lemma
\ref{l_zero_iff}}\label{s_app_proof_of_lemma_zero_iff}

Lemma \ref{l_zero_iff} states that
\begin{equation}\label{eqn:0iff}
\max_{Q(x_1^n||y^{n-1})Q(x_2^n||y^{n-1})} I(X_1^n,X_2^n\to Y^n)=0
 \iff
\max_{Q(x_1^n)Q(x_2^n)} I(X_1^n,X_2^n\to Y^n)=0,
\end{equation}
and each condition also implies that $P(y^n||x_1^n,x_2^n)=P(y^n)$
for all $x_1^n,x_2^n$.

\begin{proof}
Proving the direction $\Longrightarrow$ is trivial since
\begin{equation}
\max_{Q(x_1^n||y^{n-1})Q(x_2^n||y^{n-1})} I(X_1^n,X_2^N\to Y^n)
\geq \max_{Q(x_1^n)Q(x_2^n)} I(X_1^n,X_2^n\to Y^n). \end{equation}
For the other direction, $\Longleftarrow$, we have the assumption
that $I(X_1^n,X_2^n\to Y^n)=0$ for all input distributions
$Q(x_1^n)Q(x_2^n)$, and in particular for the case that $X_1^n$
and $X_2^n$ are uniformly distributed over their alphabets.
Directed information can be written as a Kullback Leibler
divergence, i.e.,
\begin{equation}\label{eqn:directed_as_divergence}
\sum_{x_1^n,x_2^n,y^n}Q(x_1^n)Q(x_1^n)P(y^n||x_1^n,x_2^n)\log
\frac{Q(x_1^n)Q(x_1^n)P(y^n||x_1^n,x_2^n)}{P(y^n)Q(x_1^n)Q(x_2^n)}=0
\end{equation}
and by using the fact that if the Kullback Leibler divergence
$D(P||Q)\triangleq \sum_{x\in \mathcal X} P(x)\log
\frac{P(x)}{Q(x)}$ is zero, then $P(x)=Q(x)$ for all $x\in
\mathcal X$, we conclude that (\ref{eqn:directed_as_divergence})
implies that $P(y^n||x_1^n,x_2^n)=P(y^n)$ for all $x_1^n\in
\mathcal X_1^n$ and all $x_2^n\in \mathcal X_2^n$. It follows that
\begin{eqnarray}
\max_{Q(x_1^n||y^{n-1})Q(x_2^n||y^{n-1})}I(X_1^n,X_2^n{\to}Y^n) & =
& \max_{Q(x_1^n||y^{n-1})Q(x_2^n||y^{n-1})} {\bf E}\left[ \log
\frac{P(Y^n||X_1^n,X_2^n)}{P(Y^n)}\right] \nonumber \\
 & = & \max_{Q(x_1^n||y^{n-1})Q(x_2^n||y^{n-1})} {\bf E}[0]=0.
\end{eqnarray}
\end{proof}

\section{Sup-additivity and Convergence of 2D
regions}\label{s_app_supadditive}
%In this section we define basic operations (summation and
%multiplication by scalar), convergence and sup-additivity of 2D
%regions. Furthermore we show that the limit of a sup-additive
%sequence of 2D regions converges to the union of all regions.

Let $A,B$ be sets in $\mathbb{R}^2$, i.e., $A$ and $B$ are sets of
2D vectors. The sum of two regions is denoted as $A+B$ and defined
as
\begin{equation}
A+B=\{{ {\bf a}+\bf b:\;{\bf a}}\in A, { \bf b}\in B\},
%\{{ v}\in \mathbb{R}^2|\; \exists { a}\in A, { b}\in
%B \;{\text
% s.t. }\;  v=a+b\},
\end{equation}
and multiplication of a set $A$ with a scalar $c$ is defined as
\begin{equation}
cA=\{c {\bf a}: \;{\bf a}\in A\}.
%\{{ v}\in {\mathbb R}^2|\; \exists { a}\in A, \;{\text
% s.t. }\; { v}=c{ a}\}.
\end{equation}

%Let $d({a},{b})$ be a {\it distance} between two points in
%$\mathbb{R}^2$ induced by a norm $||\cdot ||, i.e.,
%$$\;d({a},{b})=||{a}-{b}||$. The distance between a set
%$A$ and a point $ {b}$ is given by,
%\begin{equation}
%d({ b},A)=\inf_a [d(a,b):a \in A]
%\end{equation}

%Next we introduce Hausdorff convergence of a sequence of sets
%which is defined as a set that its distance from the sets of the
%sequence converges to zero. Hausdorff convergence implies the

A sequence $\{A_n\},\; {n=1,2,3,...},$ of 2D regions is said to
{\it converge} to a region $A$, written $A=\lim A_n$ if
\begin{equation}
\lim\sup A_n= \lim\inf A_n=A
\end{equation}
where
\begin{eqnarray}\label{e_def_sup_inf_sets}
\lim\inf A_n&=&\left\{ {\bf a}:{\bf a}=\lim {\bf a}_n, {\bf a}_n\in A_n \right\},\nonumber \\
\lim\sup A_n&=&\left\{ {\bf a}:{\bf a}=\lim {\bf a}_k, {\bf
a}_k\in A_{n_k} \right\},
\end{eqnarray}
and $n_k$ denotes an arbitrary increasing subsequence of the
integers. An alternative and equivalent definition of $\lim\sup$
and $\lim\inf$ is given by  $\lim\sup A_n=\bigcap_{n\geq 1} {\text
{cl}}\left(\bigcup_{m\geq n}A_m\right)$ and $\lim\inf
A_n=\bigcup_{n\geq 1} {\text {cl}}\left(\bigcap_{m\geq n}A_m
\right)$. For more details on convergence of sets in finite
dimensions  see \cite{Salinetti:1979:CSC}.

% The  {\it (Hausdroff) distance}
%between two sets $A$ and $B$, is defined as follows,
%\begin{equation}
%d(A,B)=\max \{\sup[d(a,B:\;a\in A], \sup [d(b,A):\;b\in B]\}.
%\end{equation}
%A sequence $\{A_n\}_{n\geq1}$ of closed subsets is said to {\it
%(Hausdorff) converge} to a closed set $A$ and is denote as
%$A=\lim_{n \to \infty}A_n$ if $\lim_{n\to \infty}d(A_n,A)\to 0$.

%A {\it bounded} set, is a set that all its elements are less then
%infinity in both dimensions. The {\it limit} of a sequence $A_n$
%is set $A$, namely $\lim_{N\to \infty}A_N=A$, is the set of all
%elements ${ a}\in {\mathbb R}^2$ such that we can find a sequence
%${ a}_N\in A_N$ that $\lim_{N\to \infty} ||{ a-a}_N||=0$.

 %{\it Limsup} and {\it liminf} of a
%sequence are defined as,
%\begin{eqnarray}
%\lim\sup_N A_N&=&\bigcap_{k\geq1}\bigcup_{N\geq k} A_N\nonumber\\
%\lim\inf_N A_N&=&\bigcup_{k\geq1}\bigcap_{N\geq k} A_N
%\end{eqnarray}
%If limsup and liminf are equal then we say that the {\it limit} of
%the sequence exists and equal to the limsup/liminf.
% defined as the
Let $\overline A$ denote
\begin{equation}
\overline A=\text {cl}\left( \bigcup_{n\geq1} A_n \right).
\end{equation}

%We say that a set $A$ is {\it filled (towards the origin)} if
%${\bf a}\in A$ implies that $c {\bf a}\in A$ for all $c\leq 1$.

We say that a sequence $\{A_n\}_{n\geq1}$ is {\it bounded} if
$\sup\{||{\bf a}||: {\bf a}\in \overline A\}<\infty$ where
$||\cdot||$ denotes a norm in $\mathbb{R}^2$.

\begin{lemma}\label{l_supadditive_region}
Let $A_n$, $n=1,2,...$, be a bounded sequence of sets  in
$\mathbb{R}^2$ that includes the origin, i.e. $(0,0)$. If $nA_n$
is sup-additive, i.e., for all $n\geq1$ and all $N>n$
\begin{eqnarray}\label{e_supadditive_property}
NA_N\supseteq nA_n+(N-n)A_{N-n}
\end{eqnarray}
then
\begin{eqnarray}
\lim_{n\to \infty} A_n=\overline A.
\end{eqnarray}
\end{lemma}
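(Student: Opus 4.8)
The plan is to establish the two set inclusions $\limsup A_n \subseteq \overline A$ and $\overline A \subseteq \liminf A_n$. Since one always has $\liminf A_n \subseteq \limsup A_n$ (a convergent full sequence is a special case of a convergent subsequence in the definition (\ref{e_def_sup_inf_sets})), these two inclusions together force $\liminf A_n = \limsup A_n = \overline A$, which is precisely the assertion $\lim_{n\to\infty} A_n = \overline A$.

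The inclusion $\limsup A_n \subseteq \overline A$ is immediate. If $\mathbf a \in \limsup A_n$, then $\mathbf a = \lim_k \mathbf a_k$ with $\mathbf a_k \in A_{n_k}$ for some increasing subsequence $n_k$. Each $\mathbf a_k$ lies in $\bigcup_{m\ge 1} A_m$ and hence in $\overline A$; since $\overline A$ is closed, the limit $\mathbf a$ lies in $\overline A$ as well.

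The substance of the lemma is the reverse inclusion $\overline A \subseteq \liminf A_n$, which is the set-valued analogue of Fekete's lemma for superadditive sequences. First I would observe that $\liminf A_n$ is itself closed: given $\mathbf a^{(j)} \to \mathbf a$ with each $\mathbf a^{(j)} \in \liminf A_n$, a diagonalization over the approximating sequences produces a single sequence $\mathbf b_n \in A_n$ with $\mathbf b_n \to \mathbf a$. Consequently it suffices to show $A_n \subseteq \liminf A_m$ for each fixed $n$, because then $\bigcup_n A_n \subseteq \liminf A_m$ and taking closures gives $\overline A \subseteq \liminf A_m$. Fix $\mathbf a \in A_n$. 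Iterating (\ref{e_supadditive_property}) by induction on $k$ gives $kn\mathbf a = n\mathbf a + (k-1)n\mathbf a \in nA_n + (k-1)nA_{(k-1)n} \subseteq knA_{kn}$, so that $\mathbf a \in A_{kn}$ for every $k\ge 1$.

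The key step is to pass from the multiples $kn$ to an arbitrary index $m$. Write $m = kn + r$ with $0 \le r < n$ and $k\ge 1$ (the case $r=0$ being immediate since then $\mathbf a \in A_{kn}=A_m$). Applying (\ref{e_supadditive_property}) with the split $m = (kn)+r$ and using $kn\mathbf a \in knA_{kn}$ together with $\mathbf 0 \in rA_r$ (valid because every $A_r$ contains the origin) yields $kn\mathbf a = kn\mathbf a + \mathbf 0 \in mA_m$, i.e.\ $\tfrac{kn}{m}\mathbf a \in A_m$. Setting $\mathbf a_m = \tfrac{kn}{m}\mathbf a$ for $m\ge n$ and $\mathbf a_m = \mathbf 0$ for $m<n$ produces a sequence with $\mathbf a_m \in A_m$, and since $\tfrac{kn}{m}=\tfrac{kn}{kn+r}\to 1$ as $m\to\infty$ we get $\mathbf a_m \to \mathbf a$, so $\mathbf a \in \liminf A_m$. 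Here the boundedness hypothesis guarantees that $\overline A$ is bounded so all these limits are finite, and the membership of the origin in each $A_r$ is exactly what renders the residual block $rA_r$ harmless under the scaling $\tfrac{kn}{m}\to 1$. I expect this last passage from multiples of $n$ to arbitrary $m$ — the residual-term argument mirroring the non-divisibility step of the classical Fekete lemma — to be the main obstacle; the remaining pieces are a routine closedness/diagonalization argument and direct consequences of the definitions.
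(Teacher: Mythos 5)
Your proposal is correct and follows essentially the same route as the paper's proof: induction on the sup-additivity to get $A_n\subseteq A_{kn}$, then writing $m=kn+r$ and using $(0,0)\in A_r$ to place $\tfrac{kn}{m}\mathbf a$ in $A_m$ with $\tfrac{kn}{m}\to 1$. The only cosmetic difference is that you dispose of the closure by noting $\liminf A_m$ is closed (via diagonalization), whereas the paper works directly with an $\epsilon$-approximant $\mathbf a_\epsilon\in A_n$ of a point of $\overline A$; both are routine and equivalent.
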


\begin{proof}
%First we notice that $\sup [d(b,\overline A):\;b\in A_N]=0$ for
%every $N$, hence we only need to show that $\lim_{N\to \infty}
%\sup [d(a,A_N):\;a\in \overline A]=0$.
From the definitions we have $\overline A \supseteq \lim\sup
A_n\supseteq \lim\inf A_n$. Hence it is enough to show that
$\overline A \subseteq \lim\inf A_n$.

%If $\overline A$ is empty , it follows that $A_n$ is empty for all
%$n\geq1$ and the theorem holds trivially. Now assume that
%$\overline A$ is not empty and
Let $ {\bf a}$ be a point in $\overline A$. Then for every
$\epsilon>0$ there exists an $n$ and a point ${ {\bf a}}_\epsilon$
such that ${ {\bf a}}_\epsilon\in A_n$ and $||{ {\bf a}-{\bf
a}}_\epsilon||\leq \epsilon$. By induction we prove that for any
integer $m\geq2$, ${A_n}\subseteq A_{mn}$, and this implies that
${ {\bf a}_\epsilon}\in A_{mn}$.
 For $m=2$ we choose
$N=2n$ and we get that
\begin{eqnarray}\label{e_ind1}
A_{2n}\supseteq \frac{A_n}{2}+\frac{A_n}{2}\supseteq A_n.
\end{eqnarray}
Now assume that it holds for $m-1$ and let us show that it holds
for $m$.
\begin{eqnarray}\label{e_ind2}
A_{mn}\supseteq \frac{A_n}{m}+\frac{(m-1)A_{(m-1)n}}{m}\supseteq
\frac{A_n}{m}+\frac{(m-1)A_{n}}{m}\supseteq {A_n}.
\end{eqnarray}
Now, for any $N>n$, we can represent $N$ as $mn+j$ where $0\leq
j\leq n-1$, hence
\begin{eqnarray}\label{e_Amn+j}
A_{mn+j}\supseteq \frac{j}{mn+j}A_j+\frac{mn}{mn+j}A_{mn}.
\end{eqnarray}
Because ${ {\bf a}}_\epsilon$ is in $A_n$, then it implies that it
is in $A_{mn}$ too. Following (\ref{e_Amn+j}) and the fact that
$(0,0)\in A_j$ we obtain
\begin{equation}\label{e_amn+j} \frac{mn}{mn+j}{ {\bf
a}}_\epsilon
%=\frac{N-j}{N}{ {\bf a}}_\epsilon
\in A_{mn+j}.\end{equation} For any $\delta>0$ and for any
$N\geq\frac{n}{\delta}$ we conclude the existence of an element
in $A_N$ for which the distance from
 ${\bf a}$ can be upper-bounded by
%{
%{\bf
%a}}_\epsilon$that $(1-\delta)\leq \frac{N-j}{N}$, and follows from
%the fact that $A_N$ is a filled set and from (\ref{e_amn+j}) we
%obtain $(1-\delta){ {\bf a}}_\epsilon\in A_N$. Also we have that
%the distance of $(1-\delta){ {\bf a}}_\epsilon$ from ${ {\bf a}}$
%is bounded by
\begin{equation}
\left \|\frac{mn}{mn+j}{\bf a}_\epsilon-{ {\bf a}}\right
\|=\left\|{ {\bf a}}_\epsilon-{ {\bf a}}-\frac{j}{mn+j}{ {\bf
a}}_\epsilon\right\|\leq ||{ {\bf a}}_\epsilon-{ {\bf
a}}||+\delta||{ {\bf a}}_\epsilon||\leq \epsilon+ \delta||{ {\bf
a}}_\epsilon||.
\end{equation}
%
%\begin{equation}
%||(1-\delta){ {\bf a}}_\epsilon-{ {\bf a}}||=||{ {\bf
%a}}_\epsilon-{ {\bf a}}-\delta{ {\bf a}}_\epsilon||\leq ||{ {\bf
%a}}_\epsilon-{ {\bf a}}||+\delta||{ {\bf a}}_\epsilon||\leq
%\epsilon+ \delta||{ {\bf a}}_\epsilon||.
%\end{equation}
%$\frac{mn}{mn+j}< 1-\delta$ (recall that $N=mn+j$ where $j<n$)
%hence the distance between the element $\frac{mn}{mn+j}{ {\bf
%a}}_\epsilon$ that is
Because $\epsilon$ and $\delta$ are arbitrarily small we can find
a sequence of points ${ {\bf a}}_n\in A_n$ that converges to $
{\bf a}$ and therefore ${\bf a}\in\lim\inf A_n$, which implies
that $\overline A \subseteq \lim\inf A_n$.
\end{proof}
\begin{corollary}\label{c_convexity_limit_sup}
For a sup-additive sequence, as defined  in Lemma
\ref{l_supadditive_region}, the limit is convex.
\end{corollary}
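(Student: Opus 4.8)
The plan is to show that the limit set $\overline A=\lim_{n\to\infty}A_n$, which by Lemma~\ref{l_supadditive_region} equals $\mathrm{cl}\bigl(\bigcup_{n\ge1}A_n\bigr)$, is convex. Since $\overline A$ is closed, it is convex if and only if it contains every convex combination $\lambda\mathbf a+(1-\lambda)\mathbf b$ of points $\mathbf a,\mathbf b\in\overline A$. Moreover, because the map $\lambda\mapsto\lambda\mathbf a+(1-\lambda)\mathbf b$ is continuous and $\overline A$ is closed, it suffices to treat \emph{rational} weights $\lambda=k/m$ and points $\mathbf a_\epsilon,\mathbf b_\epsilon\in\bigcup_{n}A_n$ that approximate $\mathbf a,\mathbf b$ to within $\epsilon$; the general statement then follows by letting $\epsilon\to0$ and invoking density of the rationals in $[0,1]$ together with closedness of $\overline A$.

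First I would reduce to a common index. Suppose $\mathbf a_\epsilon\in A_{n_a}$ and $\mathbf b_\epsilon\in A_{n_b}$. The proof of Lemma~\ref{l_supadditive_region} (see eqs.~(\ref{e_ind1})--(\ref{e_ind2})) establishes the inclusion $A_n\subseteq A_{mn}$ for every $m\ge1$. Applying this with $m=n_b$ and $m=n_a$ respectively places both points in the single set $A_n$ with $n=n_an_b$; thus I may assume from the outset that $\mathbf a_\epsilon,\mathbf b_\epsilon\in A_n$ for a common index $n$.

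The key step is to exploit sup-additivity iteratively. Taking $N=mn$ and peeling off one block of length $n$ at a time in (\ref{e_supadditive_property}) gives the $m$-fold Minkowski inclusion
\begin{equation}
mn\,A_{mn}\supseteq \underbrace{nA_n+nA_n+\cdots+nA_n}_{m\ \text{times}}.
\end{equation}
Since $\mathbf a_\epsilon,\mathbf b_\epsilon\in A_n$, I may select $k$ of these summands to contribute $n\mathbf a_\epsilon$ and the remaining $m-k$ to contribute $n\mathbf b_\epsilon$, obtaining $kn\mathbf a_\epsilon+(m-k)n\mathbf b_\epsilon\in mn\,A_{mn}$, i.e.
\begin{equation}
\tfrac{k}{m}\mathbf a_\epsilon+\tfrac{m-k}{m}\mathbf b_\epsilon\in A_{mn}\subseteq\overline A .
\end{equation}
Letting $\epsilon\to0$ and using that $\overline A$ is closed shows $\tfrac{k}{m}\mathbf a+\tfrac{m-k}{m}\mathbf b\in\overline A$ for every rational $k/m\in[0,1]$, and the density/closedness argument of the first paragraph upgrades this to arbitrary $\lambda\in[0,1]$, which proves convexity.

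I expect the only genuine obstacle to be the bookkeeping in the common-index reduction and the iteration of (\ref{e_supadditive_property}); once the inclusion $A_n\subseteq A_{mn}$ from Lemma~\ref{l_supadditive_region} is in hand, the construction of the desired rational convex combination inside $A_{mn}$ is essentially forced, and the remaining limiting arguments are routine.
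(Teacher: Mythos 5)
Your proof is correct, but it takes a more hands-on route than the paper's. The paper disposes of this corollary in one line: apply the sup-additivity relation (\ref{e_supadditive_property}) with $n=\alpha N$ and let $N\to\infty$, so that $NA_N\supseteq nA_n+(N-n)A_{N-n}$ becomes $\overline A\supseteq\alpha\overline A+(1-\alpha)\overline A$, which is convexity. That version delivers arbitrary weights $\alpha$ in a single step, but it quietly relies on $\alpha N$ being an integer and on Minkowski sums behaving well under the set limit. You instead work from the characterization $\lim A_n=\text{cl}\bigl(\bigcup_{n}A_n\bigr)$ supplied by Lemma \ref{l_supadditive_region}, reduce two arbitrary points to a common index via the inclusion $A_n\subseteq A_{mn}$, unroll (\ref{e_supadditive_property}) into the $m$-fold inclusion $mnA_{mn}\supseteq nA_n+\cdots+nA_n$, and read off rational convex combinations, finishing with density of the rationals and closedness of $\overline A$. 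This costs some extra bookkeeping (rational weights first, two layers of approximation), but it sidesteps the set-limit subtleties entirely and reuses machinery already present in the paper --- your iteration is essentially the computation in (\ref{e_ind1})--(\ref{e_ind2}) --- so it is arguably the more airtight of the two. Both arguments rest on the same underlying fact, namely that iterating (\ref{e_supadditive_property}) manufactures convex combinations, so the difference is one of packaging and rigor rather than of substance.
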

This corollary follows immediately from  the definition of the
sup-additivity property, eq. (\ref{e_supadditive_property}) where
$n=\alpha N$, where $0<\alpha<1$, and $N$ goes to infinity.

 The  {\it (Hausdroff) distance}
between two sets $A$ and $B$, is defined as
\begin{equation}
d(A,B)=\max \{\sup[d({\bf a},B:\;{\bf a}\in A], \sup [d({\bf
b},A):\;{\bf b}\in B]\},
\end{equation}
 where the distance between a set
$A$ and a point $ {\bf b}$ is given by,
\begin{equation}
d({\bf  b},A)=\inf_{\bf a} [||{\bf a}-{\bf b}||:{\bf a} \in A]
\end{equation}

\begin{lemma}\label{l_distance_goes_to_zero}
If $\lim_{n\to \infty} d(A_n,B_n)=0$ then
\begin{eqnarray}\label{e_limsupinf_equality}
\limsup A_n &=&\limsup B_n, \nonumber \\
\liminf A_n &=&\liminf B_n.
\end{eqnarray}
\end{lemma}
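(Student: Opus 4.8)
The plan is to establish all four set inclusions implicit in (\ref{e_limsupinf_equality}) by a single ``point transport'' argument, exploiting the symmetry of the Hausdorff distance, $d(A_n,B_n)=d(B_n,A_n)$. It therefore suffices to prove $\liminf A_n\subseteq\liminf B_n$ and $\limsup A_n\subseteq\limsup B_n$; the reverse inclusions then follow by interchanging the roles of $A_n$ and $B_n$, and the two equalities in (\ref{e_limsupinf_equality}) drop out at once.

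First I would treat the $\liminf$ inclusion. Let ${\bf a}\in\liminf A_n$, so by the definition in (\ref{e_def_sup_inf_sets}) there is a sequence ${\bf a}_n\in A_n$ with ${\bf a}_n\to{\bf a}$. The hypothesis $d(A_n,B_n)\to 0$ means in particular that $\sup\{d({\bf a},B_n):{\bf a}\in A_n\}\to 0$, so for each $n$ we have $d({\bf a}_n,B_n)\le d(A_n,B_n)$. Since the infimum defining $d({\bf a}_n,B_n)$ need not be attained, I would select ${\bf b}_n\in B_n$ with $\|{\bf a}_n-{\bf b}_n\|\le d({\bf a}_n,B_n)+\tfrac1n\le d(A_n,B_n)+\tfrac1n$. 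The triangle inequality then gives $\|{\bf b}_n-{\bf a}\|\le\|{\bf b}_n-{\bf a}_n\|+\|{\bf a}_n-{\bf a}\|\le d(A_n,B_n)+\tfrac1n+\|{\bf a}_n-{\bf a}\|\to 0$, so ${\bf b}_n\to{\bf a}$ with ${\bf b}_n\in B_n$, i.e.\ ${\bf a}\in\liminf B_n$.

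The $\limsup$ inclusion is essentially identical, run along a subsequence. Given ${\bf a}\in\limsup A_n$, I would take an increasing subsequence $n_k$ and points ${\bf a}_{n_k}\in A_{n_k}$ with ${\bf a}_{n_k}\to{\bf a}$, choose ${\bf b}_{n_k}\in B_{n_k}$ within $d(A_{n_k},B_{n_k})+\tfrac1k$ of ${\bf a}_{n_k}$, and note that $d(A_{n_k},B_{n_k})\to 0$ as a subsequence of a null sequence, whence ${\bf b}_{n_k}\to{\bf a}$ and ${\bf a}\in\limsup B_n$.

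This argument is elementary, so the only points requiring care---and the closest thing to an obstacle---are bookkeeping issues rather than conceptual ones. The infimum in the definition of $d({\bf a}_n,B_n)$ may fail to be attained, which is exactly why I pass to an approximate nearest point with the $\tfrac1n$ (respectively $\tfrac1k$) slack; and one must observe that $d(A_n,B_n)\to 0$ forces both $A_n$ and $B_n$ to be nonempty for all sufficiently large $n$ (otherwise the Hausdorff distance would be infinite), so the selection of ${\bf b}_n$ is legitimate along the tail of the sequence, which is all that matters for computing limits. The symmetry of $d(\cdot,\cdot)$ does the rest.
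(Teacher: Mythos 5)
Your proposal is correct and follows essentially the same route as the paper's proof: transporting each point ${\bf a}_{n}$ (or ${\bf a}_{n_k}$) to an approximate nearest point ${\bf b}_n\in B_n$ within $d(A_n,B_n)+\tfrac1n$ and invoking the triangle inequality, with the reverse inclusions obtained from the symmetry of the Hausdorff distance. Your version merely spells out the bookkeeping (non-attainment of the infimum, nonemptiness along the tail) that the paper leaves implicit.
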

\begin{proof}
The proof is straightforward. Given a sequence $\{{\bf a}_{k}\}\in
A_{n_k}$ that converges to ${\bf a}$, we construct a sequence
$\{\bf b_k\}$ by finding a point in $B_{n_k}$ that is at a
distance less than
$\frac{1}{k}+ d({\bf a}_k,B_{n_k})$. %\inf_{}closest point to $\bf a_n$
%in $B_n$ ($\frac{1}{n}-$closest point in $B_n$, means that the
%point is in $B_n$ and it is at a distance less then $\frac{1}{n}$
%from the closest point. ).
Since the distance between the sets goes to zero, $\lim {\bf
b}_k=\lim {\bf a}_k=\bf a$ and from the definitions of limits of
sets, it implies that (\ref{e_limsupinf_equality}) holds.
\end{proof}

\section{Proof of Lemma
\ref{l_supadditive_Rn}}\label{s_app_proof_of_lemma_supadditive_Rn}

Recall the definition of $\underline {\mathcal R}_n$ and
${\mathcal R}_n$  in (\ref{e_def_underline_Rn}) and
(\ref{e_def_Rn}) respectively.% that $\underline {\mathcal R}_n$
%denotes the following region in
%$\mathbb{R}_+^2$ (2D set of nonnegative real numbers):% that is
%induced by an input distribution $Q(x_i^n||z^{n-1}), \; i=1,2$ and
%the channel and it satisfies
%\begin{equation}%\label{e_def_underline_Rn}
%\underline {\mathcal
%R}_n=\bigcup_{Q(w)Q(x_1^n||z_1^{n-1},w)Q(x_2^n||z_2^{n-1},w)}
%\begin{cases}
%R_1 \leq  \min_{s_0} \frac{1}{n}I(X_1^n \to Y^n ||X_2^{n},W,s_0)-\frac{\log |\mathcal S|}{n},\\
%R_1 \leq  \min_{s_0} \frac{1}{n}I(X_2^n \to Y^n ||X_1^{n},W,s_0)-\frac{\log |\mathcal S|}{n},\\
%R_1+R_2 \leq  \min_{s_0} \frac{1}{n}I((X_1,X_2)^n \to
%Y^{n}|W,s_0)-\frac{\log |\mathcal S|}{n}.
%\end{cases}
%\end{equation}

%\begin{lemma}\label{l_subadditive_Rn}
%{\it sup-additivity of $\underline {\cal R}_n$ }
Lemma \ref{l_supadditive_Rn} states that
\begin{equation}\label{e_app_supadditiveRn}
(n+l)\underline {\mathcal R}_{n+l}\supseteq n\underline {\mathcal
R}_{n}+l\underline {\mathcal R}_{l}.
\end{equation}
and for an indecomposable FS-MAC without feedback
$\lim_{n\to\infty} \underline {\mathcal R}_n=\lim_{n\to\infty}
{\mathcal R}_n$.

 {\it Proof of Lemma \ref{l_supadditive_Rn}:} We notice that
if a sequence of sets is sup-additive then the sequence of the
convex hull of the sets is also sup-additive. Hence, it is enough
to prove the sup-additivity of the sequence $\underline {\mathcal
R}_n$ without the appearance of the random variable $W$ that its
role is to convexify the regions.

The set ${\underline {\mathcal
R}_{n}}$ is defined %in (\ref{e_def_underline_Rn})
 by three
expressions that involve directed information. Because each
expression is sup-additive the whole set is sup-additive. We prove
that the first expression, i.e. $\min_{s_0} I(X_1^n \to Y^n
||X_2^{n},s_0)-\log |\mathcal S|$ is sup-additive (the proofs of
the supper-additivity of the other expressions are similar and
therefore omitted).

\begin{eqnarray}
\min_{s_0} \lefteqn{ I(X_1^{n+l} \rightarrow Y^{n+l}||X_2^{n+l},s_0)} \nonumber \\
& \stackrel{(a)}{\geq} &  \min_{s_0} \sum_{i=1}^{n} I(Y_i;X_1^i|Y^{i-1},X_2^i,s_0)+ \min_{s_0} \sum_{j=n+1}^{n+l}I(Y_j;X_1^j|Y^{j-1},X_2^{j},s_0)    \nonumber \\
%& \geq & \min_{s_0} \left[ \sum_{i=1}^{n} I(Y_i;X^i|Y^{i-1},s_0)+ \sum_{j=n+1}^{n+l}I(Y_j;X^j|Y^{j-1},s_0)   \right] \nonumber \\
& \stackrel{(b)}{\geq} & I(X_1^n \to Y^n
||X_2^{n},s_0) +   \sum_{j=n+1}^{n+l}I(Y_j;X_{1,n+1}^j|Y^{j-1},X_2^{j},s_0)  \nonumber \\
& \stackrel{(c)}{\geq} & I(X_1^n \to Y^n
||X_2^{n},s_0) +  \sum_{j=n+1}^{n+l}I(Y_j;X_{1,n+1}^j|Y^{j-1},X_2^{j},S_n,s_0)-\log |\mathcal{S}|  \nonumber \\
%& \stackrel{}{\geq} & n\underline C_n +  \min_{s_0} \sum_{j=n+1}^{n+l}I(Y_j;X_{n+1}^j|Y_{n+1}^{j-1},{\bf y_1},S_n)+\log |\mathcal{S}|  \nonumber \\
& \stackrel{}{=} &\min_{s_0} I(X_1^n \to Y^n
||X_2^{n},s_0)+  \min_{s_0} \sum_{s_n} P(s_n|s_0) \sum_{j=n+1}^{n+l}I(Y_j;X_{1,n+1}^j|Y^{j-1},X_{2,n+1}^{j},s_n)-\log |\mathcal{S}|  \nonumber \\
& \stackrel{}{\geq} &\min_{s_0} I(X_1^n \to Y^n
||X_2^{n},s_0) +  \min_{s_n} \sum_{j=n+1}^{n+l}I(Y_j;X_{1,n+1}^j|Y_{n+1}^{j-1},X_{2,n+1}^{j},s_n)-\log |\mathcal{S}|  \nonumber \\
& \stackrel{(d)}{=} &\min_{s_0} I(X_1^n \to Y^n ||X_2^{n},s_0)+
\min_{s_0} I(X_1^l \to Y^l ||X_2^{l},s_0)-\log |\mathcal{S}|.
\end{eqnarray}
%\begin{eqnarray}
%\lefteqn{ \min_{s_0} I(X_1^{n+l} \rightarrow Y^{n+l}||X_2^{n+l},s_0)} \nonumber \\
%& \stackrel{(a)}{=} &  \min_{s_0} \left[ \sum_{i=1}^{n} I(Y_i;X_1^i|Y^{i-1},X_2^i,s_0)+ \sum_{j=n+1}^{n+l}I(Y_j;X_1^j|Y^{j-1},X_2^{j},s_0)   \right] \nonumber \\
%%& \geq & \min_{s_0} \left[ \sum_{i=1}^{n} I(Y_i;X^i|Y^{i-1},s_0)+ \sum_{j=n+1}^{n+l}I(Y_j;X^j|Y^{j-1},s_0)   \right] \nonumber \\
%& \stackrel{(b)}{\geq} &\min_{s_0} I(X_1^n \to Y^n
%||X_2^{n},s_0) +  \min_{s_0} \sum_{j=n+1}^{n+l}I(Y_j;X_{1,n+1}^j|Y^{j-1},X_2^{j},s_0)  \nonumber \\
%& \stackrel{(c)}{\geq} &\min_{s_0} I(X_1^n \to Y^n
%||X_2^{n},s_0) +  \min_{s_0} \sum_{j=n+1}^{n+l}I(Y_j;X_{1,n+1}^j|Y^{j-1},X_2^{j},S_n,s_0)-\log |\mathcal{S}|  \nonumber \\
%%& \stackrel{}{\geq} & n\underline C_n +  \min_{s_0} \sum_{j=n+1}^{n+l}I(Y_j;X_{n+1}^j|Y_{n+1}^{j-1},{\bf y_1},S_n)+\log |\mathcal{S}|  \nonumber \\
%& \stackrel{}{=} &\min_{s_0} I(X_1^n \to Y^n
%||X_2^{n},s_0)+  \min_{s_0} \sum_{s_n} P(s_n|s_0) \sum_{j=n+1}^{n+l}I(Y_j;X_{1,n+1}^j|Y^{j-1},X_{2,n+1}^{j},s_n)-\log |\mathcal{S}|  \nonumber \\
%& \stackrel{}{\geq} &\min_{s_0} I(X_1^n \to Y^n
%||X_2^{n},s_0) +  \min_{s_n} \sum_{j=n+1}^{n+l}I(Y_j;X_{1,n+1}^j|Y_{n+1}^{j-1},X_{2,n+1}^{j},s_n)-\log |\mathcal{S}|  \nonumber \\
%& \stackrel{(d)}{=} &\min_{s_0} I(X_1^n \to Y^n ||X_2^{n},s_0)+
%\min_{s_0} I(X_1^l \to Y^l ||X_2^{l},s_0)-\log |\mathcal{S}|.
%\end{eqnarray}

\begin{itemize}
\item[(a)] follows the definition of the directed information the fact that $\min_s[f(s)+g(s)]\geq \min_s f(s)+\min_s
g(s)$,
\item[(b)] follows the fact that $I(X;Y,Z)\geq
I(X;Y)$,
\item[(c)] follows Lemma \ref{l_diff_cond_S} that states that conditioning by $S_n$ can differ by at most $\log |\mathcal S|$,
\item [(d)] follows from the stationarity of the channel.
\end{itemize}

According to Lemma \ref{l_supadditive_region}, since the sequence
$\{\underline {\mathcal R}_n\}$ is sup-additive the limit exists. In
the rest of the proof we show that $\lim_{n\to\infty} \underline
{\mathcal R}_n=\lim_{n\to\infty} {\mathcal R}_n$. The terms of the
region $\underline {\mathcal R}_n$ have an auxiliary random variable
$W$ whose only role is to convexify the region. Let us denote
$\underline {\mathcal R}_n^o$ the same region as $\underline
{\mathcal R}_n$ where $W$ is restricted to be null. We show first
that restricting $W$ to being null does not influence the limit,
i.e., $ \lim_{n\to \infty} \underline{\mathcal R}_n=\lim_{n\to
\infty} \underline{\mathcal R}_n^o$. In the first half of the proof
we showed that $\underline {\mathcal R}_n^o$ is sub-additive. Using
this fact, we show now, that any convex combination with rational
weights $(\frac{l}{k},\frac{k-l}{k})$ of any two points from
$\underline {\mathcal R}_n^o$ is in $\underline {\mathcal
R}_{kn}^o$.
\begin{eqnarray}\label{e_inclusions}
\underline {\mathcal R}_{kn}^o\supseteq \frac{l}{k}\underline
{\mathcal R}_{ln}^o+\frac{k-l}{k}\underline {\mathcal
R}_{(k-l)n}^o\supseteq \frac{l}{k}\underline {\mathcal
R}_{n}^o+\frac{k-l}{k}\underline {\mathcal R}_{n}^o
\end{eqnarray}
The left and the right inclusions in (\ref{e_inclusions}) are due to
the sup-additivity of $\underline {\mathcal R}_{n}^o$. The left
inclusion is from the definition of the sup-additivity and the right
is due to the fact that sup-additivity of $\underline {\mathcal
R}_{n}^o$ also implies that for any two positive integers $m,n$,
$\underline {\mathcal R}_{mn}^o \supseteq \underline {\mathcal
R}_{n}^o$ (This is shown by induction in
(\ref{e_ind1},\ref{e_ind2})). From (\ref{e_inclusions}) we can
deduce that for any $\epsilon>0$ we can find a $k(\epsilon)$ such
that $\underline {\mathcal R}_{n} \subseteq \underline {\mathcal
R}_{nk}^o+\epsilon$. This fact, together with the trivial fact that
$ \underline {\mathcal R}_{n}\supseteq \underline {\mathcal
R}_{n}^o$, and the fact that the limits of both sequences exist,
allow us to deduce that the limits are the same, i.e., $ \lim_{n\to
\infty} \underline{\mathcal R}_n=\lim_{n\to \infty}
\underline{\mathcal R}_n^o$.
%
%for any point $a_n\in \underline {\mathcal R}_{n}\}$ we can find a
%$k$
%
%sequence of points $\{a_n|a_n\in \underline {\mathcal R}_{n}\}$ we
%can construct a sequence of points $\{b_n|b_n\in \underline
%{\mathcal R}_{k_n}^0\}$ such that $\lim |a_n-b_n|=0$ and therefore
%$\liminf \underline {\mathcal R}_{n}\subseteq \limsup \underline
%{\mathcal R}_{n}^o$. Trivially, we have $ \underline {\mathcal
%R}_{n}\supseteq \underline {\mathcal R}_{n}^o$ and since the
%limits of $\underline {\mathcal R}_{n}$ and $\underline {\mathcal
%R}_{n}^o$ exits we conclude that are equal.
%

%from corollary
%\ref{c_convexity_limit_sup} it follows that the limit of a
%sup-additive sequence is always convex and therefore the random
%variable $W$ does not influence the limit.

We conclude the proof by showing that, for any input distribution
$Q(x_1^n)Q(x_2^n)$, the difference between the terms in the
inequalities of $\{\underline {\mathcal R}_n^o\}$ and $\{
{\mathcal R}_n\}$ goes to zero as $n\to \infty$, hence the
distance between the sets of the sequences goes to zero as $n\to
\infty$ and, by Lemma \ref{l_distance_goes_to_zero}, the limits of
the sequences are the same.
\begin{eqnarray}
\lefteqn{\lim_{n\to\infty} \frac{1}{n}\left |I(X_1^n \to Y^n
||X_2^{n})-\min_{s_0} I(X_1^n \to Y^n
||X_2^{n},s_0)+\log |\mathcal S|\right|}\nonumber \\
&\stackrel{(a)}{\leq}& \lim_{n\to\infty} \frac{1}{n}\left|I(X_1^n
\to Y^n ||X_2^{n},S_0)-\min_{s_0} I(X_1^n \to Y^n
||X_2^{n},s_0)+\log
|\mathcal S|\right|+\log |\mathcal S| \nonumber \\
&\stackrel{}{=}& \lim_{n\to\infty} \frac{1}{n}\left[I(X_1^n \to
Y^n ||X_2^{n},S_0)- \min_{s_0}
I(X_1^n \to Y^n ||X_2^{n},s_0))\right]\nonumber \\
&\stackrel{(b)}{\leq}& \lim_{n\to\infty}
\frac{1}{n}\left[\max_{s_0}I(X_1^n \to Y^n ||X_2^{n},s_0)-
\min_{s_0}
I(X_1^n \to Y^n ||X_2^{n},s_0))\right]\nonumber \\
&\stackrel{(c)}{=}&0
\end{eqnarray}
\begin{itemize}
\item[(a)] follows from Lemma \ref{l_diff_cond_S} and the triangle
inequality.
\item [(b)] follows from the fact that
$\max_{s_0}I(X_1^n \to Y^n ||X_2^{n},s_0)\geq I(X_1^n \to Y^n
||X_2^{n},S_0)$. %Equality ( %\begin{equation}
\item[(c)] follows from Lemma \ref{l_indecomposible} that states this equality for indecomposable FS-MAC without
feedback (recall also that directed information equals mutual
information in the absence of feedback).
%\end{equation}
\end{itemize}
\hfill \QED

\section{Proof of Theorem \ref{t_MLB}} \label{s_app_proof_MLB}
\begin{eqnarray}\label{e_pem}
\mathbf E[P_{e1}]& = & \sum_{y^N}\sum_{x_1^N,x_2^N}
P(x_1^N,x_2^N,y^N)P[error1|m_1,m_2,x_1^N,x_2^N,y^N]
\nonumber \\
& = & \sum_{y^N}\sum_{x_1^N,x_2^N}
Q(x_1^N||z_1^{N-1})Q(x_2^N||z^{N-1})P(y^N||x_1^N,x_2^N)P[error1|m_1,m_2,x^N,y^N],
\end{eqnarray}
where $P[error1|m_1,m_2,x^N,y^N]$ is the error probability of
decoding $m_1$ given that $m_2$ is decoded correctly. Throughout
the remainder of the proof we fix the message $m_1,m_2$. For a
given tuple $(m_1,m_2,x_1^N,x_2^N,y^N)$ define the event
$A_{m_1'}$, for each $m_1'\neq m_1$, as the event that the message
$m_1'$ is selected in such a way that
$P(y^N|m_1',m_2)>P(y^N|m,m_2)$ which is the same as
$P(y^N||{x'}_1^N,x_2^N)> P(y^N||x_1^N,x_2^N)$ where ${x'}_1^N$ is
a shorthand notation for $x_1^N(m_1',z^{N-1}(y^{N-1}))$ and
$x_i^N$ is a shorthand notation for
$x_l^N(m_l,z_l^{N-1}(y^{N-1}))$ for $l=1,2$. From the definition
of $A_{m_1'}$ we have
\begin{eqnarray}\label{e_pAm}
P(A_{m_1'}|m_1,m_2,x_1^N,x_2^N,y^N)&=&\sum_{x'^N}
Q({x'}_1^{N}||z^{N-1}) \cdot {\bf 1}
[P(y^N||{x'}_1^N,x_2^N)> P(y^N||x_1^N,x_2^N )]\nonumber \\
& \leq & \sum_{x'^N}  Q ({x'}_1^N||z^{N-1}) \left[\frac{
P(y^N||{x'}_1^N,x_2^N)}{P(y^N||x_1^N,x_2^N)}\right]^s; \qquad
\text{any } s>0
\end{eqnarray}
where ${\bf 1}(x)$ denotes the indicator function.
\begin{eqnarray}\label{e_MLB2}
P[error1|m_1,m_2,x_1^N,x_2^N,y^N] & = & P(\bigcup_{m'\neq m}
A_{m_1'}|m_1,m_2,x_1^N,x_2^N,y^N)
\nonumber \\
& \leq & \min  \left\{ \sum_{m_1'\neq m} P(A_{m_1'}|m_1,m_2,x_1^N,x_2^N,y^N), 1 \right\}\nonumber\\
& \leq & \left[ \sum_{m_1'\neq m_1}
P(A_{m_1'}|m_1,m_2,x_1^N,x_2^N,y^N) \right] ^\rho;
\qquad \text{any } 0 \leq \rho \leq 1 \nonumber\\
& \leq & \left[ (M_1-1) \sum_{{x'}_1^N} Q ({x'}_1^N||z^{N-1})
\left[\frac{
P(y^N||{x'}_1^N,x_2^N)}{P(y^N||{x}_1^N,x_2^N)}\right]^s\right]
^\rho, \qquad 0 \leq \rho \leq 1, s>0,\nonumber \\
\end{eqnarray}
where the last inequality is due to inequality (\ref{e_pAm}). By
substituting inequality (\ref{e_MLB2}) in eq. (\ref{e_pem}) we
obtain:
\begin{eqnarray}
{\bf E}[P_{e1}] \leq
(M-1)^{\rho}\sum_{y^N,x_2^N}Q(x_2^N||z^{N-1})\left[ \sum_{x^N}
Q(x_1^N||z_1^{N-1})P(y^N||x_1^N,x_2^N)^{1-s\rho}\right]\left[
\sum_{{x'}_1^N} Q
({x'}_2^N||z^{N-1})P(y^N||{x'}_1^N,x_2^N)^{s}\right]^\rho\nonumber
\end{eqnarray}
By substituting $s=1/(1+\rho)$, and recognizing that $x'$ is a
dummy variable of summation, we obtain eq. (\ref{e_MLB_a}) and
complete the proof of the bound on ${\bf E}[P_{e1}]$.

The proof for bounding  ${\bf E}[P_{e2}]$ is identical to the
proof that is given here for ${\bf E}[P_{e1}]$, up to exchanging
the indices. For ${\bf E}[P_{e3}]$ the upper bound is identical to
the case of the point-to-point channel with an input
$x_1^N,x_2^N$, as proven in \cite{Permuter06_feedback_submit}
where the union bound which appears here in eq. (\ref{e_MLB2})
consists of $(M_1-1)(M_2-1)$ terms.
%\end{proof}
\hfill \QED

\bibliographystyle{IEEEtran}
%\bibliography{../GE-channel/ref}
\bibliography{fs_mac_feedback_transaction}

\end{document}